\documentclass{article}
\usepackage{microtype}
\usepackage{graphicx}
\usepackage{subcaption}
\usepackage{booktabs} 
\usepackage{xcolor}
\usepackage{array}
\usepackage[export]{adjustbox}
\usepackage{hyperref}

\usepackage[preprint]{icml2026}

\usepackage{amsmath}
\usepackage{amssymb}
\usepackage{mathtools}
\usepackage{amsthm}
\usepackage{thmtools, thm-restate}
\usepackage{tikz}
\usepackage{booktabs}
\usetikzlibrary{arrows.meta, shapes.geometric, positioning}

\usepackage[capitalize,noabbrev]{cleveref}

\theoremstyle{plain}
\newtheorem{theorem}{Theorem}[section]
\newtheorem{proposition}[theorem]{Proposition}

\theoremstyle{definition}
\newtheorem{define}[theorem]{Definition}
\newtheorem{assumption}[theorem]{Assumption}
\newtheorem{cond}{Condition}

\theoremstyle{remark}
\newtheorem{example}[theorem]{Example}
\newtheorem{remark}[theorem]{Remark}
\newtheorem{discussion}[theorem]{Discussion}

\usepackage[textsize=tiny]{todonotes}

\usepackage{graphicx} 
\usepackage{amsmath}  
\usepackage{amssymb}  
\usepackage{natbib}
\usepackage{hyperref}       
\usepackage{cleveref}
\usepackage{tikz}
\usepackage{wrapfig}
\usepackage{caption}
\usepackage{mathtools} 
\usepackage{bbm}
\usepackage{boxedminipage}
\usepackage{mdframed}
\usepackage{thmtools, thm-restate}

\newcommand{\bX}{\mathbf{X}}
\newcommand{\bY}{\mathbf{Y}}
\newcommand{\bZ}{\mathbf{Z}}
\newcommand{\bV}{\mathbf{V}}
\newcommand{\bS}{\mathbf{S}}

\newcommand{\bbR}{\mathbb{R}}
\newcommand{\bbE}{\mathbb{E}}

\newcommand{\bbS}{\mathbb{S}}
\newcommand{\bbPs}{\mathbb{P}_{s\mid xy(t)t}}
\newcommand{\bbPtxy}{\mathbb{P}_{t\mid xy(t)}}
\newcommand{\bbPtx}{\mathbb{P}_{t\mid x}}

\newcommand{\bbPxy}{\mathbb{P}_{xy(t)}}

\newcommand{\Ptx}{P_{t\mid x}}
\newcommand{\Pxy}{P_{xy(t)}}
\newcommand{\Pyx}{P_{y\mid x}}

\newcommand{\Qtx}{Q_{t\mid x}}
\newcommand{\Qxy}{Q_{xy(t)}}
\newcommand{\Qyx}{Q_{y\mid x}}

\newcommand{\Psxyt}{P_{s\mid xyt}}
\newcommand{\Qsxyt}{Q_{s\mid xyt}}

\newcommand{\cA}{\mathcal{A}}
\newcommand{\cB}{\mathcal{B}}

\newcommand{\cD}{\mathcal{D}}

\newcommand{\cP}{\mathcal{P}}
\newcommand{\cQ}{\mathcal{Q}}
\newcommand{\cG}{\mathcal{G}}
\newcommand{\circled}[1]{\textcircled{\scriptsize #1}}

\newcommand{\boundrCD}{\left[r\frac{cd}{1-c}, r\frac{1-c}{cd}\right]}
\newcommand{\boundrhCD}{\left[h(x)r\frac{cd}{1-c}, h(x)r\frac{1-c}{cd}\right]}

\newcommand\indep{\protect\mathpalette{\protect\independenT}{\perp}}
\def\independenT#1#2{\mathrel{\rlap{$#1#2$}\mkern2mu{#1#2}}}
\newcommand\dependent{\!\perp\!\!\!\!\not\perp\!}

\definecolor{revisionblue}{RGB}{0, 0, 200}
\definecolor{revisionorange}{RGB}{220, 120, 100}
\newcommand{\rev}[1]{\textcolor{black}{#1}}
\newcommand{\camerarev}[1]{\textcolor{black}{#1}}

\definecolor{revisionFilipviolet}{RGB}{238,130,238}

\newcommand{\myyes}{\includegraphics[width=0.02\textwidth]{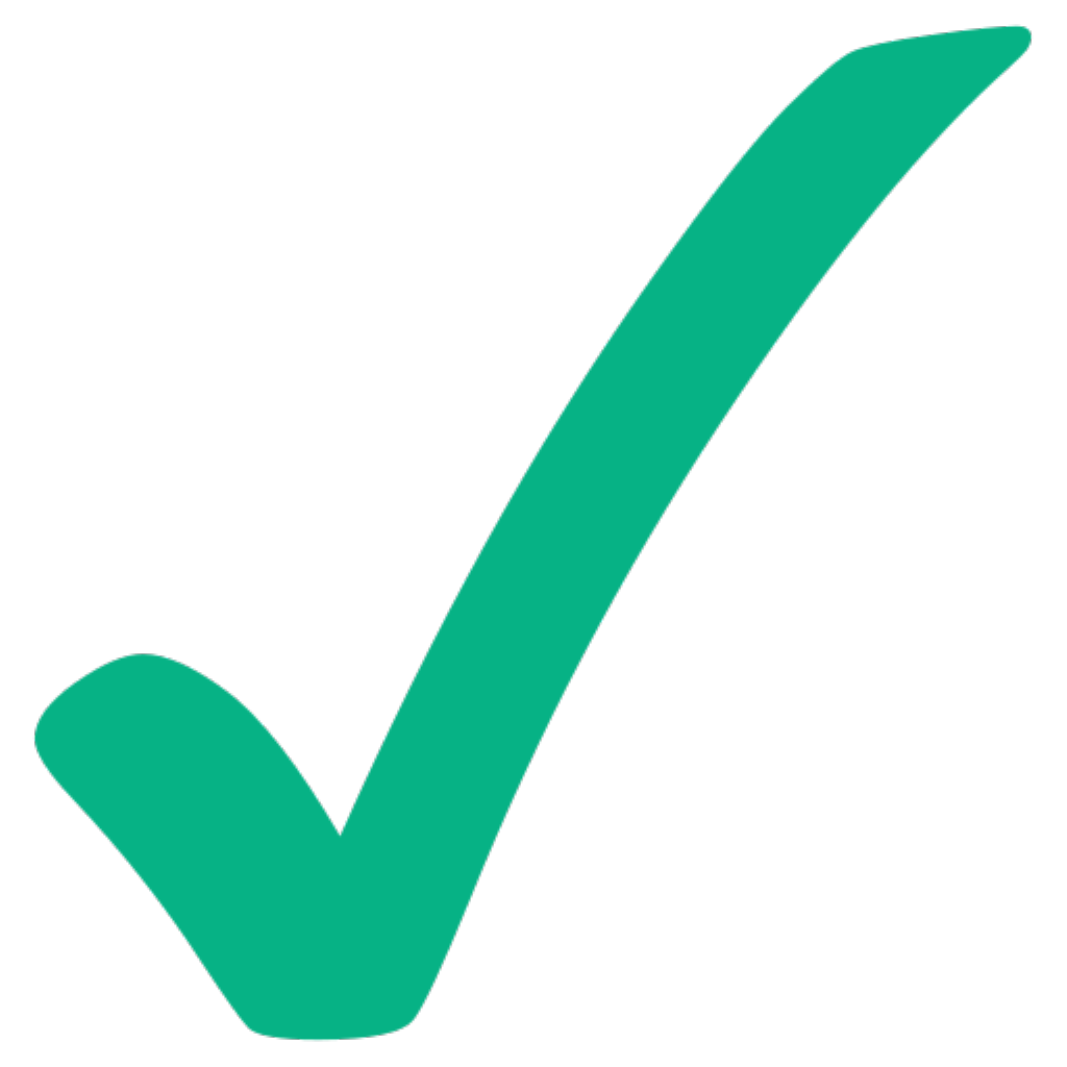}}
\newcommand{\myno}{\includegraphics[width=0.02\textwidth]{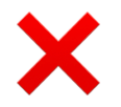}}
\newcolumntype{C}[1]{>{\centering\arraybackslash}m{#1}}

\icmltitlerunning{Towards a holistic understanding of Selection Bias 
  for Causal Effect Identification}

\AtBeginDocument{%
  \setlength{\abovedisplayskip}{6pt}
  \setlength{\belowdisplayskip}{6pt}
  \setlength{\abovedisplayshortskip}{2pt}
  \setlength{\belowdisplayshortskip}{2pt}
  \setlength{\jot}{2pt}
  \setlength{\intextsep}{8pt plus 2.0pt minus 2.0pt}
  \setlength{\intextsep}{8pt plus 2.0pt minus 2.0pt}
  \setlength{\abovecaptionskip}{8pt}
  \setlength{\belowcaptionskip}{5pt}
  \setlength{\parskip}{3pt plus 1.0pt}
}

\begin{document}

\twocolumn[
  \icmltitle{Towards a Holistic Understanding of Selection Bias \\ 
  for Causal Effect Identification}



  \icmlsetsymbol{equal}{*}

  \begin{icmlauthorlist}
    \icmlauthor{Yiwen Qiu}{ista}
    \icmlauthor{Filip Kovačević}{ista}
    \icmlauthor{Shimeng Huang}{ista}
    \icmlauthor{Peter Spirtes}{cmu}
    \icmlauthor{Francesco Locatello}{ista}
  \end{icmlauthorlist}

  \icmlaffiliation{ista}{Institute of Science and Technology Austria (ISTA)}
  \icmlaffiliation{cmu}{Carnegie Mellon University}

  \icmlcorrespondingauthor{Yiwen Qiu}{yiwen.qiu@ista.ac.at}

  \icmlkeywords{Machine Learning, ICML}

  \vskip 0.3in
]



\printAffiliationsAndNotice{}  

\begin{abstract}
Selection bias is pervasive in observational studies. For example, large scale biobanks data {can} exhibit ``healthy volunteer bias'' {when} respondents are healthier and of higher socio-economic status than the population they are meant to represent~\cite{swansonUKBiobankSelection2012,vanaltenReweightingUKBiobank2024}. Recovering causal effects from such sub-population is an important problem in causal inference, as 
estimating average treatment effects (ATE)  from selected populations can result in a severely biased estimate of the  ATE from the whole population.
In this paper, we investigate the identifiability of the ATE under selection bias.
We provide \emph{necessary and sufficient conditions} for ATE identifiability, leveraging assumptions on probability classes to characterize propensity score and selection probability, \camerarev{which are weaker assumptions than those required by existing general identification frameworks} . Compared to previous works, our results extend existing graphical identifiability criteria and offer a more comprehensive understanding of causal effect identification \emph{with strictly weaker conditions} in the presence of selection bias. 

\end{abstract}

\newcommand{\selY}{\begin{tikzpicture}[
    node distance=1.2cm,
    thick,
    obs_node/.style={draw, circle, fill=white, minimum size=1cm},
    selection_node/.style={draw, rectangle, fill=gray!20, minimum size=1cm}
]

    \node[obs_node] (X) {$X$};
    \node[obs_node] (T) [below left=of X] {$T$};
    \node[obs_node] (Y) [below right=of X] {$Y$};
    \node[selection_node] (S) [above=0.8cm of Y] {$S$};

    \draw[-{Stealth[scale=1.2]}] (X) -- (T);
    \draw[-{Stealth[scale=1.2]}] (T) -- (Y);
    \draw[-{Stealth[scale=1.2]}] (X) -- (Y);
    
    \draw[-{Stealth[scale=1.2]}] (Y) -- (S);

\end{tikzpicture}
}

\newcommand{\selT}{\begin{tikzpicture}[
    node distance=1.2cm,
    thick,
    obs_node/.style={draw, circle, fill=white, minimum size=1cm},
    selection_node/.style={draw, rectangle, fill=gray!20, minimum size=1cm}
]

    \node[obs_node] (X) {$X$};
    \node[obs_node] (T) [below left=of X] {$T$};
    \node[obs_node] (Y) [below right=of X] {$Y$};
    \node[selection_node] (S) [above=0.8cm of T] {$S$};

    \draw[-{Stealth[scale=1.2]}] (X) -- (T);
    \draw[-{Stealth[scale=1.2]}] (T) -- (Y);
    \draw[-{Stealth[scale=1.2]}] (X) -- (Y);
    
    \draw[-{Stealth[scale=1.2]}] (T) -- (S);

\end{tikzpicture}
}

\newcommand{\selX}{\begin{tikzpicture}[
    node distance=1.2cm,
    thick,
    obs_node/.style={draw, circle, fill=white, minimum size=1cm},
    selection_node/.style={draw, rectangle, fill=gray!20, minimum size=1cm}
]

    \node[obs_node] (X) {$X$};
    \node[obs_node] (T) [below left=of X] {$T$};
    \node[obs_node] (Y) [below right=of X] {$Y$};
    \node[selection_node] (S) [above=0.5cm of Y] {$S$};

    \draw[-{Stealth[scale=1.2]}] (X) -- (T);
    \draw[-{Stealth[scale=1.2]}] (T) -- (Y);
    \draw[-{Stealth[scale=1.2]}] (X) -- (Y);
    
    \draw[-{Stealth[scale=1.2]}] (X) -- (S);

\end{tikzpicture}
}

\newcommand{\selXY}{\begin{tikzpicture}[
    node distance=1.2cm,
    thick,
    obs_node/.style={draw, circle, fill=white, minimum size=1cm},
    selection_node/.style={draw, rectangle, fill=gray!20, minimum size=1cm}
]

    \node[obs_node] (X) {$X$};
    \node[obs_node] (T) [below left=of X] {$T$};
    \node[obs_node] (Y) [below right=of X] {$Y$};
    \node[selection_node] (S) [above=0.5cm of Y] {$S$};

    \draw[-{Stealth[scale=1.2]}] (X) -- (T);
    \draw[-{Stealth[scale=1.2]}] (T) -- (Y);
    \draw[-{Stealth[scale=1.2]}] (X) -- (Y);
    
    \draw[-{Stealth[scale=1.2]}] (X) -- (S);
    \draw[-{Stealth[scale=1.2]}] (Y) -- (S);

\end{tikzpicture}
}

\newcommand{\selTY}{\begin{tikzpicture}[
    node distance=1.2cm,
    thick,
    obs_node/.style={draw, circle, fill=white, minimum size=1cm},
    selection_node/.style={draw, rectangle, fill=gray!20, minimum size=1cm}
]

    \node[obs_node] (X) {$X$};
    \node[obs_node] (T) [below left=of X] {$T$};
    \node[obs_node] (Y) [below right=of X] {$Y$};
    \node[selection_node] (S) [above=0.8cm of T] {$S$};

    \draw[-{Stealth[scale=1.2]}] (X) -- (T);
    \draw[-{Stealth[scale=1.2]}] (T) -- (Y);
    \draw[-{Stealth[scale=1.2]}] (X) -- (Y);
    
    \draw[-{Stealth[scale=1.2]}] (T) -- (S);
    \draw[-{Stealth[scale=1.2]}] (Y) -- (S);

\end{tikzpicture}
}

\newcommand{\selTX}{\begin{tikzpicture}[
    node distance=1.2cm,
    thick,
    obs_node/.style={draw, circle, fill=white, minimum size=1cm},
    selection_node/.style={draw, rectangle, fill=gray!20, minimum size=1cm}
]

    \node[obs_node] (X) {$X$};
    \node[obs_node] (T) [below left=of X] {$T$};
    \node[obs_node] (Y) [below right=of X] {$Y$};
    \node[selection_node] (S) [above=0.5cm of T] {$S$};

    \draw[-{Stealth[scale=1.2]}] (X) -- (T);
    \draw[-{Stealth[scale=1.2]}] (T) -- (Y);
    \draw[-{Stealth[scale=1.2]}] (X) -- (Y);
    
    \draw[-{Stealth[scale=1.2]}] (T) -- (S);
    \draw[-{Stealth[scale=1.2]}] (X) -- (S);

\end{tikzpicture}
}

\section{Introduction}
In many real-world scenarios, the data we collect is often subject to selection bias, which arises from the preferential inclusion of certain data points that are dependent on some of the observed variables~\cite{heckman1979sample}. This bias poses a significant challenge in statistical inference. For instance, in medical studies, patients who volunteer for clinical trials may differ systematically from those who do not, leading to biased estimates of treatment effects~\cite{hernan2004structural}. In social sciences, survey respondents may not represent the broader population due to non-random participation, affecting the validity of conclusions drawn from such data~\cite{stone2024population}. More concretely, we can consider the following Example~\ref{ex:1} (illustrated in Figure~\ref{fig:illustration}).

\looseness=-1\begin{example}[Biased ATE estimation]\label{ex:1}
The government wants to estimate the ATE of a physical activity subsidy $T$ (e.g. opening a free community fitness center) on  overall cardiovascular health ($Y$) for the entire country. Data is collected via a voluntary digital health survey. Because the survey is intensive and digital-only, the respondents (the subpopulation $S=1$) are primarily individuals with higher socio-economic status (SES) and higher education levels. We use $Y(1)$ to denote the potential outcome of $Y$ if one receives the subsidy ($T=1$), and $Y(0)$ otherwise. 
\end{example}

In this case, there is a discrepancy between the ATE in the selected subpopulation ($\mathrm{ATE}^\mathrm{obs}$) and that in the whole population ($\mathrm{ATE}^\mathrm{all}$). For high-SES individuals, the subsidy might have a smaller marginal effect. They likely already exercise or have the means to stay healthy. Adding a free fitness center might only result in a small improvement in cardiovascular health. In contrast, for low-SES individuals, the subsidy could have a more substantial impact, as they might have fewer opportunities for physical activity due to cost or neighborhood safety. 

\begin{figure}[htbp]
    \centering
    \includegraphics[width=0.4\textwidth]{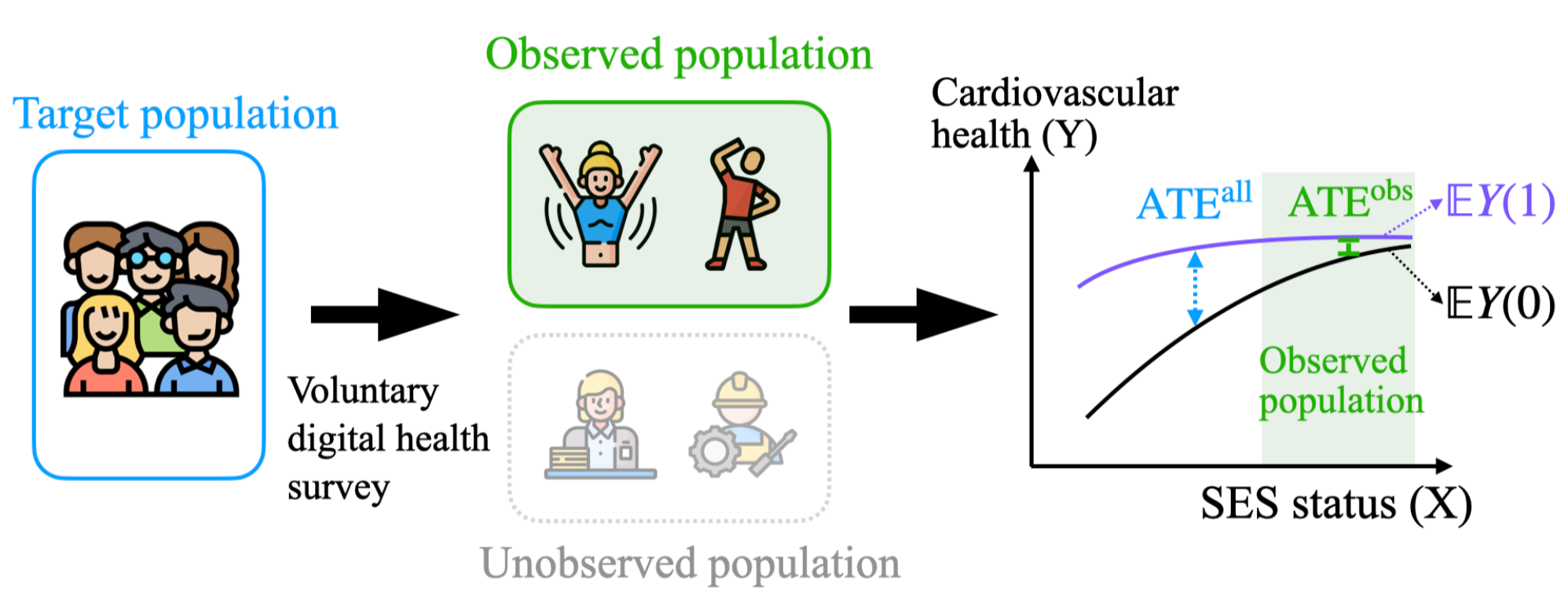}
    \caption{Illustration of selection bias in estimating the ATE of a physical activity subsidy ($T$) on cardiovascular health ($Y$). The participation into the survey is influenced by SES, which  creates a selection bias that can lead to incorrect estimates of the ATE ($\mathrm{ATE}^\mathrm{obs} \ll\mathrm{ATE}^\mathrm{all}$) if not properly accounted for.}
    \label{fig:illustration}
\end{figure}

 This work aims to address the challenges posed by selection bias in causal inference. We propose a novel framework that is general enough to encompass various existing models while being specific enough to provide clear identifiability results. We make the following contributions: 

\begin{table}[!t]
\centering
\begin{tabular}{C{2.5cm}C{1.5cm}C{1.5cm}C{1cm}}
\toprule
 & {\begin{tabular}[c]{@{}c@{}}DAG\\ Framework\end{tabular}} &{\begin{tabular}[c]{@{}c@{}}SEM\\ Framework\end{tabular}} & {Ours} \\ 
\midrule
\resizebox{2.5cm}{!}{\selY}& \myno& \myyes$^*$ & \myyes\\ \addlinespace[0.8em]
\resizebox{2.5cm}{!}{\selT} & \myyes & \myyes & \myyes\\ \addlinespace[0.8em]
\resizebox{2.5cm}{!}{\selX}& \myyes$^{**}$ & \myyes& \myyes\\ \addlinespace[0.8em]
\resizebox{2.5cm}{!}{\selXY}& \myno & \myno& \myyes\\ \addlinespace[0.8em]
\resizebox{2.5cm}{!}{\selTY}& \myno & \myno& \myyes\\  \addlinespace[0.8em]
\resizebox{2.5cm}{!}{\selTX}& \myyes$^{**}$ & \myyes& \myyes\\ 
\bottomrule
\end{tabular}
\caption{A comparison of different frameworks. The DAG Framework refers to the result in~\cite{correaGeneralizedAdjustmentConfounding2018a}, which gives identifiability results assuming common DAG assumptions (Definition~\ref{def:markov}, Definition~\ref{def:faith}) hold. The SEM Framework refers to the result in~\cite{zhang2016identifiability}, which assumes an additive noise model (ANM) and non-Gaussian noise. With these additional parametric assumptions, they are able to identify causal effect when the selection is only outcome-dependent~\cite{zhang2016identifiability}. In the chart, \includegraphics[width=0.015\textwidth]{img/yes.pdf} indicates that the causal effect in the \emph{whole} population (~\ref{def:ATE_ids}) is identifiable under the framework, provided with the assumptions in $*$(ANM model and non-Gaussian noise) and $**$(access to external unbiased distribution on the covariate), while \includegraphics[width=0.015\textwidth]{img/no.pdf} indicates that it is not identifiable or not considered. Our identifiability results cover all the scenarios which are identifiable through other frameworks, and extend the results to scenarios which are not considered identifiable before. For a comparison of causal effect identifiability results in sub-populations, please refer to Appendix~\ref{sec:comp}.}
\label{tab:framework_comp}
\vspace{-2em}
\end{table}

\looseness=-1\textbf{General definition of selection bias.}
We provide a general definition of selection bias, not only in terms of M-bias or collider stratification bias~\cite{greenlandQuantifyingBiasesCausal2003,elwertEndogenousSelectionBias2014}.
Selection bias is most widely understood in epidemiological studies as collider stratification bias~\cite{greenlandQuantifyingBiasesCausal2003,hernan2010causal,elwertEndogenousSelectionBias2014}, where conditioning on a collider variable (or its descendants) can induce spurious associations between its causes. However, selection bias can manifest in other forms, not necessarily due to the collider effect, e.g., manifest itself in various graphical forms~\cite{smith2020selection,lu2022toward}. We gave the general definition of selection bias in Definition~\ref{def:selection}.

\textbf{General solution.}
We do not need to detect the location where selection occurs, in contrast to the various works based on graphical models~\cite{mathur2025simple,jaberGraphicalCriterionEffect2018,zhangGraphicalRulesRecovering2025,mathurSimpleGraphicalRules2025,perkovicCompleteGraphicalCharacterization,bareinboim2012controlling,bareinboim2014recovering,bareinboim2015recovering,bareinboim2022recovering}.
Table.~\ref{tab:framework_comp} compares our framework with existing works. Previous works often rely on identifying the exact location of selection bias in the causal directed acyclic graph (DAG), which can be challenging in practice. Our approach circumvents this requirement by providing a general solution that does not depend on pinpointing the selection mechanism. For a complete literature review, please refer to Appendix~\ref{sec:related_work}.

\textbf{Identifiability under weaker assumptions.}
Apart from restrictive assumptions on where the selection happens, it is also common for existing works to have strong distributional assumptions or specific functional forms, such as non-Gaussian noise and additive noise models (ANM) for causal effect identifiability under selection bias\footnote{Note that, these works provide the sufficient condition, i.e. the causal effect is not identifiable by the provided method (backdoor adjustment etc.) but might still be identifiable with other methods.}. 
Our framework imposes mild assumptions firstly by allowing selection to depend on arbitrary observed variables, and secondly by requiring weaker distributional assumptions. We achieve this through leveraging recent advances in the truncated statistics ~\cite{cohen1950estimating,daskalakisStatisticalTaylorTheorem2021,kontonis2019efficient,lee2024efficient}, which allow us to perform extrapolation from given samples from their truncations to their whole distribution.

\section{Notations and Definitions}

We consider a general potential outcomes model. 
  An observational study involves units (\emph{e.g.}, patients) with covariates $X\in\bbR^d$ (\emph{e.g.}, medical history). 
  Each unit receives a binary treatment $T\in\{0,1\}$ (\emph{e.g.}, medication) with a fixed but unknown probability, independent across units, and we observe a treatment-dependent outcome $Y(T)\in\bbR$ (\emph{e.g.}, symptom severity). 
  The tuple $(X,Y(0),Y(1),T)$ follows an unknown joint distribution \rev{$\cD$}, which defines the target study. For each $t\in\{0,1\}$, $\cP_{X,Y(t)}$ denotes the marginal distribution of $X$ and $Y(t)$ and $\cP_X$ the marginal distribution of $X$.
  To simplify the exposition, we assume that $\cP_{X,Y(0)}$ and $\cP_{X,Y(1)}$ are continuous distributions with densities throughout. The goal is to estimate the average treatment effect (ATE) over the target study's population 
  , defined as $\tau_\cP \coloneqq \bbE_{\cP} [Y(1)- Y(0)]$.

  We only have access to the observational study ${\cP}$ of $(X,Y(T),T)$ because each individual only receives a specific treatment, and we do not observe the counterfactuals. Without further assumptions, ATE is not identifiable from ${\cD}$~\cite{imbens2015causal}, and the identifiability problem is even harder in the presence of selection. To model selection procedures, we define the set of variables as $\bV=\{X, Y, T\}$, and $S$ is the binary selection variable indicating the inclusion of data point in the dataset, i.e. we only observe $\cP(\bV \mid S=1)\coloneqq \cP^S$. 

In order to specify the conditions for ATE identifiability, we define three distributions: the \emph{ propensity scores} $\Ptx(t,x) \coloneqq \Ptx = \Pr(T=t\mid X=x)$ for each $t\in \{0,1\}$, the joint probability of covariate-outcome  $\Pxy(x, y,t) \coloneqq \Pxy = \Pr(X = x, Y(t) = y)$, and the \emph{selection probability} $\Psxyt(x, y, t) = \Pr(S=1\mid X=x, Y(t) =y, T = t)$. We say that a tuple $(\Ptx, \Pxy, \Psxyt)$ is \emph{compatible} if it defines a valid distribution over $(X, T, Y(0), Y(1), S)$.

Correspondingly, we construct the following distribution classes: the class of  propensity scores
$\bbPtx \subseteq \{p \colon \{0,1\}\times \bbR^d\to [0,1]\}$, the  class of {covariate-outcome} distributions $\bbPxy \subseteq \Delta(\bbR^d \times \bbR )\times \{0,1\}$, and the class of selection {distributions} $\bbS \subseteq \{p \colon \bbR^d \times \bbR \times \{0,1\} \to [0,1]\}$. We denote the full space of $(X, Y, T)$ as $\cA \coloneqq \bbR^d \times \bbR \times \{0,1\}$, and the region that are completely censored as $\cB \subseteq \cA $ where  $\Psxyt= 0 $ on $\cB$. We say that a tuple $(\Ptx, \Pxy, \Psxyt)$ is compatible when it defines a valid distribution, and we say that an observational study $\cP$ is \emph{realizable} with respect to the distribution class tuple $(\bbPtx, \bbPxy, \bbPs)$ when the induced probabilities satisfy $\Ptx \in \bbPtx$, $\Pxy \in \bbPxy$ and $\Psxyt \in \bbPs$. See  Appendix~\ref{sec:more_defs} for formal definitions of \emph{compatibility} and \emph{realizability}.


\begin{define}[Selection  Bias]\label{def:selection}
Selection bias refers to preferential inclusion of units from the sample so that $\cP(\bV|S=1) \neq \cP(\bV)$, where $S$ is a binary variable indicating the membership of a unit in the dataset.
\end{define}



In this paper, we care about the question of whether ATE is identifiable from the selected population distribution $\cP(\bV\mid S=1)$. We define two notions of ATE identifiability under \textbf{s}election bias (s-ATE) as follows:
\begin{define}[s-ATE-full: identifiablity in full population]\label{def:ATE_ids} We say that an ATE ${\tau}_\cP$ is \emph{identifiable} from the  distribution  $\cP({\bV\mid S})$,  if there is a mapping $f$ such that $f(\cP_{V\mid S}) = {\tau}_\cP$ for any observational study $\cP$ realizable with respect to such $(\bbPtx, \bbPxy, \bbPs)$. 
\end{define}


\begin{define}[s-ATE-sub: identifiablity in sub-population]\label{def:ATE_sids}  We say that an ATE ${\tau}_{\cP^S}$ is \emph{identifiable} from the  distribution  $\cP({\bV\mid S})$,  if there is a mapping $f$ such that $f(\cP_{V\mid S}) = {\tau}_{\cP^S}$ for any observational study $\cP$ realizable with respect to such $(\bbPtx, \bbPxy, \bbPs)$. 
\end{define}

\begin{discussion}[Difference between s-ATE-full and s-ATE-sub]
  \looseness=-1\rev{This paper mostly focuses on s-ATE-full, which is more challenging because the estimand concerns partly unobserved data.} Definition~\ref{def:ATE_ids}  essentially states that if two selected populations are identical $\cP^1{(\bV\mid S)} = \cP^2{(\bV\mid S)}$, then the ATE is also identical in the full population ($\tau_{\cP^1} = \tau_{\cP^2}$). Although this is not true in general, we show that it holds when a \emph{certain} condition on the distribution class triple $(\bbPtx, \bbPxy, \bbPs)$ is met (Condition~\ref{cond2}). 
  On the other hand, Definition~\ref{def:ATE_sids} focuses on identifying the ATE within the selected sub-population ($\tau_{\cP_{V\mid S}^1} = \tau_{\cP_{V\mid S}^2}$). 
  Two identifiability notions are needed in different practical scenarios, especially when the ATE in the full population is different from the selected sub-population. For instance, in public health policy making, the government wants to estimate the average improvement in health conditions for the entire city, not just one neighborhood, then Definition~\ref{def:ATE_ids} becomes more relevant; while in pharmaceutical research, a cancer drug might be ineffective for the general population but highly effective for patients who has entered the research - the sub-population that has diagnosed with cancer (Definition~\ref{def:ATE_sids}).

 Note that it is far more challenging to estimate treatment effects under selection bias compared to estimating the conditional average treatment effect (CATE)~\cite{imbens2015causal}. While CATE is also targeted at ATE on a sub-population, which makes it superficially seem similar to s-ATE-sub, for CATE, one can simply condition on the covariates $X$, i.e., $\tau_\cP(x) \coloneqq \bbE_{\cP}[Y(1)-Y(0)\mid X=x]$. Under selection bias, we generally lack the knowledge of the selection procedure $P(S=1\mid \bV)$, which may depend not only on the covariates but also on treatment or the outcome of interest. 
 This means estimating the s-ATE-full is even more complex, requiring further assumptions to extrapolate beyond the observed population.

\end{discussion}
\section{ATE Identifiability Conditions}
In general, ATE is not identifiable even in the absence of selection bias, i.e. it cannot be uniquely expressed in terms of the observed data distribution unless we make additional assumptions~\cite{rosenbaum1983central,imbens2015causal} like unconfoundedness (ignorability), positivity (overlap), Stable Unit Treatment Value Assumption (SUTVA), and consistency (See Appendix~\ref{sec:more_defs} for more details), but no work has been developed to identify ATE under general selection bias, limiting to specific cases like ~\cite{zhang2016identifiability,bareinboimControllingSelectionBias2011,bareinboim2015recovering}. 

First (Section.~\ref{sec:unifiedcondi}), we present our main identifiability conditions for ATE with selection bias (Condition~\ref{cond2}). Then in Section.~\ref{sec:specification}, the condition is accompanied by propositions that instantiate the conditions for specific distribution classes, i.e. restrictions on $\bbPtx$, $\bbPxy$ (and $\bbS$), such that the conditions hold. In particular, Proposition~\ref{prop:determin_select} and Proposition~\ref{prop:nondetermin_select} are of practical interest as they instantiate Condition~\ref{cond2} for deterministic and non-deterministic selection mechanisms, respectively, and we find that ATE is identifiable under common distribution families under selection. This is an exciting result, extending the identifiability of ATE under selection to settings that were previously considered unidentifiable in the literature ~\cite{zhang2016identifiability, correa2017causal, correaIdentificationCausalEffects2019}. All the proofs for the theorem and propositions are provided in the appendix.


\camerarev{\subsection{An Unified Framework for ATE Identifiability}\label{sec:unifiedcondi}}
\camerarev{In this section, we present identifiability conditions for ATE in terms of restrictions on the classes of propensity scores $\bbPtx$ and covariate-outcome distributions $\bbPxy$, which serves as an unified framework to characterize the specific distributions that satisfy this condition later (Section~\ref{sec:specification})}

\vspace{1em}
\begin{mdframed}
     \begin{cond}[Identifiability Condition under Selection]
        \label{cond2} 
        The distribution classes $(\bbPtx,\bbPxy,\bbS)$ satisfy the \emph{Identifiability Condition under Selection} if for any  tuples $(\Ptx, \Pxy, \Psxyt), (\Qtx, \Qxy, \Qsxyt) \in {\bbPtx} \times {\bbPxy} \times{\bbS}$ that are compatible with $({\bbPtx}, {\bbPxy}, \bbS)$, 
        it holds that
        \[
            \rev{\underbrace{\tau_{\Pxy} \neq \tau_{\Qxy}}_{\circled{1}} \implies \exists (x,y,t)\in \cA ,\text{ s.t. } \circled{\rev{2}} \text{ is valid,}}
        \]
        $$\text{ for }\ \underbrace{{\color{brown}{{\alpha_P(x,y,t) \Ptx\Pxy}}}\hspace{-0.6mm} \neq \hspace{-0.6mm}{\color{brown}{\alpha_Q(x,y,t) \Qtx \Qxy}}}_{\circled{\rev{2}}},$$
        where $\alpha_P\coloneqq \frac{\Psxyt}{P(s)}, \alpha_Q\coloneqq \frac{\Qsxyt}{Q(s)}$.
    \end{cond}
    
        
\end{mdframed}
    Note that, only the distribution product that are marked in {\color{brown}{brown}} color in Condition~\ref{cond2} are observed, and it corresponds to the observed distribution $P(X, Y, T\mid S=1)$. \rev{Without further information, we do not have access to the unbiased marginal $P_X$ and $Q_X$ thus cannot test $\circled{2}$. When we do have such access to the unbiased marginal, as in the cases in BioBank data where we have a matching population~\cite{swansonUKBiobankSelection2012, vanaltenReweightingUKBiobank2024}, we provided the augmented Condition~\ref{cond:ext} in Appendix~\ref{sec:id_ext}. The Condition~\ref{cond:ext} is less restrictive in the sense that it allows a wider range of distributions that satisfy it, while requiring more information as input.}
    {This condition is related to that of~\cite{cai2025makes}, which provides a distributional condition for the identifiability of the ATE without selection bias, but allowing for overlap violations. This is a fundamentally different setting, as overlap violations are a challenge arising from the support of the data's true underlying distribution, while selection bias is a distortion of a causal effect stemming from how data enters the analysis. This is reflected in our definition with the inclusion of the selection mechanism. Concretely, this means that the condition in~\cite{cai2025makes} can be used to identify the s-ATE-sub (Definition~\ref{def:ATE_sids}) but not s-ATE-full (Definition~\ref{def:ATE_ids}), which is the more challenging scenario that this paper focuses on.}


    In this paper, we focus on the cases where there is no unobserved confounding, i.e. $Y(t) \indep T \mid X$ for $t\in\{0,1\}$, and we further assume overlap {in the unselected population}, a common assumption in causal inference. We characterize the  class of \emph{c-overlap} and \emph{unconfounded} 
    distributions by restricting the propensity scores class $\bbPtx$ as follows: 
    \begin{equation*}
        \phantom{.}\bbPtx(c)\coloneqq \{p\colon\{0,1\}\times\bbR^d\to [0,1]  \mid  c<p(t,x)<1-c\}.
    \end{equation*}
    After selection, however, the observed distribution may no longer satisfy overlap. In this work, we consider two types of selection mechanisms: deterministic and nondeterministic selection. The class of \emph{deterministic-selection} distributions on $\cB \subseteq \cA$ is defined as:
            \[
                \bbS^{\mathrm{det}}(\cB)\coloneqq \{p\colon\cA \to [0,1]\mid p(x,y,t)= 0, \forall (x,y,t)\in \cB\},
            \]
    For nondeterministic selection, we define the class of \emph{d-selection-overlap} distributions on $\cB \subseteq \cA$ \rev{(with a slight overload of notation)} as: 
            \[
                \bbS(\cB, d)\coloneqq \{p\colon \cA \to [0,1]\mid p(x,y,t)> d, \forall (x,y,t)\in \cB\},
            \]
    and denote by $\bbS(d) \coloneqq \bbS(\cA, d)$ where $\cA = \bbR^d\times \bbR\times \{0,1\}$.


   For  deterministic selection, we define $\bbS^{\mathrm{det}}_{\mathbbm{1}}$ by considering any deterministic selection mechanism, i.e.
    \[
        \bbS^{\mathrm{det}}_{\mathbbm{1}} \coloneqq \{p\colon\cA \to [0,1]\mid \exists \cB \subset \cA,\mu(\cB^{c})>0,\ p = \mathbbm{1}_{\cA\setminus \cB} \},
    \]
    where $\cB^c = \cA\setminus \cB$, $\ \mathbbm{1}_{\cA\setminus \cB}=\begin{cases}
        1, \text{ for } (x,y,t) \in \cB^c,\\
        0, \text{ for } (x,y,t) \in \cB,
    \end{cases}$\\
    and $\mu$ is the product measure of Lebesgue measure and a counting measure.
    
Given these definitions, we present our main theorem characterizing the identifiability of ATE under selection bias in Theorem~\ref{thm2}.

\begin{restatable}[Characterization of \rev{s-ATE-full}]{theorem}{ATEIDTHM}\label{thm2}
    The ATE $\tau_P$ is identifiable from any observational distribution $P$ realizable with respect to $(\bbPtx, \bbPxy,\bbS)$ if and only if 
    $(\bbPtx, \bbPxy,\bbS)$ satisfy Condition.~\ref{cond2}. 
\end{restatable}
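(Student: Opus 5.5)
The argument is a direct unfolding of the definitions. The key observation is that the observed selected law factorizes as
\[
P(x,y,t\mid S=1)=\frac{\Psxyt(x,y,t)}{P(s)}\,\Ptx(t,x)\,\Pxy(x,y,t)=\alpha_P(x,y,t)\,\Ptx\,\Pxy .
\]
This uses unconfoundedness ($Y(t)\indep T\mid X$), so the joint density of $(X,T,Y(T))$ at $(x,t,y)$ is $\Ptx\Pxy$; multiplying by the selection probability and dividing by $P(S=1)$ then gives the selected law. So the brown quantity in \circled{2} is exactly the density of $\cP(\bV\mid S=1)$ at $(x,y,t)$. I will prove this identity first, for every compatible tuple, since compatibility is exactly what guarantees that the tuple defines a valid joint law with $P(s)>0$.

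\textbf{Sufficiency.} Assume Condition~\ref{cond2}. Define $f$ on the set of observable laws $\{\cP(\bV\mid S=1): \cP \text{ realizable}\}$ by setting $f(\cP_{V\mid S})\coloneqq\tau_{\Pxy}$ for any realizable tuple $(\Ptx,\Pxy,\Psxyt)$ generating that law. Well-definedness is the contrapositive of the condition. If two realizable tuples $P$ and $Q$ induce the same selected law, then by the factorization $\alpha_P\Ptx\Pxy=\alpha_Q\Qtx\Qxy$ for all $(x,y,t)\in\cA$. Hence \circled{2} fails everywhere, and the condition forces $\tau_{\Pxy}=\tau_{\Qxy}$. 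Then $f$ satisfies Definition~\ref{def:ATE_ids}, since $\tau_\cP=\bbE[Y(1)]-\bbE[Y(0)]$ depends only on the marginals $\Pxy$.

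\textbf{Necessity.} Suppose the condition fails. Then there are compatible, realizable tuples $P$ and $Q$ with $\tau_{\Pxy}\neq\tau_{\Qxy}$ but $\alpha_P\Ptx\Pxy=\alpha_Q\Qtx\Qxy$ on all of $\cA$. By the factorization, the two observational studies induce the same selected distribution $\cP^1(\bV\mid S)=\cP^2(\bV\mid S)$ but different full-population ATEs. Any candidate $f$ would have to output two different values on the same input, so no $f$ exists and the ATE is not identifiable.

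\textbf{Expected obstacle.} The main delicate point is the measure-theoretic one. Equality of selected laws gives equality of densities only $\mu$-almost everywhere, whereas \circled{2} asks for a pointwise violation. I would handle this by reading \circled{2} up to $\mu$-null sets, or equivalently by taking fixed versions of the densities, which is harmless because $\tau$ is insensitive to modifications on null sets.

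A second point to check is that the observed object really contains no extra information beyond the product $\alpha\Ptx\Pxy$. In particular, $P(S=1)$ is not observed: it enters only through the normalization already absorbed into $\alpha_P$, and the law of $S$ itself is not part of the data. One should also be careful that "compatible" includes $P(S=1)>0$ so that $\alpha_P$ is defined; I would state this explicitly as part of the compatibility definition in Appendix~\ref{sec:more_defs}.
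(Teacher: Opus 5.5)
Your proposal is correct and follows the paper's proof essentially step for step. It uses the same Bayes-rule/unconfoundedness factorization of the selected law as $\alpha_P\Ptx\Pxy$, the same well-definedness-via-contrapositive argument for sufficiency, and the same two-indistinguishable-studies construction for necessity. Your extra remarks, reading \circled{2} up to $\mu$-null sets and requiring $P(S=1)>0$, are rigor refinements that the paper's proof passes over silently; they do not change the approach.
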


    Formally, the following hold:
        \begin{enumerate}
            \item \textbf{(Sufficiency)}\quad  If~$(\bbPtx(c), \bbPxy,\bbS)$ satisfies Condition~\ref{cond2}, then there is a mapping $f\colon  \Delta(\bbR^d \times \bbR \times \{0,1\}) \to \bbR$ with $f(P(V\mid S))=\tau_P$ for each distribution $P(V\mid S)$ realizable with respect to $(\bbPtx(c), \bbPxy,\bbS)$.
            \item \textbf{(Necessity)}\quad Otherwise, for any map $f\colon \Delta(\bbR^d \times \bbR \times \{0,1\}) \to \bbR$, there exists a  distribution $P(V\mid S )$ realizable with respect to $(\bbPtx(c),\bbPxy,\bbS)$ such that $f(P)\neq \tau_P$.
        \end{enumerate}    

    \begin{discussion}[Inuition for Condition~\ref{cond2}]
    \looseness=-1 The key idea behind Condition~\ref{cond2} is to ensure that if two distributions $\cP$ and $\cQ$ have different ATEs (i.e. $\circled{1}$ holds), then they must differ in the observed parts of the distribution (i.e.  $\circled{2}$  holds) {for the ATE to be identifiable}. In general, under selection bias, this condition does not hold, as two distributions can have different ATEs while being indistinguishable in the observed data due to the selection mechanism. However, by imposing restrictions on the classes of propensity scores $\bbPtx$, covariate-outcome distributions $\bbPxy$, and selection mechanisms $\bbS$, we can ensure that Condition~\ref{cond2} holds, leading to the identifiability of ATE as stated in Theorem~\ref{thm2}. The propositions that follow provide specific instances of these restrictions that guarantee the satisfaction of Condition~\ref{cond2}.
\end{discussion}

\camerarev{\subsection{Instantiation of Identifiable Scenarios}\label{sec:specification}}
\camerarev{In Section.~\ref{sec:unifiedcondi}, Theorem~\ref{thm2} serves an important structural purpose by providing a unified language, and in this section, we show that this language allows us to provide new identifiability guarantees (Propositions~\ref{prop:determin_select} and~\ref{prop:nondetermin_select}) and cast all existing graphical identifiability results (Corollaries 3.9–3.13) into a single framework in a coherent manner. }

\begin{restatable}[Cases under Deterministic Selection]{proposition}{PROPDETERMINE}\label{prop:determin_select}
        The tuple $(\bbPtx(c),\bbPxy^{C^\infty},\bbS^{\mathrm{det}}_{\mathbbm{1}}) $ satisfies Condition~\ref{cond2}, $\forall c,\ 0<c<\frac{1}{2}$, where $\bbPxy^{C^\infty}\subseteq \Delta(\bbR^d \times \bbR)\times \{0,1\}$ is a family of distributions such that for any $P_{xy(t)}\in \bbPxy^{C^\infty}$, and for any $t\in\{0,1\}$, its corresponding conditional distribution $P_{y(t)\mid x}$ is parametrized as $P_{y(t)\mid x}\propto e^{f(x,y)}$, \rev{and its corresponding marginal distribution $P_x$ is parametrized as $P_{x}\propto e^{g(x)}$}, where $f(x,y) = f_{P_{xy(t)}}$ \rev{and $g(x)=g_{P_{xy(t)}}$ are} polynomial function\rev{s} of $(x,y),$ \rev{and $x$, respectively}.
    
\end{restatable}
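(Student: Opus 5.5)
We argue by contraposition of Condition~\ref{cond2}. Take two compatible tuples $(\Ptx,\Pxy,\Psxyt)$ and $(\Qtx,\Qxy,\Qsxyt)$ in $\bbPtx(c)\times\bbPxy^{C^\infty}\times\bbS^{\mathrm{det}}_{\mathbbm{1}}$, and suppose that \circled{2} fails everywhere on $\cA$, i.e.
\[
\alpha_P(x,y,t)\Ptx\Pxy=\alpha_Q(x,y,t)\Qtx\Qxy\quad\text{for all }(x,y,t)\in\cA .
\]
We will show that $\Pxy=\Qxy$ for both $t\in\{0,1\}$. This gives $\tau_{\Pxy}=\tau_{\Qxy}$, so \circled{1} cannot hold and the implication in Condition~\ref{cond2} is satisfied.

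\textbf{Step 1 (equal supports).} Write $\Psxyt=\mathbbm{1}_{\cA\setminus\cB_P}$ and $\Qsxyt=\mathbbm{1}_{\cA\setminus\cB_Q}$. Each density $\Pxy$ is proportional to $e^{f(x,y)+g(x)}$, hence strictly positive everywhere. The propensities lie in $(c,1-c)$ with $c>0$, so they are also strictly positive. The left-hand side is therefore nonzero exactly on $\cA\setminus\cB_P$, and the right-hand side exactly on $\cA\setminus\cB_Q$. It follows that $\cB_P=\cB_Q$ up to a $\mu$-null set. We call the common observed region $\cB^c$; by the definition of $\bbS^{\mathrm{det}}_{\mathbbm{1}}$ it has positive measure.

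\textbf{Step 2 (reduce to a polynomial identity).} On $\cB^c$ the indicators equal one, so
\[
\frac{\Ptx\,e^{f_P(x,y)+g_P(x)}}{P(s)Z_P}=\frac{\Qtx\,e^{f_Q(x,y)+g_Q(x)}}{Q(s)Z_Q},
\]
where $Z_P,Z_Q$ are the normalizers. We need to eliminate the propensities, which are not polynomial. For a fixed $t$, take logarithms and fix $x$. Then $f_P(x,\cdot)-f_Q(x,\cdot)$ equals $\log(\Qtx/\Ptx)$ plus a constant, i.e. a function of $x$ alone, on the slice $\{y:(x,y,t)\in\cB^c\}$. Since $\cB^c$ has positive measure, for some $t$ these slices have positive Lebesgue measure for a positive-measure set of $x$. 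A polynomial in $y$ that is constant on a positive-measure set is constant everywhere. Therefore $f_P-f_Q=h(x)$ for some function $h$. Both $P_{y(t)\mid x}\propto e^{f}$ are normalized in $y$, so the conditionals $P_{y(t)\mid x}$ and $Q_{y(t)\mid x}$ agree. This holds first for those $x$, and then, once the polynomial identity is extended, for all $x$.

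\textbf{Step 3 (marginals and propensities).} With the conditionals matched, what remains is the identity
\[
\frac{\Ptx\,e^{g_P}}{P(s)}=\frac{\Qtx\,e^{g_Q}}{Q(s)}
\]
on the projection of $\cB^c$.

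\textbf{Main obstacle.} Non-polynomial propensities could in principle absorb a difference $g_P-g_Q$. The plan is to sum over $t$, using $P_{1\mid x}+P_{0\mid x}=1$, wherever both treatment arms are observed at $x$. This yields that $e^{g_P-g_Q}$ is constant on a positive-measure set. By analyticity of polynomials, $g_P-g_Q$ is then constant on $\bbR^d$, and normalization makes the constant zero.

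Where only one arm is observed, we expect to use the $c$-overlap bound to control the ratio, and to invoke the extrapolation of truncated exponential-family distributions from the truncated-statistics literature cited in the introduction: a polynomial-exponential density is determined by its restriction to any set of positive measure.

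With $P_x=Q_x$ and $P_{y(t)\mid x}=Q_{y(t)\mid x}$, we obtain $\Pxy=\Qxy$ and hence equal ATEs. Ensuring that the positive-measure requirement holds for each $t$ separately, not just for one, is the delicate point. If necessary we will handle it through the joint identity in $(x,y)$ across both values of $t$.
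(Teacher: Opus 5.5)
You follow the paper's route: contrapositive, equal censored sets via strict positivity, polynomial rigidity in $y$ on slices, then the marginal. You are more careful than the paper at exactly the two points you flag. Those points cannot be closed by the rescues you propose, because $\bbS^{\mathrm{det}}_{\mathbbm{1}}$ allows any $\cB$ with $\mu(\cB^c)>0$, and the proposition as stated then has counterexamples of both kinds you worry about.

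\emph{An arm never selected.} Take $d=1$ and $\cB=\bbR\times\bbR\times\{0\}$. In both models let the propensity be $\tfrac12$, $X\sim N(0,1)$ and $Y(1)\mid X\sim N(0,1)$; let $Y(0)\mid X\sim N(0,1)$ under $\cP$ but $N(1,1)$ under $\cQ$. The selected distributions coincide, so \circled{2} fails at every point, yet $\tau_{\cP}=0\neq-1=\tau_{\cQ}$. The exponents $f$ for $t=0$ and $t=1$ are unrelated polynomials, so no joint identity across $t$ can recover the unselected arm.

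\emph{Both arms selected, but on covariate regions overlapping only in a null set.} On a one-arm region the $c$-overlap bound pins $Q_X/P_X$ only to within a factor $\tfrac{1-c}{c}$ of the constant $Q(s)/P(s)$. Rigidity of $e^{g}$ needs exact proportionality, and truncated-statistics extrapolation needs the truncated density itself, which here is masked by the arbitrary propensity factor. Concretely, set $\cB^c=([-1,0]\times\bbR\times\{1\})\cup((0,1]\times\bbR\times\{0\})$. In both models take $Y(1)\mid X\sim N(X,1)$ and $Y(0)\mid X\sim N(0,1)$. Take $P_X=N(0,1)$, $Q_X=N(\epsilon,1)$ and $P_{t\mid x}\equiv\tfrac12$. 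Set $Q_{1\mid x}=\tfrac12P_X(x)/Q_X(x)$ on $[-1,0]$, $Q_{0\mid x}=\tfrac12P_X(x)/Q_X(x)$ on $(0,1]$, and $Q_{t\mid x}=\tfrac12$ elsewhere. Since $P_X/Q_X\to1$ uniformly on $[-1,1]$ as $\epsilon\to0$, these propensities lie in $(c,1-c)$ for small $\epsilon$. Moreover $Q_{t\mid x}Q_X=P_{t\mid x}P_X$ on each arm's selected covariates, so $P(s)=Q(s)$ and the selected distributions agree. Still $\tau_{\cP}=\bbE_{\cP}[X]=0\neq\epsilon=\tau_{\cQ}$.

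\emph{What your argument does prove.} Your Steps 1--3 establish the proposition once $\bbS^{\mathrm{det}}_{\mathbbm{1}}$ is restricted to censored sets satisfying two conditions: (i) for each $t$, $\{(x,y):(x,y,t)\notin\cB\}$ has positive Lebesgue measure; (ii) the set of $x$ at which both arms have a selected point contains a set of positive measure. Under (i), Fubini plus your coefficient argument gives $P_{y(t)\mid x}=Q_{y(t)\mid x}$ everywhere for both $t$: each $y^k$-coefficient of $f_P-f_Q$ with $k\ge1$ is a polynomial in $x$ vanishing on a positive-measure set, hence identically zero. Dividing this out at any selected point gives $\Ptx P_X/P(s)=\Qtx Q_X/Q(s)$ on each arm's covariate projection. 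Under (ii), your summation over $t$ makes $g_P-g_Q$ constant on a positive-measure set, hence everywhere, so $P_X=Q_X$.

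\emph{The paper's proof has the same gaps.}
\begin{itemize}
\item It asserts that every slice $\Omega_{x,t}$ is uncountable for every $x$ and $t$. The fact it cites is about projections and says nothing about slices, which can be empty.
\item It obtains $P_X=Q_X$ ``by analogous argumentation'', although the propensity factor you isolated is not of polynomial-exponential form.
\item It derives $P(s)=Q(s)$ by integrating the selection indicator against Lebesgue measure rather than against the density.
\end{itemize}
Your treatment of $P(s),Q(s)$ as unknown constants to be absorbed is the correct one. What needs fixing is the hypothesis, not your argument.
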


\begin{discussion}
\looseness=-1    When selection is deterministic, ATE is identifiable under c-weak-overlap propensity scores and smooth enough outcome distributions. This is a notable result as it covers a wide range of practical scenarios where Gaussian and other common distributions are assumed. 
    The intuition behind Proposition~\ref{prop:determin_select} is that under deterministic selection, the observed data corresponds to a truncated version of the original distribution. By assuming that the covariate-outcome distribution is smooth enough (e.g., belonging to $\bbPxy^{C^\infty}$), we can leverage results from truncated statistics~\cite{daskalakisStatisticalTaylorTheorem2021} to extrapolate and recover the original distribution, leading to ATE identifiability. 
\end{discussion}

\begin{restatable}[Cases under Nondeterministic Selection]{proposition}{PROPNONDETERMINE}\label{prop:nondetermin_select}
    The tuple $(\bbPtx(c),\bbPxy,\bbS(d))$ satisfies Condition~\ref{cond2}, $\forall c,  0<c<\frac{1}{2}$, if $\bbPxy$ belongs to the following distribution classes: ($1$) Gaussian distribution, (2) Laplace distribution (light tailed distributions), and ($3$) Pareto distribution, ($4$) Log-normal distributions (heavy tailed distributions).
\end{restatable}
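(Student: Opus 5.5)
The plan is to prove Condition~\ref{cond2} in contrapositive form. Fix two compatible tuples $(\Ptx,\Pxy,\Psxyt)$ and $(\Qtx,\Qxy,\Qsxyt)$ in $\bbPtx(c)\times\bbPxy\times\bbS(d)$ with $\tau_{\Pxy}\neq\tau_{\Qxy}$. Suppose, for contradiction, that \circled{2} fails at every $(x,y,t)\in\cA$, so that
\[
\alpha_P\,\Ptx\,\Pxy=\alpha_Q\,\Qtx\,\Qxy \quad\text{on }\cA .
\]
I will show that this forces the density ratio $\Pxy/\Qxy$ into a bounded band. I will then show that, for each of the four families, no member other than $\Qxy$ itself has a ratio that stays in such a band. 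Hence $\Pxy=\Qxy$ for $t=0,1$, which contradicts $\tau_{\Pxy}\neq\tau_{\Qxy}$. Equivalently, any candidate $\Qxy$ with a different ATE must have its density ratio leave the band somewhere, and at such a point \circled{2} holds.

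\textbf{Step 1 (the band).} Rearranging the equality gives
\[
\frac{\Pxy}{\Qxy}=\frac{Q(s)}{P(s)}\cdot\frac{\Qsxyt\,\Qtx}{\Psxyt\,\Ptx}.
\]
Set $r\coloneqq Q(s)/P(s)$, a positive constant. Both selection probabilities lie in $(d,1]$ and both propensities lie in $(c,1-c)$. Therefore, for every $(x,y)$ and $t$, the ratio $\Pxy/\Qxy$ lies in \boundrCD, up to the harmless slack of using $1$ rather than $1-d$ as the upper bound on the selection probabilities. The key feature is that the band has finite, strictly positive endpoints that do not depend on $(x,y,t)$. Consequently $\log\Pxy-\log\Qxy$ is a bounded function on all of $\bbR^d\times\bbR$. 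Note that this step uses $\bbS(d)=\bbS(\cA,d)$, i.e.\ $d$-overlap of selection on the whole space; under deterministic selection the argument would be different.

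\textbf{Step 2 (bounded log-ratio forces equality, family by family).}
\begin{itemize}
\item \emph{Gaussian.} $\log\Pxy-\log\Qxy$ is a quadratic polynomial in $(x,y)$. A quadratic polynomial that is bounded on $\bbR^{d+1}$ must be constant: first the quadratic form vanishes (test along rays), and then the linear part vanishes. So the two densities are proportional, and since both integrate to one they are equal. Equal means give $\tau_{\Pxy}=\tau_{\Qxy}$.
\item \emph{Laplace.} The log-ratio is $-\|z-\mu_P\|_1/b_P+\|z-\mu_Q\|_1/b_Q$ plus a constant. Sending $\|z\|\to\infty$ along coordinate directions shows that boundedness forces $b_P=b_Q$. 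With equal scales, sending a single coordinate to $\pm\infty$ recovers each location, so $\mu_P=\mu_Q$.
\item \emph{Pareto.} The log-ratio is $(\alpha_Q-\alpha_P)\log z$ plus a constant on the common support. Boundedness as $z\to\infty$ gives $\alpha_P=\alpha_Q$. The supports $[z_m,\infty)$ must also coincide, because otherwise the ratio would be $0$ or $\infty$ somewhere, and both are excluded by the band.
\item \emph{Log-normal.} The log-ratio is a quadratic in $\log z$. Since $\log z$ ranges over all of $\mathbb{R}$ (coordinatewise), the Gaussian argument transfers directly.
\end{itemize}
In every case $\Pxy=\Qxy$ for each $t$, so the two ATEs are equal, which contradicts \circled{1}.

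\textbf{Main obstacle.} The main difficulty is stating the families precisely enough that Step 2 is rigorous. First, the parametrisation of the joint of $X$ and $Y(t)$, possibly through the conditional, must make the log-density an explicit function whose growth at infinity pins down all parameters. Second, in the heavy-tailed cases the supports must be handled, and one must check that compatibility does not allow the selection probability to vanish on part of the support. For Gaussian and log-normal, the first thing to try is the polynomial-boundedness lemma (a bounded polynomial on $\bbR^n$ is constant), the same tool that underlies Proposition~\ref{prop:determin_select}. For Laplace and Pareto, I would use direct asymptotics along rays.
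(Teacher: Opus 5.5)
\paragraph{The Laplace step fails.} Your Step 1 is the paper's band. One slip: your $r$ is inverted and should be $P(s)/Q(s)$, but only its positivity matters. Your Step 2 follows the paper's strategy of showing that the density ratio escapes every such band. For Gaussian and log-normal, your ``bounded polynomial is constant'' argument is more general than the paper's, which fixes a common $\sigma$. Your Gaussian, Pareto and log-normal cases are fine.

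The Laplace case does not go through. Once $b_P=b_Q=b$, the log-ratio is $(\|z-\mu_Q\|_1-\|z-\mu_P\|_1)/b$. By the triangle inequality it stays in $[-K,K]$ with $K=\|\mu_P-\mu_Q\|_1/b$, for every $z$. Sending a coordinate to $\pm\infty$ only gives the finite limits $\pm(\mu_{P,i}-\mu_{Q,i})/b$, so it does not force $\mu_P=\mu_Q$. Boundedness therefore says nothing about the locations. The band has fixed log-width $2\log\frac{1-c}{cd}$ and a free center $\log r$, so a small location shift gives no contradiction. Yet this is exactly the case you must rule out: the Laplace mean is its location, so \circled{1} forces different locations, not different scales.

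\paragraph{The step cannot be patched, because the claim is false there.} Construct the following pair:
\begin{itemize}
\item Take $\Ptx=\Qtx=\tfrac12$.
\item At $t=0$, use identical laws.
\item At $t=1$, use Laplace laws with common scale $b$ whose $y$-locations differ by $\delta\neq 0$, with everything else equal. Then $R\coloneqq P_{xy(1)}/Q_{xy(1)}\in[e^{-K},e^{K}]$ with $K=|\delta|/b$.
\item Set $P(s\mid x,y,1)=e^{-K/2}R^{-1/2}$ and $Q(s\mid x,y,1)=e^{-K/2}R^{1/2}$, and set both selection probabilities to $1$ at $t=0$.
\end{itemize}
Then $P(s\mid x,y,1)\,P_{xy(1)}=Q(s\mid x,y,1)\,Q_{xy(1)}$ pointwise. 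Hence $P(s)=Q(s)$, and \circled{2} fails at every point of $\cA$.

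Both selection functions take values in $[e^{-K},1]$, so they lie in $\bbS(d)$ as soon as $K<\log(1/d)$. Yet $|\tau_P-\tau_Q|=|\delta|\neq 0$. So the Laplace clause needs an extra restriction. One option is to require that distinct equal-scale members be separated by $K>\log\frac{1-c}{cd}$; under that, your band argument does go through.

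The paper's own proof has the same hole. It asserts that \circled{1} forces $b_P\neq b_Q$, which does not follow.
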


\begin{discussion} When selection is non-deterministic, ATE is identifiable under c-weak-overlap and a wide range of distribution families. This extends the identifiability of ATE under selection to settings that were previously considered unidentifiable in the literature ~\cite{zhang2016identifiability, correa2017causal, correaIdentificationCausalEffects2019}, for which the intuition is that for any $\cP$ and $\cQ$ in these distribution classes that have different means of the outcomes, we can always find a point in the support of the distributions where the observed distributions differ enough, i.e. $\circled{2}$ holds in Condition~\ref{cond2}.
\end{discussion}

\subsection{Connections to Graphical Conditions in Prior Work}\label{sec:connections}
In this section, we connect our identifiability conditions to existing graphical criteria for \rev{s-ATE-full (Corollary~\ref{coro:backdoor}-\ref{coro:outcomedep}) and s-ATE-sub (Corollary~\ref{coro:sidsub})}. In particular, we show that our conditions are more general than the backdoor and the selection-backdoor criteria~\cite{correa2017causal,correaIdentificationCausalEffects2019}.

To do this, we assume a causal graph $\cG$ over the variables $V=\{X, T, Y, S\}$. The graph $\cG$ is Markovian (Definition~\ref{def:markov}) and faithful (Definition~\ref{def:faith}) to the distribution.

\begin{define}\label{def:markov} (Markov Condition (\citet{spirtes2001causation,pearl2009causality}))
    Given a DAG $\mathcal{G}$ and distribution $\mathbb{P}$ over the variable set $\bV$, every variable
$X$ in $\bV$ is probabilistically independent of its non-descendants given its parents in $\mathcal{G}$.
\end{define}

\begin{define}\label{def:faith} (Faithfulness Assumption (\citet{spirtes2001causation,pearl2009causality}))
    There are no independencies between variables that are not entailed by the Markov Condition.
\end{define}

We denote by $\cG_{\overline{M}}$ the graph obtained from $\cG$ by removing all incoming edges to $M$, and by $\cG_{\underline{M}}$ the graph obtained from $\cG$ by removing all outgoing edges from $M$. We also denote by $Anc(S)$ the set of ancestors of $S$ in $\cG$.


\begin{restatable}[Selection-backdoor~\cite{correaIdentificationCausalEffects2019} as a special case of Condition~\ref{cond2}]{corollary}{CORBACKDOOR}\label{coro:backdoor}
    If $S$ is a child of $T$ but not a child of $X$ nor $Y$, i.e.  $T \rightarrow S,   X\nrightarrow S$, and $Y\nrightarrow S$  in $\cG$, then (a) $X$ satisfies the \emph{selection-backdoor} criterion in the causal graph $\cG$, and (b) the distribution classes $(\bbPtx,\bbPxy)$ that are entailed by $\cG$ satisfy Condition~\ref{cond2}.
\end{restatable}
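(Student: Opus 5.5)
The plan is to prove (a) by checking the selection-backdoor criterion of \citet{correaIdentificationCausalEffects2019} directly on $\cG$ using d-separation. Then (b) follows by the contrapositive of Condition~\ref{cond2}: if two realizable tuples induce the same observed law $P(\bV\mid S=1)$, they must have the same $\tau$.

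\textbf{Step 1: part (a).} Take the adjustment set $Z=X$. Since $X$ is a parent of $T$, $X$ contains no descendant of $T$. I would then check the criterion's conditions one by one.
\begin{itemize}
\item \emph{Backdoor blocking.} The only backdoor path from $T$ to $Y$ in $\cG$ is $T\leftarrow X\rightarrow Y$, and it is blocked by $X$.
\item \emph{Selection independence.} The criterion needs $Y\indep S\mid X,T$ in $\cG$. The only parent of $S$ is $T$, so every path from $Y$ into $S$ ends in the edge $T\rightarrow S$. That $T$ is a non-collider and is conditioned on, so the path is blocked. There are no other paths, since $S$ has no children.
\item \emph{Remaining clause.} The criterion also involves the split of $Z$ into $Z^+$ and $Z^-$. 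Here $Z^+=\emptyset$ and $Z^-=X$, so this clause is vacuous except for the requirement on the marginal of $Z^-$.
\end{itemize}

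\textbf{Step 2: factorize the observed law in part (b).} By the Markov condition (Definition~\ref{def:markov}), $\Psxyt(x,y,t)=s(t)$ depends only on $t$. Hence $\alpha_P(x,y,t)=s_P(t)/P(s)$, and the observed density is
\[
P(x,y,t\mid S=1)\propto s_P(t)\,\Ptx(t,x)\,P_x(x)\,P_{y(t)\mid x}(y\mid x).
\]
Conditioning on $(x,t)$ cancels $s_P(t)$. So $P(y\mid x,t,S=1)=P_{y(t)\mid x}(y\mid x)$ by unconfoundedness. Thus $\mu_t(x)=\bbE[Y(t)\mid X=x]$ is read off from the selected data for both $t$. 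The overlap constraint $\Ptx\in\bbPtx(c)$ together with $s(t)>0$ (needed for $\mu(\cB^c)>0$) guarantees that both arms are observed at almost every $x$.

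Likewise $P(x\mid T=t,S=1)=P(x\mid T=t)$ is identified. So if the observed laws of $P$ and $Q$ coincide, then $\mu^P_t=\mu^Q_t$ and $P_{x\mid t}=Q_{x\mid t}$.

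\textbf{Step 3: recover the population weights.} Write $P_x=\pi\,P_{x\mid 1}+(1-\pi)\,P_{x\mid 0}$ with $\pi=P(T=1)$. The selected data only reveal the tilted weight $\pi s(1)/(\pi s(1)+(1-\pi)s(0))$. I expect this step to be the main obstacle.

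My first attempt is to use the input that the selection-backdoor criterion itself certifies as sufficient. The criterion's identification formula expresses $\tau$ as
\[
\tau=\int \bigl(\mu_1(x)-\mu_0(x)\bigr)\,dP_x(x),
\]
where $\mu_t$ comes from the selected data and $P_x$ is fixed by the criterion's marginal on $Z^-$. Given (a), two compatible tuples with identical observed laws then agree on every ingredient of this formula, hence on $\tau$. Taking the contrapositive, $\tau_{\Pxy}\neq\tau_{\Qxy}$ forces \circled{2} at some $(x,y,t)$, which is exactly Condition~\ref{cond2}.

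If that route is judged to import external information, the fallback is as follows. Use the compatibility requirement on the tuple, namely that $\Ptx$ and $P_x$ come from one joint, to pin down $\pi$. The idea is to compare the identified ratio $P_{x\mid 1}/P_{x\mid 0}$ with the observed selected odds. This ratio equals $\frac{1-\pi}{\pi}\cdot\frac{\Ptx(1,x)}{\Ptx(0,x)}$, while the selected odds equal $\frac{s(1)}{s(0)}\cdot\frac{\Ptx(1,x)}{\Ptx(0,x)}$. I would then argue that within the class entailed by $\cG$ these two constraints, combined with faithfulness, determine $\pi$. Checking whether this closes is where I expect the proof to hinge.
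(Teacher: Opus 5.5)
\paragraph{The gap: Step 3 cannot be closed from the selected law.}
Your fallback does not work, because its two constraints are really one. Divide the selected odds by the identified ratio $P_{x\mid 1}/P_{x\mid 0}$ and you get the constant $\frac{\pi}{1-\pi}\cdot\frac{s(1)}{s(0)}$. That is just the odds of the tilted weight you already had. So every $\pi$ in the overlap range is consistent with some value of $s(1)/s(0)$.

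Here is an explicit construction. Start from $P$ with population joint $p(x,t)$ of $(X,T)$, and define $Q$ by:
\begin{itemize}
\item $q(x,t)=k_t\,p(x,t)$, with $k_1\pi+k_0(1-\pi)=1$;
\item $Q_{y(t)\mid x}=P_{y(t)\mid x}$;
\item $s_Q(t)\propto s_P(t)/k_t$.
\end{itemize}
Then $Q(S=1)$ is rescaled by the same proportionality constant as $s_Q$, so $Q(\bV\mid S=1)=P(\bV\mid S=1)$. Unconfoundedness holds by construction. If $P$'s propensity stays away from $c$ and $1-c$, then $c$-overlap survives for $k_t$ close to $1$, as does any margin on $s$. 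Faithfulness is generic. Yet, with $\Delta=\mu_1-\mu_0$,
\[
\tau_Q-\tau_P=(k_1-1)\,\pi\Bigl(\int\Delta\,dP_{x\mid 1}-\int\Delta\,dP_{x\mid 0}\Bigr).
\]
This is nonzero whenever $X$ genuinely confounds ($P_{x\mid 1}\neq P_{x\mid 0}$) and the effect is heterogeneous. So Condition~\ref{cond2} fails for the nonparametric class entailed by this $\cG$, and no appeal to compatibility or faithfulness can rescue the fallback.

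\paragraph{Your first route imports $P_X$.}
$X$ is a parent, hence a non-descendant, of $T$. It therefore belongs to $Z^+$, not $Z^-$, and the selection-backdoor criterion requires $Z^+$ to be measured at the population level. That requirement is exactly an external unbiased $P_X$. The no-external-data criterion fails here anyway: the clause the paper lists as item (4) needs $(T\indep Y)_{\cG_{\underline{T}}}$ because $T\in Anc(S)$, but $T\leftarrow X\rightarrow Y$ survives in $\cG_{\underline{T}}$.

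With that extra input your argument is sound, but it proves something else. It establishes Condition~\ref{cond:ext}, the setting of Corollary~\ref{coro:backdoorExt}. Alternatively, it gives Condition~\ref{cond2} only for a class whose members share a fixed $P_X$, which is the restriction the paper imposes in the proof of Proposition~\ref{prop:nondetermin_select}.

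\paragraph{The paper's proof has the same hole.}
The paper's argument for (b) uses $Y\indep S\mid T$ to get $P(y\mid t)=Q(y\mid t)$ and then equates means. That identifies $\bbE[Y\mid T=t]=\int\mu_t\,dP_{x\mid t}$, not $\bbE[Y(t)]=\int\mu_t\,dP_x$, and these differ precisely through $T\leftarrow X\rightarrow Y$. So part (b) needs an added hypothesis that pins down $P_X$. As stated, neither your proposal nor the paper's argument establishes it.
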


\begin{restatable}[Selection-backdoor w/ external unbiased covariate~\cite{correaIdentificationCausalEffects2019} as a special case of Condition~\ref{cond2}]{corollary}{CORBACKDOOREXT}\label{coro:backdoorExt}
    \looseness=-1If $S$ is a child of $X$ but not a child of $Y$, i.e. $X\rightarrow S$ and $Y\nrightarrow S$, then (a) $X$ satisfies the \emph{selection-backdoor-ext} criterion in the causal graph $\cG$, and (b) the distribution classes $(\bbPtx,\bbPxy,\bbS)$ that are entailed by $\cG$ satisfy Condition~\ref{cond2}.
\end{restatable}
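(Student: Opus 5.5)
We prove (a) by a direct d-separation argument in $\cG$. For (b) we argue by contraposition: if two compatible tuples induce the same selected distribution (so \circled{2} fails everywhere) and, as in the augmented Condition~\ref{cond:ext} of Appendix~\ref{sec:id_ext}, the same external covariate marginal, then their ATEs coincide, so \circled{1} fails as well. Corollary~\ref{coro:backdoorExt} is the analogue of Corollary~\ref{coro:backdoor} in which selection may also depend on $X$. Because of this, the ATE in the full population can only depend on the observed data together with the unbiased $P_X$, which is why the external covariate is needed.

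\textbf{Step 1 (graphical part (a)).} Since $Y\nrightarrow S$, the parents of $S$ lie in $\{X,T\}$. We assume, as in the setting of the corollary, that $Y$ and $S$ share no latent common cause and that $X$ contains the confounders of $T$ and $Y$. The Markov condition (Definition~\ref{def:markov}) then gives $Y \indep S \mid X,T$ in $\cG$. Moreover, $X$ blocks every backdoor path from $T$ to $Y$ in $\cG_{\underline{T}}$, which is unconfoundedness $Y(t)\indep T\mid X$. These two independences, together with availability of $P(X)$ from the external population, are exactly the clauses of the selection-backdoor-ext criterion of \cite{correaIdentificationCausalEffects2019}. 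This gives (a).

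\textbf{Step 2 (factorising the observed law).} By Step 1, $\Psxyt(x,y,t)=P(S{=}1\mid x,t)$ does not depend on $y$. Hence the brown quantity in \circled{2} factorises as
\[
\alpha_P\,\Ptx\Pxy=\frac{P(S{=}1\mid x,t)}{P(s)}\,\Ptx(t,x)\,P_x(x)\,P_{y(t)\mid x}(y\mid x).
\]
Dividing by its integral over $y$ shows that the selected conditional $P(y\mid x,t,S{=}1)$ equals $P_{y(t)\mid x}(y\mid x)$ wherever the selected marginal of $(x,t)$ is positive. Now suppose $P$ and $Q$ agree on all of $\cA$ in the sense of \circled{2}. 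Then $P_{y(t)\mid x}=Q_{y(t)\mid x}$ on the selected support. In particular, $\mu_P(x,t)\coloneqq\bbE_P[Y(t)\mid X=x]=\mu_Q(x,t)$ there.

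\textbf{Step 3 (conclusion).} Combining the agreement of $\mu_P$ and $\mu_Q$ with the external equality $P_X=Q_X$ gives
\[
\tau_{\Pxy}=\int\bigl(\mu_P(x,1)-\mu_P(x,0)\bigr)P_X(dx)=\tau_{\Qxy}.
\]
So \circled{1} fails, which is the contrapositive of Condition~\ref{cond2} (in its augmented form).

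\textbf{Main obstacle.} The delicate step is the support requirement in Step 2. The conditionals $P_{y(t)\mid x}$ must be pinned down for $P_X$-almost every $x$ and both values of $t$. This needs $P(S{=}1\mid x,t)>0$ on the support of $P_X$ together with $c$-overlap, $\Ptx\in\bbPtx(c)$, so that both arms are observed at every $x$. I would take $\bbS$ to be $\bbS(d)$, or the positivity clause implicit in the selection-backdoor criterion, and state this restriction explicitly. If the selection were deterministic, Step 2 would instead have to extrapolate off the selected region, as in Proposition~\ref{prop:determin_select}, which is outside the graphical setting of this corollary.
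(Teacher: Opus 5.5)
Your proposal is correct and follows the paper's own route: from $Y\indep S\mid X,T$ you equate the selected conditional $P(y\mid x,t,S{=}1)$ with $P_{y(t)\mid x}$, so equal observed laws force equal outcome conditionals, and you then average against the external marginal $P_X=Q_X$ to get $\tau_{\Pxy}=\tau_{\Qxy}$. Your caveats are exactly what the paper's proof uses tacitly, and without them the statement can fail: the argument establishes the augmented Condition~\ref{cond:ext} rather than plain Condition~\ref{cond2}, and it needs $P(S{=}1\mid x,t)>0$ on the support of $P_X$ together with $c$-overlap. One small point for (a): the selection-backdoor-ext criterion also requires that $X$ not be a descendant of $T$ (the paper lists this), so your two independences alone do not cover every clause.
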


\rev{\begin{discussion} Corrollary~\ref{coro:backdoor} (and similarly for~\ref{coro:backdoorExt}) show the cases where the selection backdoor criteria (or with external data) is satisfied by $X$ in the causal graph (statement (a)), and those cases  can be interpreted as Condition~\ref{cond2}  is satisfied in the distributions that the graph entails (statement (b)).
\end{discussion} 
}

\begin{restatable}[Extension of outcome-dependent selection~\cite{zhang2016identifiability}]{corollary}{COROUTCOMEDEP}\label{coro:outcomedep}
    If the selection mechanism is outcome-dependent, i.e., there exist $Y\rightarrow S$ in $\cG$, then the distribution classes $(\bbPtx,\bbPxy,\bbS)$ that are entailed by $\cG$ satisfy Condition~\ref{cond2} when they belong to the distribution classes as states in Proposition~\ref{prop:determin_select} and Proposition~\ref{prop:nondetermin_select}.
\end{restatable}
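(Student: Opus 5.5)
The plan is to reduce Corollary~\ref{coro:outcomedep} to Propositions~\ref{prop:determin_select} and~\ref{prop:nondetermin_select}. Those propositions already establish Condition~\ref{cond2} for the relevant distribution classes, and the selection classes they use ($\bbS^{\mathrm{det}}_{\mathbbm{1}}$ and $\bbS(d)$) allow $\Psxyt$ to be an arbitrary function of $(x,y,t)$ subject only to the indicator or lower-bound constraint. The argument therefore has two parts. First, show that the classes entailed by a Markovian and faithful $\cG$ with $Y\rightarrow S$ sit inside the classes of those propositions. Second, note that Condition~\ref{cond2} is monotone under shrinking the classes.

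\textbf{Step 1 (entailed classes).} Under the Markov condition, $S$ depends on $\bV$ only through its parents $\mathrm{Pa}(S)\subseteq\{X,T,Y\}$, with $Y\in \mathrm{Pa}(S)$. Hence $\Psxyt(x,y,t)=\Pr(S=1\mid \mathrm{Pa}(S))$ is a particular function on $\cA$, constant in the coordinates outside $\mathrm{Pa}(S)$.

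If the selection is deterministic, this function is the indicator of some set $\cA\setminus\cB$. We must have $\mu(\cB^c)>0$, since otherwise $P(S=1)=0$ and $\cP^S$ is undefined. So it lies in $\bbS^{\mathrm{det}}_{\mathbbm{1}}$.

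If the selection is nondeterministic with $\Psxyt>d$, it lies in $\bbS(d)$.

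Unconfoundedness $Y(t)\indep T\mid X$ is read off from $\cG$, since $X$ blocks the backdoor paths. Together with overlap, this gives $\Ptx\in\bbPtx(c)$. By hypothesis, $\Pxy$ lies in $\bbPxy^{C^\infty}$ or in one of the parametric families of Proposition~\ref{prop:nondetermin_select}.

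\textbf{Step 2 (monotonicity).} If $(\bbPtx',\bbPxy',\bbS')\subseteq(\bbPtx,\bbPxy,\bbS)$ componentwise and the larger triple satisfies Condition~\ref{cond2}, then so does the smaller one. The reason is that Condition~\ref{cond2} is a universally quantified implication over compatible pairs of tuples, and every compatible pair from the smaller triple is also a compatible pair from the larger one. Applying this with the larger triple taken from Proposition~\ref{prop:determin_select} or Proposition~\ref{prop:nondetermin_select} yields the claim. Theorem~\ref{thm2} then gives identifiability.

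\textbf{Main obstacle.} The delicate point is compatibility and well-definedness, not the inclusion itself. Outcome-dependent selection makes $\Psxyt$ depend on $Y(t)$ evaluated at the realized $T=t$. We must check that the normalizer $\alpha_P=\Psxyt/P(s)$ is the same object in the graph-entailed model as in the propositions, and that $P(s)>0$. I would handle this by writing the joint as
\[
P(x,y,t,s=1)=\Ptx\,\Pxy\,\Psxyt ,
\]
which is exactly the factorization entailed by $\cG$ under the unconfoundedness read off in Step 1. This shows the observed brown products coincide in both frameworks.

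I would also note explicitly what the corollary does not need: because the propositions allow arbitrary dependence of $\Psxyt$ on $(x,y,t)$, additional edges $X\rightarrow S$ or $T\rightarrow S$ are covered as well. This is why the corollary extends \cite{zhang2016identifiability}, which requires ANM and non-Gaussian noise.
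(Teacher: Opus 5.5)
Your proposal is correct and takes the same route as the paper, whose proof only states that the corollary follows directly from Propositions~\ref{prop:determin_select} and~\ref{prop:nondetermin_select}; your monotonicity step (Condition~\ref{cond2} is inherited by componentwise subclasses) and your check that an outcome-dependent $\Psxyt$ lies in $\bbS^{\mathrm{det}}_{\mathbbm{1}}$ or $\bbS(d)$ spell out what the paper leaves implicit. One small correction: plain positivity does not give $\Ptx\in\bbPtx(c)$, so uniform $c$-overlap has to come from the corollary's hypothesis (membership in the propositions' classes), not from Step~1.
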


\begin{discussion}
Note that Corollary~\ref{coro:outcomedep} extend the existing results in the literature ~\cite{zhang2016identifiability}, which state that causal effects are only identifiable under outcome-dependent selection mechanisms when the noise is non-Gaussian, modeled by an additive noise model, and also extends the result in~\cite{correa2017causal, correaIdentificationCausalEffects2019}. 
The main reason is that we incorporate assumptions on the distribution classes $\bbPtx(c), \bbPxy$ which allows us to leverage the result from truncated statistics~\cite{daskalakisStatisticalTaylorTheorem2021} for extrapolation. We remark that these assumptions are actually milder compared to some prior work,  e.g.~\citet{zhang2016identifiability}, which specifically restricts to Gaussian distributions.
\end{discussion}

\begin{restatable}[S-id~\cite{aboueiSIDCausalEffect2024} graphical criteria as a special case of Condition~\ref{cond1}]{corollary}{CORGRAPH}\label{coro:sidsub}
    If $S$ satisfies the \emph{S-id} graphical criteria in the causal graph $\cG$, in particular, $T\notin Anc(S)$,
    then the distribution classes $(\bbPtx,\bbPxy,\bbS)$ that are entailed by $\cG$ satisfy Condition~\ref{cond1}.
\end{restatable}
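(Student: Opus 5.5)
The plan is to argue by contraposition. Condition~\ref{cond1} is the sub-population analogue of Condition~\ref{cond2}: a difference in the s-ATE-sub $\tau_{\cP^S}\neq\tau_{\cQ^S}$ must force a difference in the observed products $\alpha_P\Ptx\Pxy\neq\alpha_Q\Qtx\Qxy$ at some $(x,y,t)\in\cA$. I will treat it in that form. So it suffices to show the following: under the \emph{S-id} criterion, the s-ATE-sub is a fixed functional of $\cP(\bV\mid S=1)$ alone. Then two compatible tuples entailed by $\cG$ with identical observed distributions must have equal sub-population ATEs. That is exactly the contrapositive of the implication in Condition~\ref{cond1}.

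\textbf{Step 1: locate $S$ in the graph.} Use the hypothesis $T\notin Anc(S)$ together with the Markov property (Definition~\ref{def:markov}). Since $Y$ is a descendant of $T$ (edge $T\to Y$), $Y\in Anc(S)$ would imply $T\in Anc(S)$. Hence neither $T$ nor $Y$ is an ancestor of $S$, so $\mathrm{Pa}(S)\subseteq\{X\}$ (possibly empty). The local Markov property then gives $S\indep (T, Y(0), Y(1))\mid X$. Here I will view potential outcomes as nodes in a twin network, where $Y(t)$ has parents $X$ and the exogenous noise only, which is also non-ancestral to $S$. In terms of the factorization, $\Psxyt(x,y,t)=P(S=1\mid X=x)$ does not depend on $(y,t)$.

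\textbf{Step 2: transfer unconfoundedness and overlap to the selected population.} Combine $Y(t)\indep T\mid X$ with $S\indep(T,Y(t))\mid X$ via the graphoid contraction and weak-union axioms. This yields $Y(t)\indep T\mid X, S=1$. Moreover the selected propensity satisfies $P(T=t\mid X=x,S=1)=\Ptx(t,x)\in(c,1-c)$ on the support of $P(X\mid S=1)$. Consequently, in the selected population
\[
\tau_{\cP^S}=\bbE_{x\sim P(X\mid S=1)}\bigl[\bbE[Y\mid X=x,T=1,S=1]-\bbE[Y\mid X=x,T=0,S=1]\bigr],
\]
where every term on the right is a functional of $\cP(\bV\mid S=1)$, which is the brown product $\alpha_P\Ptx\Pxy$. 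This is exactly the adjustment formula that S-id returns in this Markovian three-variable setting. Part~(a), that the S-id criterion reduces to $T\notin Anc(S)$ here, follows from the same observation that no hidden confounders are present.

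\textbf{Step 3: conclude.} If $\alpha_P\Ptx\Pxy=\alpha_Q\Qtx\Qxy$ for all $(x,y,t)$, the right-hand side above coincides for $\cP$ and $\cQ$, so $\tau_{\cP^S}=\tau_{\cQ^S}$. This is the contrapositive of Condition~\ref{cond1}.

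I expect the main obstacle to be Step~2. The conditional independence must be made rigorous for the \emph{potential} outcomes rather than only the observed $Y$, which requires the twin-network or SWIG argument. Positivity after selection must also be checked, since $P(S=1\mid x)$ may vanish on part of the covariate space. I would handle the latter by noting that the sub-population ATE only integrates over the support of $P(X\mid S=1)$, where the propensity bound $c$ is inherited unchanged.
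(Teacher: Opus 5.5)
Your argument is correct. Its core is the same adjustment-formula idea the paper uses, but the organization differs enough that a comparison is worthwhile.

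\textbf{What the paper does.} The paper writes $\bbE[Y(t)]=\iint y\,P(y(t)\mid x,s)\,P(x)\,dx\,dy$, asserting $P(y(t)\mid x)=P(y(t)\mid x,s)$ without derivation. It then substitutes $Q(y\mid x,s)$, using equality of the observed data. It closes by assuming $P_X=Q_X$ (the negation of the first disjunct of Condition~\ref{cond1}), together with an invariance $P(X(t))=P(X)$ that it attributes to $T\notin Anc(S)$. It never checks that unconfoundedness or overlap survive selection.

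\textbf{What you do.} You use $T\notin Anc(S)$ to obtain $S\indep(T,Y(0),Y(1))\mid X$. You then transfer $Y(t)\indep T\mid X$ to $S=1$ by contraction and weak union. You observe that the selected propensity equals $\Ptx$, so the bound $c$ is inherited. Finally, you integrate against $P(X\mid S=1)$. That law is itself a marginal of the observed product, so you need no separate covariate-marginal information.

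\textbf{What each buys.} Your route makes explicit the point raised in the discussion after the corollary: selection can destroy unconfoundedness in $P'=\cP(\bV\mid S=1)$, and $T\notin Anc(S)$ is exactly what prevents this. It also shows the sub-population ATE is a functional of the selected data alone. The paper's version is shorter and mirrors its proof of Corollary~\ref{coro:backdoorExt}. As written, however, it leans on a covariate marginal $P_X$ that is not part of the selected data, which is really the external-data setting.

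\textbf{Small points.}
\begin{itemize}
\item Your $\alpha$-form of Condition~\ref{cond1} is not a free redefinition. It coincides with Condition~\ref{cond1} applied to $P'$ only because, after your Step~2, $P'_{t\mid x}P'_{xy(t)}$ equals the observed joint. The disjunct $P'_X\neq Q'_X$ is then absorbed into the joint disjunct. State this explicitly.
\item Step~1 uses the edge $T\to Y$. If that edge is absent, $Y(1)=Y(0)$ and the claim is trivial, but say so.
\item The ``Part~(a)'' about S-id reducing to $T\notin Anc(S)$ is not part of this statement and is not needed. The hypothesis $T\notin Anc(S)$ is given directly.
\end{itemize}
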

\begin{discussion} \looseness=-1 Intuitively, one would think that \rev{s-ATE-sub} would hold if the original $P(\bV)$ guarantees ATE identifiability, as long as we take $P(\bV\mid S)$ as a new distribution $P'(V)$. However, this is not the case, as selection bias can lead to  $T\dependent Y(t) \mid X$ in distribution $P'(V)$, which violates the unconfoundedness (Assumption~\ref{assumption:standard}) for ATE identifiability. 
\end{discussion}

\section{Estimation}

When the distribution classes satisfy Proposition~\ref{prop:determin_select} or~\ref{prop:nondetermin_select}, we can estimate \rev{s-ATE-full} by following a three-stage procedure summarized in Algorithm~\ref{alg:est}.

First, we identify the region $\cB$ where the data is not subject to deterministic censoring, i.e. where the propensity score $\hat{e}(x) = \hat{P}(t=1|x)$ satisfies $c < \hat{e}(x) < 1-c$. Data within this region is only subject to non-deterministic selection bias and can be used to estimate the conditional outcome distribution $\hat{P}(y|x,t)$, correcting for the selection bias using the learned selection bias function $\hat{\beta}(x,y,t)$. We describe two methods for estimation: maximum likelihood (Section~\ref{sec:mle}) and score matching (Section~\ref{sec:sm}). Finally, the ATE is then estimated by sampling from the estimated population distribution and performing counterfactual inference using the learned outcome model.

\subsection{Maximum likelihood estimation (MLE)}\label{sec:mle}
To estimate the parameters $\theta = \{\theta_f, \theta_\beta\}$, we minimize the negative log-likelihood of the observed data within the region $\mathcal{B}$ with common support:
\begin{equation}\label{eqn:mle}
    \begin{aligned}
    L(\theta) = &- \sum_{i \in \mathcal{D}_{\mathcal{B}}} \left( \log \hat{P}(x_i, y_i,t_i\mid s=1) \right) \\ 
    =&- \sum_{i \in \mathcal{D}_{\mathcal{B}}} \left( \log \hat{P}(y_i | x_i, t_i) + \log \hat{\beta}(x_i,y_i, t_i) \right. \\ &\quad\quad \left. \hat{P}(x_i) + \hat{e}(x_i)\right)
    \end{aligned}
\end{equation}
During optimization, the solution to $\hat{P}(y|x,t)$ and $\hat{\beta}(x,y,t)$ may suffer from some indeterminacy, e.g., differ by a constant or goes to $\infty$. To find the solution for which the biased selection procedure is as weak as possible, we regularize the selection function to make sure $\log(\hat{\beta})$ is close to $0$. Therefore, the final objective function is 
$$L(\theta) + \lambda \sum_{i \in \mathcal{D}_{\mathcal{B}}} (\log \|\hat{\beta}(x_i,y_i, t_i)\|_2^2),$$ where $\lambda$ is a hyperparameter controlling the regularization.

\subsection{Score Matching}\label{sec:sm}
\looseness=-1  Alternatively, we can estimate the parameters by score matching~\cite{hyvarinen2005estimation}, i.e., by minimizing the expected squared distance between the gradient of the log-density given by the model and the gradient of the log-density of the observed data. This procedure has the advantage that it aims to match the shape of the two densities 
and is invariant to the scaling factor of the model density.

We define the score function for any random variable $Z$ as $\psi(z ; \theta)=\left(\psi_1(z ; \theta), \ldots, \psi_m(z ; \theta)\right)^{\top}= \left(\frac{\partial \log p_Z(z ; \theta)}{\partial z_1}, \ldots, \frac{\partial \log p_Z(z ; \theta)}{\partial z_m}\right)^{\top}$. 
The objective function for score matching is then given by 
\begin{equation}\label{eqn:sm}
\begin{aligned}
    J(\theta) &= \frac{1}{N} \sum_{i \in \mathcal{D}_{\mathcal{B}}}^N \Big( \frac{1}{2} \| \psi(x_i, y_i, t_i; \theta) \|^2 \\ 
     &\qquad + \text{tr}(\nabla \psi(x_i, y_i, t_i; \theta)) \Big)
\end{aligned}
\end{equation}


\begin{table*}
\centering
\begin{tabular}{lcccccc}
\toprule
 & {IPW} & {\begin{tabular}[c]{@{}c@{}}Polynomial\end{tabular}} & {MLE} & {MLE $+\beta$} & {\begin{tabular}[c]{@{}c@{}}Score\\ Matching\end{tabular}} & {\begin{tabular}[c]{@{}c@{}}Score\\ Matching $+\beta$\end{tabular}} \\ 
\midrule
{Deterministic Selection}     & \myno &  \myyes &  \myyes  &  \myyes&  \myyes &  \myyes  \\ 
{Non-deterministic Selection} & \myno & \myno & \myno  &  \myyes & \myno  &  \myyes\\ 
\bottomrule
\vspace{-10pt}
\end{tabular}\caption{Different approaches for estimating ATE and the corresponding selection bias that they aim to correct. IPW uses the naive inverse propensity weighting on the selected population, completely neglecting the bias that selection poses. Polynomial estimator assumes a polynomial relation from the covariate to the outcome, which is suitable for extrapolation, i.e. tackling deterministic selction, but not adjusting for non-deterministic selection. MLE and score matching without correction are also suitable for non-deterministic selection, given the underlying distribution satisfies our smoothness condition (Proposition~\ref{prop:determin_select}) and the estimators can approximate such densities well. Finally, the MLE and score matching with correction ($+\beta$) that estimates the selection function are the proposed methods suitable for \emph{any} selection (both deterministic and non-deterministic).}\label{tab:estimator_comp}
\vspace{-3em}
\end{table*}

\begin{algorithm}[htbp]
\caption{ATE Estimation under Selection Bias}
\label{alg:est}
\begin{algorithmic}
    \INPUT  Samples under selection bias $\mathcal{D}=\{X_i, Y_i, T_i\}_{i=1}^N$.
    \OUTPUT ATE on the whole population.

    \STATE Estimate propensity score $\hat{e}(x) = \hat{P}(t=1|x)$.
    \SET  region $\mathcal{B} = \{x \mid c < \hat{e}(x) < 1-c\}$.
    \STATE Filter dataset: $\mathcal{D}_{\mathcal{B}} \leftarrow \{ (x_i, y_i, t_i) \mid x_i \in \mathcal{B} \}$.

    \STATE Estimate $\hat{P}(y|x,t)$ either by minimizing the negative log-likelihood for the observed (selected) data  $L(\theta)^\mathrm{MLE}$,
    or by score matching and minimizing  $J(\theta)$.

    \STATE Estimate population density $\hat{P}_{pop}(x)$ by reweighting observed samples $x_i$ with $1/\beta(x_i, y_i, t_i)$.
    \STATE Calculate \\ $\hat{\tau}_P=\mathbb{E}_{x \sim \hat{P}_{pop}(x)} \left[ \mathbb{E}_{y \sim \hat{P}(y|x,t=1)}[y] - \mathbb{E}_{y \sim \hat{P}(y|x,t=0)}[y] \right]$
    \STATE Return $\hat{\tau}_P$.
\end{algorithmic}
\end{algorithm}
\vspace{-10pt}

Let $p(y|x)$ be the target unbiased density and $\tilde{p}(y|x)$ be the biased density observed in the dataset. We assume a relationship governed by a selection probability function $\beta(x, y, t)$:
\begin{equation}
    \tilde{p}(y|x) \propto p(y|x) \cdot \beta(x, y, t)
\end{equation}
We parameterize the underlying score function as $s_\theta(x, y)$ and the selection weight as $\beta_\phi(x, y, t)$. By taking the gradient of the log-density with respect to $y$, the score of the \textit{observed} distribution, $\psi$, decomposes additively:
\begin{equation}
\begin{aligned}
    \psi(x, y, t) &\triangleq \nabla_y \log \tilde{p}(y|x) \\
    &= s_\theta(x, y) + \nabla_y \log \beta_\phi(x, y, t)
\end{aligned}
\end{equation}
Crucially, the intractable partition function of the density depends only on $x$ and $t$, and thus vanishes under $\nabla_y$.

We jointly optimize $\theta$ and $\phi$ by minimizing the following objective over the observed samples:
\begin{equation}
\begin{aligned}
    \mathcal{L}(\theta, \phi)
    &= \mathbb{E}_{\mathcal{D}_{\text{obs}}} \Big[
    \underbrace{ \tfrac{1}{2} \| \psi \|^2 + \mathrm{tr}(\nabla_y \psi) }_{\text{Hyvärinen Score Matching}} \\
    &\qquad
    - \lambda_1 \underbrace{\log \beta_\phi}_{\text{Likelihood}}
    + \lambda_2 \underbrace{\| \beta_\phi \|^2}_{\text{Regularization}}
    \Big]
\end{aligned}
\end{equation}
\looseness=-1  where $\psi$ is evaluated at $(x, y, t)$. The first term ensures the composite model fits the observed data structure. The second term ($-\log \beta_\phi$) maximizes the likelihood that the observed samples were selected (i.e., $\beta \to 1$ for observed data). The third term regularizes the selection weights.

\begin{figure}[htbp]
    \centering
    \begin{subfigure}{0.42\textwidth} 
        \centering
        \includegraphics[width=\textwidth]{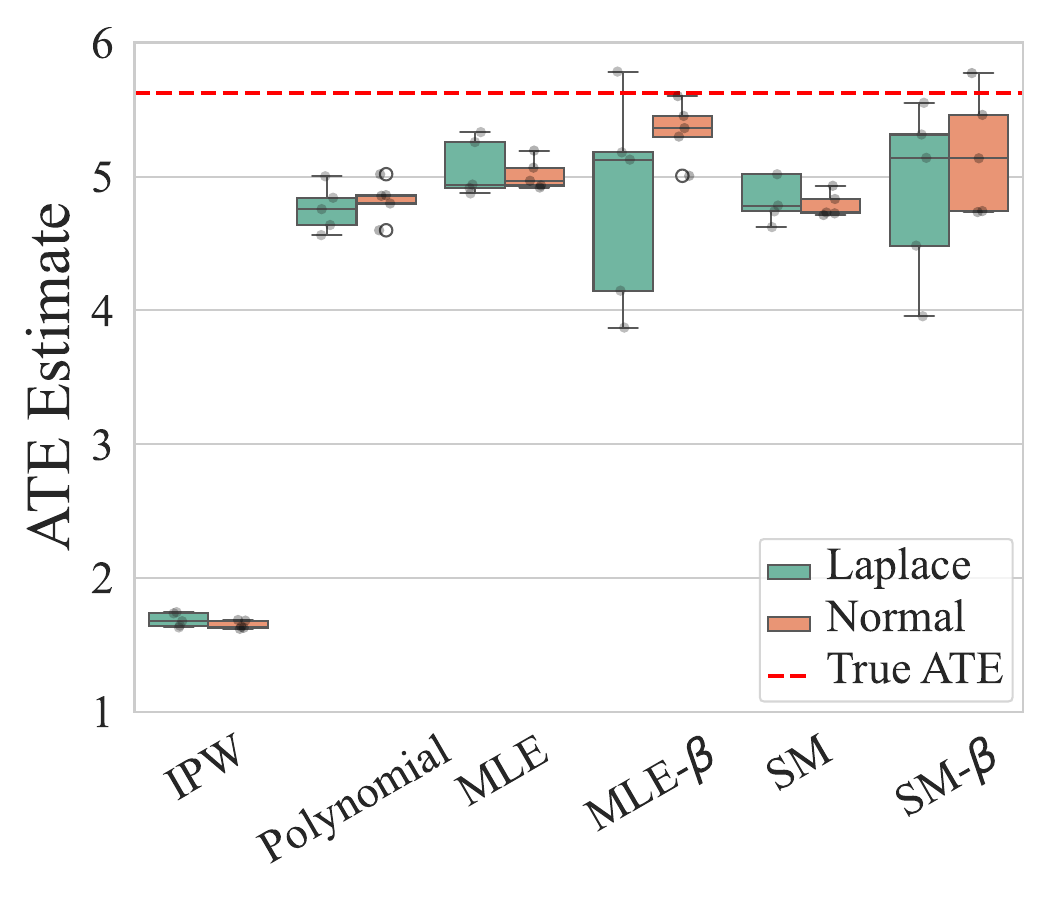}
        \caption{Result of ATE: additive Gaussian/Laplace noise}
        \label{fig:ate_addi}
    \end{subfigure}
    \begin{subfigure}{0.42\textwidth} 
        \centering
        \includegraphics[width=\textwidth]{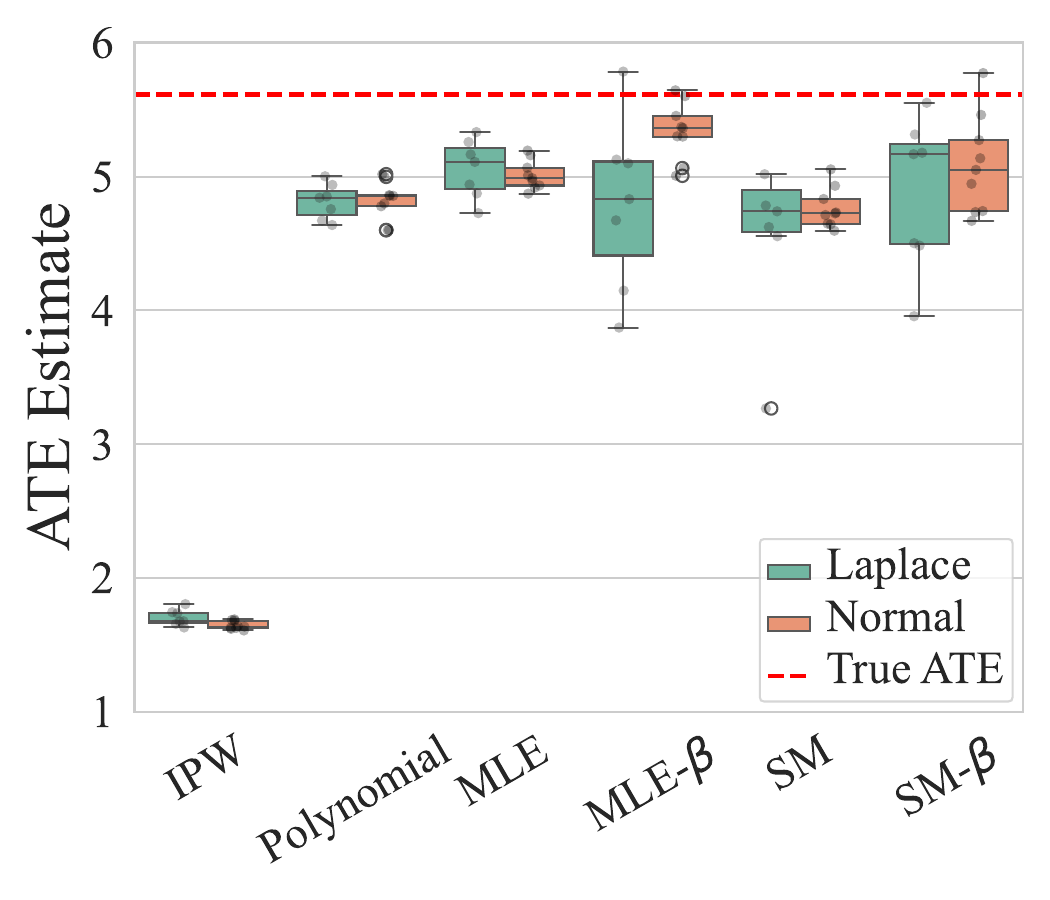}
        \caption{Result of ATE: multiplicative Gaussian/Laplace noise}
        \label{fig:ate_multi}
    \end{subfigure}
    \caption{Comparison of ATE Estimation under different noise distributions and function types, when both deterministic and non-deterministic selection are applied. We present results for additive Gaussian and Laplace noise, and extend beyond additive noise to multiplicative noise.  Overall, we observe that vanilla application of IPW leads to significantly biased estimates, while our approaches significantly improve, regardless of the specific estimator (in particular, the agnostic methods MLE+$\beta$ and SM$+\beta$).}\label{fig:result_ate_joint}
\end{figure}

\section{Experiments}

\looseness=-1 In this section, we evaluate the performance of our proposed methods on both synthetic and real-world datasets. In the synthetic setting (Section~\ref{sec:synthetic}), we focus on estimating the ATE under various noise distributions and function forms\footnote{Code implementation on the synthetic dataset: \url{https://github.com/EvieQ01/causal\_effect\_id\_selection\_bias}}.  Then in Section~\ref{sec:real_world}, we perform semi-synthetic experiments on one of the largest biomedical data resources All of Us\footnote{https://www.researchallofus.org/} (AoS). The implementation details and additional experimental results can be found in Appendix~\ref{sec:experiment_details}. 

\subsection{Synthetic Data}\label{sec:synthetic}
The simulation data is generated using the structural equation model: $Y=f(T, X, E)$, while we draw $X$ from Uniform distribution and $E$ from different noise distributions. We consider both additive noise (i.e., $Y=f(T, X)+E$) and multiplicative noise (i.e., $Y=(1+E)\cdot f(T, X)$) settings. The treatment variable $T$ is generated as $T=g(X)$. We then apply both deterministic and non-deterministic selection mechanism. The specific functional forms of $f, g$ and selection mechanisms are detailed in  Appendix~\ref{sec:experiment_details}. 

\vspace{-0.5em}
\paragraph{Baseline Methods}
We compare our proposed method with two baselines: Naive inverse propensity weighting (IPW) and polynomial regression. The Naive IPW method estimates the ATE only using the observed part, neglecting the bias exhibited in the data. The polynomial method fits a polynomial regression model  $\hat{f}_t(x)=\bbE(Y(t)\mid X=x)$ to adjust for covariates.

\paragraph{Ablation} We denote the final corrected version of our proposed methods as MLE$+\beta$ and SM$+\beta$. In addition, we also include the results from the proposed methods MLE (Section~\ref{sec:mle}) and SM (Section~\ref{sec:sm}) without applying the selection bias adjustment step, denoted as MLE and SM.  Without the selection bias adjustment, these methods can estimate the ATE more accurately than the baselines, but still suffer from bias due to the nondeterministic selection mechanism.
In Table~\ref{tab:estimator_comp}, we present different ATE estimator and whether they incorporate selection bias adjustment for deterministic and non-deterministic selection. \camerarev{We refer to Appendix~\ref{app:ablation} for an ablation study on different functional forms and varying selection strength.}

\paragraph{Results} For each run, we generate a dataset of size $N=5000$ and apply deterministic and non-deterministic selection. After selection, there is approximately $3000$ samples. We repeat the experiment $5$ times and use boxplots to summarize the results (Figure~\ref{fig:result_ate_joint}). We demonstrate that our proposed methods (MLE$+\beta$ and SM$+\beta$) consistently outperform the baselines (IPW and Polynomial regression) across a combination of noise types (Gaussian and Laplace) and noise function forms (additive and multiplicative). 

\looseness=-1
When comparing MLE$+\beta$ to its uncorrected counterpart MLE (similarly for SM$+\beta$ and SM), we observe that the $+\beta$ versions generally performs better in reducing the error but suffer from bigger variance.
 When comparing MLE$+\beta$ and SM$+\beta$, we observe that SM$+\beta$ generally performs slightly better in terms of lower variances, but not by a big margin. Therefore the choice between them can be subjective. 

\subsection{Semi-synthetic Data}\label{sec:real_world}
We evaluate our proposed methods on semi-synthetic data from the All of Us Research Hub, a data-sharing platform for the National Institutes of Health (NIH) All of Us Research Program. It is a longitudinal cohort study aimed at enrolling over one million participants to build one of the most diverse health databases in history, specifically targeting populations historically underrepresented in biomedical research.

\looseness=-1 Among the variables, we retrieve Type 2 diabetes (T2D) as the exposure, and BMI as the covariate. Since we do not have the true causal graph between the variables, and the wide presence of genetics~\cite{pingault2021genetic,zhao2024adjusting} as hidden confounders, we generate a synthetic outcome $Y$ based on the structural equation model $Y=f(T, X) + E$. We argue that when the data is composed in this manner, we have the distribution over covariate and exposure that reflects the real-world scenario, while we have the ground truth of ATE since we generate the outcome and we can make sure that the conditions proposed in Proposition~\ref{prop:determin_select} and Proposition~\ref{prop:nondetermin_select} are met. The outcome could be imagined as Hemoglobin A1c (HbA1c) level, but note that we do not intend to draw any conclusions about the ATE from the synthetic data for any public health advice purposes. 
\begin{figure}
    \centering
    \includegraphics[width=0.45\textwidth]{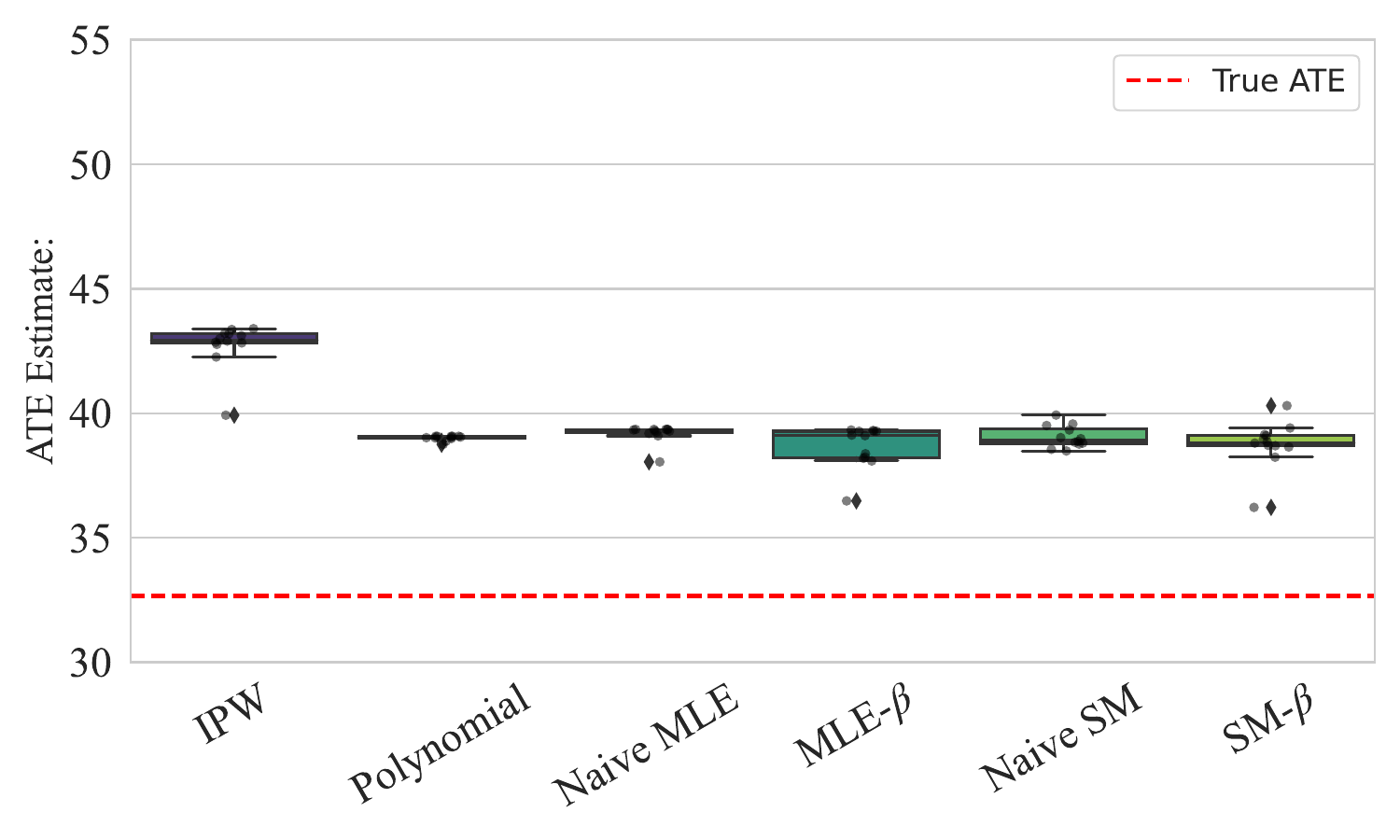}
    \caption{ATE estimation results on All of Us dataset. Our approach significantly decreases the bias, but not entirely. We suspect this is due to the complexity of this real-world distribution, e.g., low propensity scores ($\sim 0.05$), which makes the overlap very weak and leads to challenging estimation.}
    \label{fig:allofus}
    \vspace{-2em}
\end{figure}

\looseness=-1In particular, we retrieve the last recorded BMI and T2D diagnosis status from each individual. We consider the value of BMI that is between $10$ and $80$ (exclude individuals with extreme BMI values to avoid outliers, the full support of BMI is $[0, 100]$). We then generate the outcome $Y$ using
$N=10000$ samples from the All of Us dataset. We consider the additive Gaussian noise setting, since the underlying distribution is already challenging. The functional form of $f$ and the selection mechanism are detailed in Appendix~\ref{sec:experiment_details}. After applying a non-deterministic selection mechanism, there are approximately $2500$ samples left. We repeat the experiment $5$ times and summarize the results in Figure~\ref{fig:allofus}. Similar to the synthetic experiments, our proposed methods (MLE$+\beta$ and SM$+\beta$) consistently outperform the baselines. While the bias is not fully corrected, we remark that this experiment is based on the distributions of real-world data, which can be arbitrarily complex, making the estimation particularly difficult. Nevertheless, our method substantially decreases the bias, making it attractive even in these challenging settings as no alternative exists.
\section{Conclusion}\label{sec:conclusion}
\looseness=-1In this work, we have explored the identifiability of ATE under general selection bias, moving beyond traditional graphical criteria to a more holistic, distribution-based understanding. We established necessary and sufficient conditions for s-ATE-full identifiability (Condition~\ref{cond2}), and provide instantiation of distributions that satisfy such conditions. From there, we characterized the classes of propensity scores, covariate-outcome distributions, and selection probabilities required for recovery. Different than prior work, our solution leverages mild distributional assumptions that are independent of the specific graphical structure and on which variables the selection happens. This enables us to give identifiability results for settings that otherwise would not be identifiable and where no other solution currently exists~~\cite{zhang2016identifiability, correa2017causal, correaIdentificationCausalEffects2019}. At the same time, our Condition~\ref{cond2} is necessary and independent of specific debiasing approaches, and we have shown in Section~\ref{sec:connections} that existing identification results naturally follow it.

\paragraph{Limitation and future work.} The limitation of our work lies in the assumptions made regarding the distribution classes that we need in order to guarantee s-ATE-full identifiability. Even though these assumption are already weaker than previous works~\cite{correa2017causal,correaIdentificationCausalEffects2019}, future work could still consider relaxing these assumptions. It would also be interesting to employ this framework to causal discovery under selection bias, which is only partially answered in previous works~\cite{spirtes2001causation,zhang2016identifiability}, and consider also to take advantage of the findings for experimental design\cite{he2008active,hyttinen2013experiment,kocaoglu2017cost,ghassami2018budgeted}, e.g. the strategy for active sampling given a limited budget while still guaranteeing s-ATE-full identifiability. Further, we think there is room for improvement in the estimation techniques, which may fall short on complex real-world data distributions. Our paper laid the groundwork for methodological advances in this direction, having defined the sufficient conditions for identifiability and provided initial estimators.

\section*{Impact Statement}

This paper presents work whose goal is to advance the field of Machine Learning. There are many potential societal consequences of our work, as selection bias is an important problem in causal inference and epidemiology, so our theory can have a vast range of positive applications. We do not find any risk that we feel must be specifically highlighted here.

\bibliography{ref.bib}

@book{imbens2015causal,
  title={Causal inference in statistics, social, and biomedical sciences},
  author={Imbens, Guido W and Rubin, Donald B},
  year={2015},
  publisher={Cambridge university press}
}

@misc{aboueiSIDCausalEffect2024,
  title = {S-{{ID}}: {{Causal Effect Identification}} in a {{Sub-Population}}},
  shorttitle = {S-{{ID}}},
  author = {Abouei, Amir Mohammad and Mokhtarian, Ehsan and Kiyavash, Negar},
  year = 2024,
  month = jan,
  number = {arXiv:2309.02281},
  eprint = {2309.02281},
  primaryclass = {cs},
  publisher = {arXiv},
  doi = {10.48550/arXiv.2309.02281},
  urldate = {2025-12-02},
  archiveprefix = {arXiv}
}

@article{bareinboimControllingSelectionBias2011,
  title = {Controlling {{Selection Bias}} in {{Causal Inference}}},
  author = {Bareinboim, Elias and Pearl, Judea},
  year = 2011,
  month = aug,
  journal = {AAAI},
  volume = {25},
  number = {1},
  pages = {1754--1755},
  issn = {2374-3468, 2159-5399},
  doi = {10.1609/aaai.v25i1.8056},
  urldate = {2024-09-28},
  langid = {english}
}

@inproceedings{van2014constructing,
  title={Constructing Separators and Adjustment Sets in Ancestral Graphs.},
  author={Van der Zander, Benito and Liskiewicz, Maciej and Textor, Johannes},
  booktitle={CI@ UAI},
  pages={11--24},
  year={2014}
}

@article{correaGeneralizedAdjustmentConfounding2018a,
  title = {Generalized {{Adjustment Under Confounding}} and {{Selection Biases}}},
  author = {Correa, Juan and Tian, Jin and Bareinboim, Elias},
  year = 2018,
  month = apr,
  journal = {AAAI},
  volume = {32},
  number = {1},
  issn = {2374-3468, 2159-5399},
  doi = {10.1609/aaai.v32i1.12125},
  urldate = {2024-10-04},
  langid = {english}
}

@article{correaIdentificationCausalEffects2019,
  title = {Identification of {{Causal Effects}} in the {{Presence}} of {{Selection Bias}}},
  author = {Correa, Juan D. and Tian, Jin and Bareinboim, Elias},
  year = 2019,
  month = jul,
  journal = {AAAI},
  volume = {33},
  number = {01},
  pages = {2744--2751},
  issn = {2374-3468, 2159-5399},
  doi = {10.1609/aaai.v33i01.33012744},
  urldate = {2024-10-04},
  copyright = {https://www.aaai.org},
  langid = {english}
}

@misc{daiWhenSelectionMeets2025,
  title = {When {{Selection Meets Intervention}}: {{Additional Complexities}} in {{Causal Discovery}}},
  shorttitle = {When {{Selection Meets Intervention}}},
  author = {Dai, Haoyue and Ng, Ignavier and Sun, Jianle and Tang, Zeyu and Luo, Gongxu and Dong, Xinshuai and Spirtes, Peter and Zhang, Kun},
  year = 2025,
  month = mar,
  number = {arXiv:2503.07302},
  eprint = {2503.07302},
  primaryclass = {cs},
  publisher = {arXiv},
  doi = {10.48550/arXiv.2503.07302},
  urldate = {2026-01-19},
  archiveprefix = {arXiv}
}

@misc{daskalakisStatisticalTaylorTheorem2021,
  title = {A {{Statistical Taylor Theorem}} and {{Extrapolation}} of {{Truncated Densities}}},
  author = {Daskalakis, Constantinos and Kontonis, Vasilis and Tzamos, Christos and Zampetakis, Manolis},
  year = 2021,
  month = jun,
  number = {arXiv:2106.15908},
  eprint = {2106.15908},
  primaryclass = {math},
  publisher = {arXiv},
  doi = {10.48550/arXiv.2106.15908},
  urldate = {2026-01-19},
  archiveprefix = {arXiv}
}

@article{elwertEndogenousSelectionBias2014,
  title = {Endogenous {{Selection Bias}}: {{The Problem}} of {{Conditioning}} on a {{Collider Variable}}},
  shorttitle = {Endogenous {{Selection Bias}}},
  author = {Elwert, Felix and Winship, Christopher},
  year = 2014,
  month = jul,
  journal = {Annu Rev Sociol},
  volume = {40},
  pages = {31--53},
  issn = {0360-0572},
  doi = {10.1146/annurev-soc-071913-043455},
  urldate = {2024-09-02},
  pmcid = {PMC6089543},
  pmid = {30111904}
}

@article{greenlandQuantifyingBiasesCausal2003,
  title = {Quantifying {{Biases}} in {{Causal Models}}: {{Classical Confounding}} vs {{Collider-Stratification Bias}}},
  shorttitle = {Quantifying {{Biases}} in {{Causal Models}}},
  author = {Greenland, Sander},
  year = 2003,
  month = may,
  journal = {Epidemiology},
  volume = {14},
  number = {3},
  pages = {300},
  issn = {1044-3983},
  doi = {10.1097/01.EDE.0000042804.12056.6C},
  urldate = {2024-05-05},
  langid = {american}
}

@inproceedings{jaber2019causal,
  title={Causal identification under markov equivalence: Completeness results},
  author={Jaber, Amin and Zhang, Jiji and Bareinboim, Elias},
  booktitle={International Conference on Machine Learning},
  pages={2981--2989},
  year={2019},
  organization={PMLR}
}

@inproceedings{jaberGraphicalCriterionEffect2018,
  title = {A {{Graphical Criterion}} for {{Effect Identification}} in {{Equivalence Classes}} of {{Causal Diagrams}}},
  booktitle = {Proceedings of the {{Twenty-Seventh International Joint Conference}} on {{Artificial Intelligence}}},
  author = {Jaber, Amin and Zhang, Jiji and Bareinboim, Elias},
  year = 2018,
  month = jul,
  pages = {5024--5030},
  publisher = {International Joint Conferences on Artificial Intelligence Organization},
  address = {Stockholm, Sweden},
  doi = {10.24963/ijcai.2018/697},
  urldate = {2025-02-13},
  isbn = {978-0-9992411-2-7},
  langid = {english}
}

@article{mathurSimpleGraphicalRules2025,
  title = {Simple Graphical Rules for Assessing Selection Bias in General-Population and Selected-Sample Treatment Effects},
  author = {Mathur, Maya B and Shpitser, Ilya},
  year = 2025,
  month = jan,
  journal = {Am J Epidemiol},
  volume = {194},
  number = {1},
  pages = {267--277},
  issn = {0002-9262},
  doi = {10.1093/aje/kwae145},
  urldate = {2026-01-19}
}

@article{perkovicCompleteGraphicalCharacterization,
  title = {Complete {{Graphical Characterization}} and {{Construction}} of {{Adjustment Sets}} in {{Markov Equivalence Classes}} of {{Ancestral Graphs}}},
  author = {Perkovi{\'c}, Emilija and Textor, Johannes and Kalisch, Markus},
  langid = {english}
}

@article{swansonUKBiobankSelection2012,
  title = {The {{UK Biobank}} and Selection Bias},
  author = {Swanson, James M.},
  year = 2012,
  month = jul,
  journal = {The Lancet},
  volume = {380},
  number = {9837},
  pages = {110},
  publisher = {Elsevier},
  issn = {0140-6736, 1474-547X},
  doi = {10.1016/S0140-6736(12)61179-9},
  urldate = {2026-01-19},
  langid = {english},
  pmid = {22794246}
}

@article{vanaltenReweightingUKBiobank2024,
  title = {Reweighting {{UK Biobank}} Corrects for Pervasive Selection Bias Due to Volunteering},
  author = {{van Alten}, Sjoerd and Domingue, Benjamin W and Faul, Jessica and Galama, Titus and Marees, Andries T},
  year = 2024,
  month = may,
  journal = {Int J Epidemiol},
  volume = {53},
  number = {3},
  pages = {dyae054},
  issn = {0300-5771},
  doi = {10.1093/ije/dyae054},
  urldate = {2026-01-17},
  pmcid = {PMC11076923},
  pmid = {38715336}
}

@inproceedings{versteegLocalConstraintBasedCausal2022,
  title = {Local {{Constraint-Based Causal Discovery}} under {{Selection Bias}}},
  booktitle = {Proceedings of the {{First Conference}} on {{Causal Learning}} and {{Reasoning}}},
  author = {Versteeg, Philip and Mooij, Joris and Zhang, Cheng},
  year = 2022,
  month = jun,
  pages = {840--860},
  publisher = {PMLR},
  issn = {2640-3498},
  urldate = {2024-08-29},
  langid = {english}
}

@misc{zhangGraphicalRulesRecovering2025,
  title = {On the {{Graphical Rules}} for {{Recovering}} the {{Average Treatment Effect Under Selection Bias}}},
  author = {Zhang, Yichi and Lu, Haidong},
  year = 2025,
  month = dec,
  number = {arXiv:2502.00924},
  eprint = {2502.00924},
  primaryclass = {stat},
  publisher = {arXiv},
  doi = {10.48550/arXiv.2502.00924},
  urldate = {2026-01-19},
  archiveprefix = {arXiv}
}

@inproceedings{zhang2016identifiability,
  title={On the Identifiability and Estimation of Functional Causal Models in the Presence of Outcome-Dependent Selection.},
  author={Zhang, Kun and Zhang, Jiji and Huang, Biwei and Sch{\"o}lkopf, Bernhard and Glymour, Clark},
  booktitle={UAI},
  year={2016}
}

@misc{zhengDetectingIdentifyingSelection2024b,
  title = {Detecting and {{Identifying Selection Structure}} in {{Sequential Data}}},
  author = {Zheng, Yujia and Tang, Zeyu and Qiu, Yiwen and Sch{\"o}lkopf, Bernhard and Zhang, Kun},
  year = 2024,
  month = jun,
  number = {arXiv:2407.00529},
  eprint = {2407.00529},
  primaryclass = {cs},
  publisher = {arXiv},
  doi = {10.48550/arXiv.2407.00529},
  urldate = {2026-01-19},
  archiveprefix = {arXiv}
}

@article{smith2020selection,
  title={Selection mechanisms and their consequences: understanding and addressing selection bias},
  author={Smith, Louisa H},
  journal={Current Epidemiology Reports},
  volume={7},
  number={4},
  pages={179--189},
  year={2020},
  publisher={Springer}
}

@article{dai2025latent,
  title={Latent variable causal discovery under selection bias},
  author={Dai, Haoyue and Qiu, Yiwen and Ng, Ignavier and Dong, Xinshuai and Spirtes, Peter and Zhang, Kun},
  journal={arXiv preprint arXiv:2512.11219},
  year={2025}
}

@article{heckman1979sample,
  title={Sample selection bias as a specification error},
  author={Heckman, James J},
  journal={Econometrica: Journal of the econometric society},
  pages={153--161},
  year={1979},
  publisher={JSTOR}
}

@inproceedings{versteeg2022local,
  title={Local constraint-based causal discovery under selection bias},
  author={Versteeg, Philip and Mooij, Joris and Zhang, Cheng},
  booktitle={Conference on Causal Learning and Reasoning},
  pages={840--860},
  year={2022},
  organization={PMLR}
}

@article{hernan2004structural,
  title={A structural approach to selection bias},
  author={Hern{\'a}n, Miguel A and Hern{\'a}ndez-D{\'\i}az, Sonia and Robins, James M},
  journal={Epidemiology},
  pages={615--625},
  year={2004},
  publisher={JSTOR}
}

@article{zhang2008completeness,
  title={On the completeness of orientation rules for causal discovery in the presence of latent confounders and selection bias},
  author={Zhang, Jiji},
  journal={Artificial Intelligence},
  volume={172},
  number={16-17},
  pages={1873--1896},
  year={2008},
  publisher={Elsevier}
}

@article{stone2024population,
  title={A population-based investigation of participation rate and self-selection bias in momentary data capture and survey studies},
  author={Stone, Arthur A and Schneider, Stefan and Smyth, Joshua M and Junghaenel, Doerte U and Couper, Mick P and Wen, Cheng and Mendez, Marilyn and Velasco, Sarah and Goldstein, Sarah},
  journal={Current Psychology},
  volume={43},
  number={3},
  pages={2074--2090},
  year={2024},
  publisher={Springer}
}

@book{spirtes2001causation,
  title={Causation, prediction, and search},
  author={Spirtes, Peter and Glymour, Clark and Scheines, Richard},
  year={2001},
  publisher={MIT press}
}

@book{pearl2009causality,
  title={Causality},
  author={Pearl, Judea},
  year={2009},
  publisher={Cambridge university press}
}

@article{cai2025makes,
  title={What Makes Treatment Effects Identifiable? Characterizations and Estimators Beyond Unconfoundedness},
  author={Cai, Yang and Kalavasis, Alkis and Mamali, Katerina and Mehrotra, Anay and Zampetakis, Manolis},
  journal={arXiv preprint arXiv:2506.04194},
  year={2025}
}

@inproceedings{correa2017causal,
  title={Causal effect identification by adjustment under confounding and selection biases},
  author={Correa, Juan and Bareinboim, Elias},
  booktitle={Proceedings of the AAAI Conference on Artificial Intelligence},
  volume={31},
  number={1},
  year={2017}
}

@article{hyvarinen2005estimation,
  title={Estimation of non-normalized statistical models by score matching.},
  author={Hyv{\"a}rinen, Aapo and Dayan, Peter},
  journal={Journal of Machine Learning Research},
  volume={6},
  number={4},
  year={2005}
}

@article{rosenbaum1983central,
  title={The central role of the propensity score in observational studies for causal effects},
  author={Rosenbaum, Paul R and Rubin, Donald B},
  journal={Biometrika},
  volume={70},
  number={1},
  pages={41--55},
  year={1983},
  publisher={Oxford University Press}
}

@article{mathur2025simple,
  title={Simple graphical rules for assessing selection bias in general-population and selected-sample treatment effects},
  author={Mathur, Maya B and Shpitser, Ilya},
  journal={American Journal of Epidemiology},
  volume={194},
  number={1},
  pages={267--277},
  year={2025},
  publisher={Oxford University Press}
}

@misc{hernan2010causal,
  title={Causal inference},
  author={Hern{\'a}n, Miguel A and Robins, James M},
  year={2010},
  publisher={CRC Boca Raton, FL}
}

@article{lu2022toward,
  title={Toward a clearer definition of selection bias when estimating causal effects},
  author={Lu, Haidong and Cole, Stephen R and Howe, Chanelle J and Westreich, Daniel},
  journal={Epidemiology},
  volume={33},
  number={5},
  pages={699--706},
  year={2022},
  publisher={LWW}
}

@inproceedings{ghassami2018budgeted,
  title={Budgeted experiment design for causal structure learning},
  author={Ghassami, AmirEmad and Salehkaleybar, Saber and Kiyavash, Negar and Bareinboim, Elias},
  booktitle={International Conference on Machine Learning},
  pages={1724--1733},
  year={2018},
  organization={PMLR}
}

@inproceedings{kocaoglu2017cost,
  title={Cost-optimal learning of causal graphs},
  author={Kocaoglu, Murat and Dimakis, Alex and Vishwanath, Sriram},
  booktitle={International Conference on Machine Learning},
  pages={1875--1884},
  year={2017},
  organization={PMLR}}

@article{hyttinen2013experiment,
  title={Experiment selection for causal discovery},
  author={Hyttinen, Antti and Eberhardt, Frederick and Hoyer, Patrik O},
  journal={The Journal of Machine Learning Research},
  volume={14},
  number={1},
  pages={3041--3071},
  year={2013}
}

@article{he2008active,
  title={Active learning of causal networks with intervention experiments and optimal designs},
  author={He, Yang-Bo and Geng, Zhi},
  journal={Journal of Machine Learning Research},
  volume={9},
  number={Nov},
  pages={2523--2547},
  year={2008}
}

@inproceedings{bareinboim2012controlling,
  title={Controlling selection bias in causal inference},
  author={Bareinboim, Elias and Pearl, Judea},
  booktitle={Artificial Intelligence and Statistics},
  pages={100--108},
  year={2012},
  organization={PMLR}
}

@inproceedings{bareinboim2014recovering,
  title={Recovering from selection bias in causal and statistical inference},
  author={Bareinboim, Elias and Tian, Jin and Pearl, Judea},
  booktitle={Probabilistic and causal inference: The works of Judea Pearl},
  pages={433--450},
  year={2014}
}

@inproceedings{bareinboim2015recovering,
  title={Recovering causal effects from selection bias},
  author={Bareinboim, Elias and Tian, Jin},
  booktitle={Proceedings of the AAAI Conference on Artificial Intelligence},
  volume={29},
  year={2015}
}

@inproceedings{bareinboim2022recovering,
  title={Recovering from selection bias in causal and statistical inference},
  author={Bareinboim, Elias and Tian, Jin and Pearl, Judea},
  booktitle={Probabilistic and causal inference: The works of Judea Pearl},
  pages={433--450},
  year={2022}
}

@book{heckman1977sample,
  title={Sample selection bias as a specification error (with an application to the estimation of labor supply functions)},
  author={Heckman, James J},
  volume={172},
  year={1977},
  publisher={National Bureau of Economic Research Cambridge, MA}
}

@misc{robins2000marginal,
  title={Marginal structural models and causal inference in epidemiology},
  author={Robins, James M and Hernan, Miguel Angel and Brumback, Babette},
  journal={Epidemiology},
  volume={11},
  number={5},
  pages={550--560},
  year={2000},
}

@inproceedings{kontonis2019efficient,
  title={Efficient truncated statistics with unknown truncation},
  author={Kontonis, Vasilis and Tzamos, Christos and Zampetakis, Manolis},
  booktitle={2019 IEEE 60th Annual Symposium on Foundations of Computer Science (FOCS)},
  pages={1578--1595},
  year={2019},
  organization={IEEE}
}

@article{cohen1950estimating,
  title={Estimating the mean and variance of normal populations from singly truncated and doubly truncated samples},
  author={Cohen Jr, A Clifford},
  journal={The Annals of mathematical statistics},
  pages={557--569},
  year={1950},
  publisher={JSTOR}
}

@inproceedings{lee2024efficient,
  title={Efficient Statistics With Unknown Truncation, Polynomial Time Algorithms, Beyond Gaussians},
  author={Lee, Jane H and Mehrotra, Anay and Zampetakis, Manolis},
  booktitle={2024 IEEE 65th Annual Symposium on Foundations of Computer Science (FOCS)},
  pages={988--1006},
  year={2024},
  organization={IEEE}
}

@article{moreira2003conditional,
  title={A conditional likelihood ratio test for structural models},
  author={Moreira, Marcelo J},
  journal={Econometrica},
  volume={71},
  number={4},
  pages={1027--1048},
  year={2003},
  publisher={Wiley Online Library}
}

@article{pingault2021genetic,
  title={Genetic sensitivity analysis: Adjusting for genetic confounding in epidemiological associations},
  author={Pingault, Jean-Baptiste and Rijsdijk, Fr{\"u}hling and Schoeler, Tabea and Choi, Shing Wan and Selzam, Saskia and Krapohl, Eva and O’Reilly, Paul F and Dudbridge, Frank},
  journal={PLoS genetics},
  volume={17},
  number={6},
  pages={e1009590},
  year={2021},
  publisher={Public Library of Science San Francisco, CA USA}
}

@article{zhao2024adjusting,
  title={Adjusting for genetic confounders in transcriptome-wide association studies improves discovery of risk genes of complex traits},
  author={Zhao, Siming and Crouse, Wesley and Qian, Sheng and Luo, Kaixuan and Stephens, Matthew and He, Xin},
  journal={Nature Genetics},
  volume={56},
  number={2},
  pages={336--347},
  year={2024},
  publisher={Nature Publishing Group US New York}
}

@inproceedings{tillman2011learning,
  title={Learning equivalence classes of acyclic models with latent and selection variables from multiple datasets with overlapping variables},
  author={Tillman, Robert and Spirtes, Peter},
  booktitle={Proceedings of the Fourteenth International Conference on Artificial Intelligence and Statistics},
  pages={3--15},
  year={2011},
  organization={JMLR Workshop and Conference Proceedings}
}

@inproceedings{evans2015recovering,
  title={Recovering from Selection Bias using Marginal Structure in Discrete Models.},
  author={Evans, Robin J and Didelez, Vanessa},
  booktitle={ACI@ UAI},
  pages={46--55},
  year={2015}
}

@article{qiu2024identifying,
  title={Identifying selections for unsupervised subtask discovery},
  author={Qiu, Yiwen and Zheng, Yujia and Zhang, Kun},
  journal={Advances in Neural Information Processing Systems},
  volume={37},
  pages={11966--11996},
  year={2024}
}

@article{luo2025gene,
  title={Gene regulatory network inference in the presence of selection bias and latent confounders},
  author={Luo, Gongxu and Dai, Haoyue and Sun, Boyang and Li, Loka and Huang, Biwei and Stojanov, Petar and Zhang, Kun},
  journal={arXiv preprint arXiv:2501.10124},
  year={2025}
}

@book{little2019statistical,
  title={Statistical analysis with missing data},
  author={Little, Roderick JA and Rubin, Donald B},
  year={2019},
  publisher={John Wiley \& Sons}
}

@article{bang2005doubly,
  title={Doubly robust estimation in missing data and causal inference models},
  author={Bang, Heejung and Robins, James M},
  journal={Biometrics},
  volume={61},
  number={4},
  pages={962--973},
  year={2005},
  publisher={Oxford University Press}
}

@book{van2011targeted,
  title={Targeted learning: causal inference for observational and experimental data},
  author={Van der Laan, Mark J and Rose, Sherri and others},
  volume={4},
  year={2011},
  publisher={Springer}
}
\bibliographystyle{icml2026}

\newpage
\appendix
\onecolumn
\numberwithin{equation}{section}
\normalsize

\section{Related Works}\label{sec:related_work}
\camerarev{\textbf{Broader Context: Censored Data, Missing Data \quad }}
\camerarev{Inspired by a reviewer, we clarify the scope of the problem in this paper in relation to the broader literature of \emph{censored data} and \emph{missing data} problem. Heckman's well-known \emph{censoring}~\citep{heckman1977sample}  problem setting assumes that certain covariates are observed for all individuals, including non-selected ones, which we do \emph{not} assume to have in our setting.  In contrast, our setting falls within the \emph{truncation} regime in Heckman's terminology: non-selected units are entirely absent from the dataset, and we observe only $(X_i, Y_i, T_i)$ for individual $i$ with $S_i = 1$. This applies to both our deterministic selection (where a region is completely unobserved) and nondeterministic selection (where data points are probabilistically excluded). Censored data is also commonly referred to as missing data in the modern causal inference and statistics literature. This connection was formalized by Rubin's missing data framework~\citep{little2019statistical}, which classifies missingness mechanisms into three categories: missing completely at random (MCAR), missing at random (MAR), and missing not at random (MNAR). Under the MAR assumption, where the probability of missingness depends only on observed variables is the logic underlying Heckman's~\citep{heckman1977sample} \emph{censoring} paradigm.}

\camerarev{Note that, in the paper, we distinguish a deterministic (also stated as hard truncation, Proposition 3.3) and non-deterministic selection mechanism (soft selection, Proposition 3.5). The hard truncation here refers to a specific kind of truncation in the Heckman’s sense: the cases where the entire region of the space is unobserved, while the soft selection refers to another kind of ``truncation'': the cases when every point in the support still has a positive probability of being observed.}
\paragraph{Causal Inference under selection bias}
For the purpose of causal inference (bias adjustment), the nonparametric perspective builds on the graphical representation of selection mechanisms, as introduced in the selection diagram framework \citep{bareinboim2014recovering,bareinboim2012controlling,bareinboim2015recovering,correaIdentificationCausalEffects2019,bareinboim2022recovering}, which characterizes conditions under which causal effects remain identifiable despite selection bias. 
Subsequent works extended this approach by developing testable implications of selection mechanisms \citep{correaIdentificationCausalEffects2019} and providing adjustment criteria. Additionally, \cite{van2014constructing, jaber2019causal} developed causal effect identification results under Markov equivalence classes (MEC), which take into account the potential presence of latent selections. Selection bias has been studied extensively in economics \cite{heckman1977sample} and epidemiology \cite{robins2000marginal}. Selection bias has been studied extensively in economics \cite{heckman1977sample} and epidemiology \cite{robins2000marginal}.
Our work is distinct as it makes weak distributional assumptions that enable identifiability, building on the machinery of Taylor approximation results in statistical learning theory~\cite{daskalakisStatisticalTaylorTheorem2021}.

\paragraph{Causal Discovery under selection bias}

For the purpose of causal discovery,  the foundational work of the FCI algorithm \citep{spirtes2001causation, zhang2008completeness} aims to discover ancestral relations up to an equivalence class in the presence of latent confounders and selection bias. A line of work using nonparametric methods has been developed to identify causal relations based on conditional independence constraints~\citep{hernan2004structural, tillman2011learning, evans2015recovering, versteeg2022local}. There are recent works that deal with selection bias in interventional studies~\cite{daiWhenSelectionMeets2025} and in sequential data~\citep{zhengDetectingIdentifyingSelection2024b,qiu2024identifying}, and with rank constraint~\cite{dai2025latent}. Several other approaches have been developed for local causal orientation~\citep{zhang2016identifiability, versteegLocalConstraintBasedCausal2022}, and applications in gene regulatory network inference~\citep{luo2025gene}. While we do not look at the causal discovery setting, we believe that a similar definition of identifiability could be formulated.



\section{Definitions and Proofs}\label{sec:proofs}
\subsection{More Definitions}\label{sec:more_defs}

\begin{define}[Compatibility]\label{def:compatible}
{A tuple} $(\Ptx, \Pxy, \Psxyt) $ is said to be compatible with distribution classes $(\bbPtx, \bbPxy, \bbS)$ if $\Ptx\in \bbPtx, \Pxy \in \bbPxy, \Psxyt \in \bbS$ and $(\Ptx, \Pxy)$ defines a valid distribution over $(X,T,Y(0),Y(1))$. 
\end{define}

\begin{define}[Realizability]\label{def:realizable}
    A (underlying) distribution $\cP(\bV)$ is said to be realizable with respect to the distribution class pair $(\bbPtx, \bbPtxy)$ if the  propensity score $\Ptx$ induced by $\cP(\bV)$ belong to $\bbPtx$ and $\cP_{X,Y(t)} \in \bbPtxy$ 
    denoted as $\cP(\bV) \sim  (\bbPtx, \bbPxy)$. After the selection procedure is applied to $\cP(\bV)$ by any $\Psxyt$ that belongs to a distribution class $\bbS$, the resulting observational study $\cP(\bV\mid S=1)$ is said to be realizable with respect to $(\bbPtx, \bbPxy, \bbS)$.
\end{define}

\begin{define}[ATE identifiablity]\label{def:ATE_id}
    Intuitively, if two observational studies satisfy $\cP^{1}(\bV) = \cP^{2}(\bV)$, then it should be $\tau_{\cP^1} = \tau_{\cP^2}$. Formally, we say that an ATE ${\tau}_\cP$ is \emph{identifiable} from the  distribution $\cP(\bV)\sim (\bbPtx, \bbPxy)$ if there is a mapping $f$ such that $f(\cP(\bV)) = {\tau}_\cP$ for any observational study $\cP$ realizable with respect to such $(\bbPtx, \bbPxy)$. (The distribution class pair $(\bbPtx, \bbPxy)$ needs to satisfy \emph{some} condition in order for ATE to be identifiable.)
\end{define}

\begin{assumption}[Standard assumptions for ATE identifiability]\label{assumption:standard}
The following standard assumptions are made throughout the literature to ensure the identifiability of ATE ~\cite{rosenbaum1983central,imbens2015causal}:
\begin{itemize}
        \item {Unconfoundedness (Ignorability):}. Conditional on observed covariates (\(X\)), treatment assignment is independent of the potential outcomes, i.e.  $Y(0), Y(1) \indep T \mid X$
        \item {Positivity (Overlap):} Every individual in every subgroup has a non-zero probability of being in either the treatment or control group, i.e. $(0<P(T=1\mid X)<1)$.
        \item Consistency: The observed outcome equals the potential outcome under the treatment actually received: $Y=TY_{1}+(1-T)Y_{0}$
        \item Stable Unit Treatment Value Assumption (SUTVA): A unit's outcome is unaffected by the treatment status of other units (no interference), and there are no hidden versions of the treatment.
\end{itemize}
\end{assumption}
\subsection{ATE Identifiability for full population with external unbiased $P_X$}\label{sec:id_ext}
\begin{mdframed}
     \begin{cond}[s-ATE-full Condition with external data]
        \label{cond:ext} 
        The distribution classes $(\bbPtx,\bbPxy,\bbS)$ satisfy the \emph{Identifiability Condition under Selection} if for any  tuples $(\Ptx, \Pxy, \Psxyt), (\Qtx, \Qxy, \Qsxyt) \in {\bbPtx} \times {\bbPxy} \times{\bbS}$ that are compatible with $({\bbPtx}, {\bbPxy}, \bbS)$, 
        it holds that
        $$\underbrace{\tau_{\Pxy} \neq \tau_{\Qxy}}_{\circled{1}}  \implies \underbrace{{\color{brown}P_X} \neq {\color{brown}Q_X}}_{\circled{3}},
        \text{or}\quad 
        \exists (x,y,t)\in \cA ,$$
        $$\text{ s.t. }\ \underbrace{{\color{brown}{{\alpha_P(x,y,t) \Ptx\Pxy}}}\hspace{-0.6mm} \neq \hspace{-0.6mm}{\color{brown}{\alpha_Q(x,y,t) \Qtx \Qxy}}}_{\circled{\rev{2}}},$$
        where $\alpha_P\coloneqq \frac{\Psxyt}{P(s)}, \alpha_Q\coloneqq \frac{\Qsxyt}{Q(s)}$.
    \end{cond}
\end{mdframed}

\begin{restatable}[Characterization of s-ATE-full with external data]{theorem}{ATEIDextTHM}\label{thm:ext}
    The ATE $\tau_P$ is identifiable from any observational distribution $P$ \rev{and unbiased marginal $P(X)$} realizable with respect to $(\bbPtx, \bbPxy,\bbS)$ if and only if 
    $(\bbPtx, \bbPxy,\bbS)$ satisfy Condition.~\ref{cond:ext}. 
\end{restatable}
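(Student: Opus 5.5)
The statement to prove is Theorem~\ref{thm:ext}. I would prove it the same way as Theorem~\ref{thm2}, by reading identifiability as injectivity of the map from model to observables. The only change is that the observable is now the pair $(P(\bV\mid S=1), P_X)$ rather than $P(\bV\mid S=1)$ alone. First I would record that, for a compatible tuple $(\Ptx,\Pxy,\Psxyt)$, the selected density is
\[
P(x,y,t\mid S=1)=\alpha_P(x,y,t)\,\Ptx(t,x)\,\Pxy(x,y,t).
\]
This uses unconfoundedness: $P(x,t,Y(t)=y)=\Ptx\,\Pxy$. It then applies Bayes' rule with $\alpha_P=\Psxyt/P(s)$. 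Hence the observable pair determined by a tuple is exactly $(\alpha_P\Ptx\Pxy,\,P_X)$, where $P_X$ is the $x$-marginal of $\Pxy$.

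\textbf{Sufficiency.} Assume Condition~\ref{cond:ext}. Let $\cO$ be the set of observable pairs arising from realizable studies. For $o\in\cO$, define $f(o)\coloneqq\tau_{\Pxy}$ for any compatible tuple producing $o$, and define $f$ arbitrarily off $\cO$. I must check that $f$ is well defined. Suppose two tuples $P,Q$ give the same $o$. Then $P_X=Q_X$, so \circled{3} fails, and $\alpha_P\Ptx\Pxy=\alpha_Q\Qtx\Qxy$ everywhere on $\cA$, so \circled{2} fails. By the contrapositive of Condition~\ref{cond:ext}, $\tau_{\Pxy}=\tau_{\Qxy}$. Therefore $f(o)$ does not depend on the chosen tuple, and $f(P(\bV\mid S),P_X)=\tau_P$ for every realizable study.

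\textbf{Necessity.} Suppose Condition~\ref{cond:ext} fails. Then there are compatible tuples $P,Q$ with $\tau_{\Pxy}\neq\tau_{\Qxy}$ and $P_X=Q_X$, and with $\alpha_P\Ptx\Pxy=\alpha_Q\Qtx\Qxy$ pointwise. Both induce realizable studies with identical observables $o$. Any map $f$ assigns a single value $f(o)$, which cannot equal both $\tau_P$ and $\tau_Q$. So for any $f$ there is a realizable study with $f(o)\neq\tau$, and the ATE is not identifiable.

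The main obstacle is the bookkeeping in the first step, not the logic. I must confirm that the observable pair is a function of the tuple alone. In particular, $P(s)=\int\Psxyt\,\Ptx\,\Pxy$ must be determined by the tuple. I must also treat equality of densities as equality almost everywhere with respect to $\mu$ (Lebesgue times counting measure), so that "$\exists(x,y,t)$" in \circled{2} is read as holding on a set of positive measure. Compatibility is what makes both tuples define genuine joint laws over $(X,T,Y(0),Y(1),S)$, so that the counterexample in the necessity part is legitimate.
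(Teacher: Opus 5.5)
Your proposal is correct and follows the paper's argument. The paper proves Theorem~\ref{thm:ext} by rerunning the proof of Theorem~\ref{thm2}, with the observable enlarged to $(P(\bV\mid S=1),P_X)$ as you do: well-definedness of $f$ via the contrapositive of the condition for sufficiency, and two compatible tuples with equal observables but different ATEs for necessity. Your remark that \circled{2} and \circled{3} should be read up to $\mu$-null sets is a sensible tightening that the paper leaves implicit, since it treats densities pointwise.
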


\rev{The proof of Theorem~\ref{thm:ext}  follows a similar logic to the  main proof (of Theorem~\ref{thm2})}.

\subsection{ATE Identifiability for subpopulation}
    When we consider the ATE for subpopulation $P(\bV\mid S)$, we essentially consider the selected distribution as a new distribution $P'(\bV)\coloneqq P(\bV\mid S)$. Then the ATE for subpopulation is equivalent to the ATE for the new distribution $P'$, i.e. $\tau_{P'} = \bbE_{P'}[Y(1)] - \bbE_{P'}[Y(0)]$.
    The following Theorem.~\ref{thm1} characterize the identifiability of ATE from $P'$, and Proposition.~\ref{prop:1} is the instantiation. Theorem~\ref{thm1} is the standard ATE identifiability result without considering selection bias ~\cite{cai2025makes}.
\begin{mdframed}
     \begin{cond}[s-ATE-sub Condition]
        \label{cond1} 
        The distribution classes $(\bbPtx,\bbPxy)$ satisfy the \emph{Identifiability Condition} if for any  tuples $(\Ptx,\Pxy),$ $ (\Qtx,\Qxy) \in {\bbPtx} \times {\bbPxy}$ that are compatible with $({\bbPtx}, {\bbPxy})$, then:
        $$\underbrace{\tau_{\Pxy} \neq \tau_{\Qxy}}_{\circled{1}} \implies \underbrace{{\color{brown}P_X} \neq {\color{brown}Q_X}}_{\circled{2}} \text{ or \\} $$
        $$\exists (x,y, t)\in \cA, s.t., \underbrace{{\color{brown}\Ptx(x) \Pxy(x, y)} \neq {\color{brown}\Qtx(x) \Qxy(x,y)}}_{\circled{4}}$$
        

    \end{cond}
    
    \begin{theorem}
    [Identification of ATE]
    \label{thm1}    
    The ATE $\tau_P$ is identifiable from any observational distribution $P$ realizable with respect to $(\bbPtx, \bbPxy)$ if and only if 
    $(\bbPtx, \bbPxy)$ satisfy Condition~\ref{cond1}. 
    \end{theorem}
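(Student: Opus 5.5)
We prove Theorem~\ref{thm1} by the standard "identifiability iff the observable map separates ATE classes" argument, in the style of \cite{cai2025makes}. The observational distribution of $(X,T,Y)$ with $Y=Y(T)$ (consistency) has density $p(x,t,y)=P_X(x)\,\Ptx(t,x)\,P_{y(t)\mid x}(y)$ by unconfoundedness $Y(t)\indep T\mid X$. Equivalently, it is $\Ptx\Pxy$ evaluated at $(x,y,t)$. Hence two realizable studies induce the same observational law exactly when $\Ptx\Pxy=\Qtx\Qxy$ for all $(x,y,t)\in\cA$. Integrating over $y$ and summing over $t$ then also forces $P_X=Q_X$. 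So \circled{2} or \circled{4} failing is the same as the observed distributions coinciding, and Condition~\ref{cond1} says that equal observed laws force equal ATEs.

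\textbf{Sufficiency.} For each observed law $\pi$ in the image of realizable studies, pick any compatible preimage $(\Ptx,\Pxy)$ with $\Ptx\Pxy=\pi$ (axiom of choice if needed) and set $f(\pi)\coloneqq\tau_{\Pxy}$. Outside the image, define $f$ arbitrarily. The map is well defined: if $(\Qtx,\Qxy)$ is another preimage, then both \circled{2} and \circled{4} fail, so by the contrapositive of Condition~\ref{cond1}, $\tau_{\Pxy}=\tau_{\Qxy}$. Therefore $f$ returns the true ATE for every realizable study.

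\textbf{Necessity.} If Condition~\ref{cond1} fails, there exist compatible $(\Ptx,\Pxy)$ and $(\Qtx,\Qxy)$ with $\tau_{\Pxy}\neq\tau_{\Qxy}$ but $P_X=Q_X$ and $\Ptx\Pxy=\Qtx\Qxy$ everywhere. They therefore induce the same observational distribution $\pi$. Any map $f$ assigns the single value $f(\pi)$, which cannot equal both ATEs, so it errs on at least one realizable study.

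The main subtlety is the bookkeeping of the first paragraph. One must check that "$\Ptx\Pxy$ evaluated at $(x,y,t)$" really is the full observed joint under the standard assumptions of Assumption~\ref{assumption:standard}. One must also treat equality as almost-everywhere equality of densities, since the observed law only determines densities up to null sets. I would state the condition as a.e.\ inequality on a set of positive $\mu$-measure to make this rigorous. For s-ATE-sub, the same argument applies verbatim with $P'=P(\bV\mid S)$, provided unconfoundedness holds in $P'$.
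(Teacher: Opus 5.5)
Your proof is correct and follows the same route the paper uses. The paper does not prove Theorem~\ref{thm1} separately: it defers to \cite{cai2025makes}, and its proof of Theorem~\ref{thm2} with $\alpha\equiv 1$ is your argument. That argument sets $f$ to the ATE of any compatible preimage, gets well-definedness from the contrapositive of Condition~\ref{cond1}, and for necessity exhibits two compatible tuples with identical observed laws but different ATEs. Your added remarks are sound refinements the paper glosses over: failure of \circled{4} already forces $P_X=Q_X$, and the condition must be read up to null sets for sufficiency to hold when $f$ acts on distributions rather than on density representatives.
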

\end{mdframed}

\begin{proposition}\label{prop:1}
    $(\bbPtx(c),\bbPxy)$ satisfy Condition~\ref{cond1}, $\forall c, 0<c<\frac{1}{2}$.
\end{proposition}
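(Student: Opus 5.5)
We must show: for c-overlap propensity classes and arbitrary $\bbPxy$, Condition~\ref{cond1} holds. That is, if two compatible tuples $(\Ptx,\Pxy)$ and $(\Qtx,\Qxy)$ agree on everything observable, then $\tau_{\Pxy}=\tau_{\Qxy}$. Equivalently, we argue by contraposition. Assume $P_X=Q_X$ and $\Ptx(x)\Pxy(x,y)=\Qtx(x)\Qxy(x,y)$ for all $(x,y,t)\in\cA$. We then deduce that the two ATEs coincide.

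\textbf{Step 1 (propensity scores agree).} Integrate the assumed identity over $y$ for fixed $x$ and $t$. This gives $\Ptx(t,x)P_X(x)=\Qtx(t,x)Q_X(x)$. Since $P_X=Q_X$, on the set $\{x : P_X(x)>0\}$ we get $\Ptx(t,x)=\Qtx(t,x)$.

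\textbf{Step 2 (conditional outcome laws agree).} On the same support, divide the identity by $\Ptx(t,x)=\Qtx(t,x)$. This division is legitimate because $c<\Ptx(t,x)<1-c$ with $c>0$, which is exactly where membership in $\bbPtx(c)$ enters. We obtain $\Pxy(x,y)=\Qxy(x,y)$ for every $t\in\{0,1\}$.

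Unconfoundedness is what makes the observed product $\Ptx\Pxy$ correspond to the joint law of $(X,T,Y)$. Under it, $\Pr(X=x,T=t,Y=y)=\Pr(T=t\mid X=x)\Pr(X=x,Y(t)=y)$, so the identity really constrains the potential-outcome joint. Off the support of $P_X$, both joints vanish, so equality holds everywhere as densities.

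\textbf{Step 3 (ATEs agree).} Since $\tau_{\Pxy}=\bbE_{P_{xy(1)}}[Y(1)]-\bbE_{P_{xy(0)}}[Y(0)]$ depends only on the two joints, equality of the joints gives $\tau_{\Pxy}=\tau_{\Qxy}$. This contradicts \circled{1}, so Condition~\ref{cond1} holds.

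The only delicate point is the division in Step 2. It requires the propensity to be bounded away from zero for both treatment arms, and this is where $c>0$ is essential; $c<\tfrac12$ merely makes $\bbPtx(c)$ nonempty. I would also spell out the measure-zero caveats: equalities of densities hold almost everywhere, which suffices for the expectations.
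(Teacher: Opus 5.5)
Your argument is correct. The paper does not actually prove Proposition~\ref{prop:1}; it presents it as the standard no-selection instantiation inherited from \cite{cai2025makes}.

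Your contrapositive route is exactly that standard argument:
\begin{itemize}
\item integrate out $y$, using that $\Ptx$ does not depend on $y$ (unconfoundedness) and that $P_{xy(0)}$ and $P_{xy(1)}$ share the marginal $P_X$ (compatibility);
\item cancel the resulting equal propensities;
\item conclude $\Pxy=\Qxy$, hence $\tau_{\Pxy}=\tau_{\Qxy}$.
\end{itemize}
It also has the same skeleton as the paper's proof of Proposition~\ref{prop:determin_select}. There, the truncation and extrapolation step forces the joints to agree; here, the hypothesis $P_X=Q_X$ together with integration over $y$ does that job.

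One small correction to your closing remark. The division in Step 2 only needs pointwise positivity $\Ptx(t,x)>0$, not a bound uniformly away from zero. So the margin $c$ plays no role in identification; the same proof works for any propensity with $0<\Ptx<1$.
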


\subsection{Proof of Theorems}\label{sec:proof_thm}
\ATEIDTHM*
The following proof arguments follows in part \citep[Theorem 1.1]{cai2025makes}.

\paragraph{Proof Sketch (Sufficiency)}
First, we show the existence of a mapping $g$ from the observed distribution $P_\mathrm{obs} = \cP(\bV \mid S=1)$ to a function
$g(P_\mathrm{obs}):\bbR^d\times \bbR \times \{0,1\} \to \bbR$ such that $g(P_\mathrm{obs})$ identifies a subset of candidates in $(\bbPtxy, \bbPxy,\bbS)$ that are compatible with $P_\mathrm{obs}$. Further, by the definition of Condition \ref{cond2}, for any two candidates in this subset, and any $t\in \{0,1\}$ the expected values of $Y(t)$ under their outcome distributions are the same.

\paragraph{Proof Sketch (Necessity)}
We can construct two distribution $P_\mathrm{obs}$ and $Q_\mathrm{obs}$ such that they are realizable with respect to $(\bbPtxy, \bbPxy, \bbS)$, but \rev{their ATE are different.}

\begin{proof}[Proof of Thm.~\ref{thm2}]
\textbf{(Sufficiency)} We assume that the distribution classes $(\bbPtx,\bbPxy,\bbS)$ satisfy Condition~\ref{cond2}. Let any $\cP = P(V)$ be an underlying distribution realizable with respect to $(\bbPtx,\bbPxy,\bbS)$, with the corresponding observed distribution under selection $P_\mathrm{obs}\coloneqq P(V\mid S=1)$, where $V= (X,Y(T),T)$. Let $S_{\mathrm{obs}}$ be the set of all such observed distributions under selection, that is
\[
    S_{\mathrm{obs}}\coloneqq\{P_\mathrm{obs}\mid P_\mathrm{obs} \text{ realizable w.r.t. } (\bbPtx,\bbPxy,\bbS)\}\subseteq \Delta\left(\bbR^d\times \bbR \times \{0,1\}\right).
\]
We will construct a function $f: S_{\mathrm{obs}}\to \bbR$ such that $f(P_\mathrm{obs}) = \tau_\cP$, thus proving that the ATE is identifiable.

Towards that end, let us fix one such  underlying distribution $\cP$ realizable with respect to $(\bbPtx,\bbPxy,\bbS)$, with the corresponding observed distribution under selection $P_\mathrm{obs}$. We set the value of the function $f$ to be $f(P_\mathrm{obs}) = \tau_{\cP} = \bbE_{\cP}[Y(1) - Y(0)]$. It is left to prove that $f$ is well-defined, i.e., for any other underlying distribution $\cQ$ realizable with respect to $(\bbPtx,\bbPxy,\bbS)$, and observed distribution under selection $Q_{\mathrm{obs}} = Q(V\mid S=1)$ it holds 
\[
    Q_{\mathrm{obs}} = P_{\mathrm{obs}} \implies \tau_{\cQ} = f(P_{\mathrm{obs}}).
\]
By definition of realizability, it holds $Q_{t\mid x},Q_{xy{t}} \in (\bbPtx,\bbPxy)$. 
Moreover, by assumption that $(\bbPtx,\bbPxy,\bbS)$ satisfies Condition~\ref{cond2}, and using contraposition of the implication, it holds 
\[
\underbrace{\forall(x,y,t) \in \cA\ \ \alpha_P(x,y,t) \Ptx\Pxy = \alpha_Q(x,y,t) \Qtx \Qxy}_{\neg \circled{2}} \implies  \underbrace{\tau_{\Pxy} = \tau_{\Qxy}}_{\neg \circled{1}},
\]
where by definition $\tau_{\Pxy} = \tau_\cP$, and $\tau_{\Qxy} = \tau_\cQ$. 

By Bayes rule and assumed unconfoundedness, we have
\begin{align*}
    Q_{\mathrm{obs}} (X=x,Y=y(t),T=t) &= Q(X=x, Y=y(t), T=t \mid S=1)\\
    &=\frac{Q(S=1\mid X=x,Y=y(t),T=t)Q(X=x,Y=y(t),T=t)}{Q(S=1)}\\
    &= \alpha_Q Q_{t\mid x} Q_{xy(t)}.
\end{align*}
Similarly $P_{\mathrm{obs}} (X=x,Y=y(t),T=t) = \alpha_P P_{t\mid x} P_{xy(t)}$, and from $P_{\mathrm{obs}} = Q_{\mathrm{obs}}$ follows $\neg \circled{3}$. 
Using Condition~\ref{cond2} gets that $\neg\circled{1}$ holds, so 
\[
    \tau_{\cP} = \tau_{\cQ},
\]
thus proving that the function $f$ is indeed well-defined.
    \end{proof}

    \paragraph{Necessity.} 

Suppose that $({\bbPtx}, {\bbPxy}, \bbS)$ does not satisfy Condition~\ref{cond2}. We will show that in this case, for any map $f\colon \Delta(\bbR^d \times \bbR \times \{0,1\}) \to \bbR$, there exists a  distribution $P(V\mid S )$ realizable with respect to $(\bbPtx(c),\bbPxy,\bbS)$ such that $f(P(V\mid S ))\neq \tau_P$. That is, ATE is not identifiable over this class.

By the negation of Condition~\ref{cond2}, 
two tuples 
\[
(\Ptx, \Pxy, \Psxyt), (\Qtx, \Qxy, \Qsxyt) \in ({\bbPtx}, {\bbPxy}, \bbS),
\]
such that $\circled{1}$ holds, but 
\rev{$\circled{2}$ does not.}
More precisely, it holds
\[
\underbrace{\tau_{\Pxy} \neq \tau_{\Qxy}}_{\circled{1}}, \quad \underbrace{\forall(x,y,t) \in \cA\ \ \alpha_P(x,y,t) \Ptx\Pxy = \alpha_Q(x,y,t) \Qtx \Qxy}_{\neg \circled{\rev{2}}},
\]
where $\alpha_P\coloneqq \frac{\Psxyt}{P(s)}, \alpha_Q\coloneqq \frac{\Qsxyt}{Q(s)}$. 
We will use these two tuples to construct a counterexample.
Recall that $V$ denotes the joint random variable, i.e., $V=(X,Y,T)$. We now show that observational studies $P(V\mid S)$, $Q(V\mid S)$ induced by $\cP$, $\cQ$, respectively, coincide.

By Bayes' rule, follows
\[
    \rev{P(X=x,Y=y(t),T=t\mid S=1) = \frac{P(S=1\mid X=x,Y=y(t),T=t)P(X=x,Y=y(t),T=t=0)}{P(S=1)}}.
\]
Using the factorization \rev{$P(X=x,Y=y(t),T=t) = \Ptx \Pxy$, following from uconfoundedness}, and the definition $\alpha_P= \frac{\Psxyt}{P(s)}=\frac{P(S=1\mid X=x,Y=y(t),T=t)}{P(S=1)}$, we obtain
\[
     P(X=x,Y=y(t),T=t\mid S=1) = \alpha_P \Ptx \Pxy.
\]
By $\neg\circled{\rev{2}}$ this equals $\alpha_Q \Qtx \Qxy$, and hence by another use of Bayes rule, we get
\[
P(X=x,Y=y(t),T=t\mid S=1) =  Q(X=x,Y=y(t),T=t\mid S=1),
\]
\rev{yielding $P(V\mid S=1) = Q(V\mid S=1)$.}
Despite having the same observational studies, the ATE will differ. In particular, as the assumed property $\circled{1}$ holds, we have
\[
    \tau_{\cP} = \bbE_{\cP}  [Y(1)] - \bbE_{\cP}  [Y(0)] \neq \bbE_{\cQ}  [Y(1)] - \bbE_{\cQ}  [Y(0)] = \tau_{\cQ}.
\]
From this follows that there cannot exist a function $f:\Delta(\bbR^d \times \bbR \times \{0,1\}) \to \bbR$, such that both $\tau_P = f(P(V\mid S ))=f(Q(V\mid S )) = \tau_{Q}$. Thus, ATE is not identifiable with respect to $({\bbPtx}, {\bbPxy}, \bbS)$, proving the claim.

\subsection{Proof of Propositions}\label{sec:proof_prop}
\PROPDETERMINE*
The following proof arguments follows in part \citep[Lemma 4.6]{cai2025makes}.
\begin{proof}[Proof of Prop.~\ref{prop:determin_select}]

We will prove this result by contradiction. Namely, suppose that the triplet $(\bbPtx(c),\bbPxy^{C^\infty},\bbS^{\mathrm{det}}_{\mathbbm{1}})$  does not satisfy Condition~\ref{cond2}. Equivalently, there exist 
two tuples $(\Ptx, \Pxy, \Psxyt), (\Qtx, \Qxy, \Qsxyt) \in {\bbPtx(c)} \times {\bbPxy^{C^\infty}} \times{\bbS^{\mathrm{det}}_{\mathbbm{1}}}$ such that $\circled{1}$ holds, but 
$\circled{2}$ does not. More precisely, \rev{it holds}
\[
\underbrace{\tau_{\Pxy} \neq \tau_{\Qxy}}_{\circled{1}},\quad \underbrace{\forall(x,y,t) \in \cA\ \ \alpha_P(x,y,t) \Ptx\Pxy = \alpha_Q(x,y,t) \Qtx \Qxy}_{\neg \circled{\rev{2}}},
\]
where $\alpha_P\coloneqq \frac{\Psxyt}{P(s)}, \alpha_Q\coloneqq \frac{\Qsxyt}{Q(s)}$.\\
Take any $t\in\{0,1\}$. By definition of $\bbS^{\mathrm{det}}_{\mathbbm{1}}$ it holds that there exist sets $\cB_{P}, \cB_{Q}\subseteq \cA$ such that 
\[
    \Psxyt = \mathbbm{1}_{\cA\setminus \cB_P} \text{ and } \Qsxyt = \mathbbm{1}_{\cA\setminus \cB_Q}.
\]
Let us denote by $\cA^t$, $\cB_{P}^t$ and $\cB_{Q}^t$ restriction of these sets to fixed $t$. For notational convenience, we continue to write $t$ as a variable, although it will henceforth always represent one instantiation of it $t\in\{0,1\}$. 

We first show that $B_P^t = B_Q^t$. Let $(x,y,t) \in B_P^t$. By definition of $B_P^t$ and $\bbS^{\mathrm{det}}_{\mathbbm{1}}$,
\[
    P(s\mid x,y,t) = 0, \text{ from which follows } \alpha_P(x,y,t) \Ptx\Pxy = 0.
\]
Then by the assumed $\neg\circled{3}$ this implies $\alpha_Q(x,y,t) \Qtx \Qxy = 0$. Since $\Qtx \in \bbPtx(c)$, we have $\Qtx>c>0$. Moreover, because $\Qxy\in \bbPxy^{C^\infty}$,
\[
    \Qxy = Q_{y(t)\mid x} \, Q_X \propto e^{f_{Q}(x,y(t))}\rev{e^{g_Q(x)}} >0,
\]
where the strict inequality follows from $e^{f_{Q}(x,y(t))}\rev{e^{g_Q(x)}}>0$ 
. Therefore, it can only be $\alpha_Q(x,y,t) = 0$, from which directly follows $Q(s\mid x,y,t)=0$. By definition of $\bbS^{\mathrm{det}}_{\mathbbm{1}}$ this yields $(x,y,t) \in B_Q^t$. Hence,
\[
    B_P^t\subseteq B_Q^t
\]
By symmetry, the reverse inclusion also holds, and thus $B_P^t=B_Q^t$. From this, follows that 
\[
    P(s)=\iiint_{\cA^t\setminus \cB_P^t} P(s\mid x,y,t) = \iiint_{\cA^t\setminus \cB_Q^t} Q(s\mid x,y,t) = Q(s).
\]
We now restrict attention to $(x,y,t) \in \cA^t \setminus \cB_P^t = \cA^t \setminus \cB_Q^t$. On this region, by definition of $\bbS^{\mathrm{det}}_{\mathbbm{1}}$,
\[
    \Psxyt = \Qsxyt = 1.
\]
Combining this with $P(s) = Q(s)$, the assumption $\neg \circled{3}$ implies 
\[
    \Ptx \Pxy = \Qtx \Qxy,
\]
for $(x,y,t) \in \cA^t \setminus \cB_P^t$. Writing this in terms of conditional distributions yields
\begin{equation}\label{eq:pycq}
    \Ptx P_{y(t)\mid x} P_X = \Qtx Q_{y(t)\mid x} Q_X.
\end{equation}
Denote \rev{$C_{x,t} \coloneqq (\Qtx\, Q_X )/ (\Ptx\, P_X)$}, which is well defined since $\Qtx \in \bbPtx(c)$, i.e., $\Qtx>0$, \rev{and $P_X>0$ as assumed}. Then for all $(x,y,t) \in \cA^t \setminus \cB_P^t$,
\[
     P_{y(t)\mid x} = C_{x,t} Q_{y(t)\mid x}.
\]
Thus, on the selected sliced line $\Omega_{x,t} = \{y : (x,y,t) \in \cA^t \setminus \cB^t\}$, the two conditional densities are proportional by a constant that does not depend on $y$. Furthermore, by definition of $\bbPxy^{C^\infty}$ there exist functions $Z_P(x), Z_Q(x)>0$ such that
\[
    P_{y(t)\mid x} = \frac{e^{f_{P}(x,y)}}{Z_P(x)}, \qquad Q_{y(t)\mid x} = \frac{e^{f_{Q}(x,y)}}{Z_Q(x)},
\]
where $f_{P}(x,y)$ and $f_{Q}(x,y)$ are polynomials in $(x,y)$ and $Z_P(x)$ and $Z_Q(x)$ are normalizing constants such that $Z_P(x) = \int e^{f_{P}(x,y)} dy $, $Z_Q(x) = \int e^{f_{Q}(x,y)} dy $.
Substituting into (\ref{eq:pycq}) yields
\[
    e^{f_{P}(x,y)} = \frac{Z_P(x)}{Z_Q(x)} C_{x,t}\ e^{f_{Q}(x,y)},
\]
Taking logarithm gives and subtracting gives, for all $y\in \Omega_{x,t}$,
\[
    f_{P}(x,y) - f_{Q}(x,y)= c_{x,t}, \qquad c_{x,t} = \log \frac{Z_P(x) C_{x,t}}{Z_Q(x)}.
\]
Note that for any fixed $x$ the left hand side is polynomial in $y$, while the right hand side of this equation is a constant in $y$. Thus, by fundamental theorem of algebra, this equation can have finite number of solutions in $\bbR$. However, by assumption that $\mu({\cB_P^1}^c)>0$, there exists an uncountable number of points $y\in \Omega_{t,x}$ for any fixed $t,x$. This is due to the fact that any $1$-dimensional projection of a measurable set of positive Lebesgue measure has positive Lebesgue measure in the projected space, and is hence uncountable. Thus, for any $y\in \bbR$
\[
    f_{P}(x,y) - f_{Q}(x,y)=  c_{x,t}.
\]
Consequently, for all $y\in \bbR$.
\begin{equation}\label{eq:pytxq}
    P_{y(t)\mid x} = \frac{e^{f_{P}(x,y)}}{\int e^{f_{P}(x,y)} dy} = \frac{e^{f_{Q}(x,y)+ c_{x,t} }}{ \int e^{f_{P}(x,y)+ c_{x,t} } dy} = \frac{e^{f_{Q}(x,y)}}{ \int e^{f_{P}(x,y)} dy} = Q_{y(t)\mid x}.
\end{equation}
\rev{Recall, by the assumption of the proposition $P_{X}\propto e^{g_{P}(x)}$ and $Q_{X}\propto e^{g_{Q}(x)}$, so by analogous argumentation we can get
\[
    P_X = Q_X.
\]}
Multiplying both sides \rev{of equation \ref{eq:pytxq}} by $P_X=Q_X$ yields 
\[
    \Pxy = P_{y(t)\mid x}P_X = Q_{y(t)\mid x}Q_X  = \Qxy,
\]
for all $(x,y(t))\in \bbR^{d}\times \bbR$.
Therefore,
\[
    \bbE_{(x,y) \sim \Pxy}[y]  = \bbE_{(x,y) \sim \Qxy}[y].
\]
\rev{As $t\in\{0,1\}$ was chosen arbitrarily, we have
\[
    \tau_{\Pxy} = \bbE_{(x,y) \sim P_{xy(1)}}[y] - \bbE_{(x,y) \sim P_{xy(0)}}[y] = \bbE_{(x,y) \sim Q_{xy(1)}}[y] - \bbE_{(x,y) \sim Q_{xy(0)}}[y] = \tau_{\Qxy}
\]}
contradicting \circled{1} and thus proving the claim.

    \end{proof}

    \PROPNONDETERMINE*
    \begin{proof}[Proof of Prop.~\ref{prop:nondetermin_select}]
    \rev{We define the $\bbPxy\subseteq \Delta(\bbR^d \times \bbR)\times \{0,1\}$ as a family of distributions such that for any $P_{xy(t)}\in \bbPxy^{C^\infty}$, and for any $t\in\{0,1\}$, its corresponding conditional distribution $P_{y(t)\mid x}$ is parametrized as stated in the proposition, and its corresponding marginal distribution $P_x$ is fixed.
    Let us consider any two different distributions $P,Q$ such that their corresponding conditional distributions satisfy
    \[
    \underbrace{(\Ptx, \Pxy, \Psxyt)}_P, \underbrace{(\Qtx, \Qxy, \Qsxyt)}_Q \in ({\bbPtx}(c), {\bbPxy}, \bbS(d)).
    \]
    We will prove that if $\circled{1}$ from Condition~\ref{cond2} is satisfied by $P$ and $Q$, then $\circled{2}$ holds must hold. Rewriting $\circled{2}$, we have that the ratio $\frac{Q(S=1)\Psxyt \Ptx\Pxy} {P(S=1)\Qsxyt\Qtx \Qxy}\neq 1$ for some $(x,y,t)$.
    Let us denote by $r \coloneqq \frac{P(S=1)}{Q(S=1)}$ and $h(x) \coloneqq \frac{Q_X}{P_X}$. Note that by definition of $\bbS(d)$, and restriction to the support of $X$, we have $r,h(x)\in (0,+\infty)$. Then $\circled{2}$ holds when
    \begin{equation}
        \begin{aligned}
            \frac{\Psxyt \Ptx\Pxy} {\Qsxyt\Qtx \Qxy} \neq r &\Rightarrow \frac{\Ptx\Pxy} {\Qtx \Qxy} \notin [rd, \frac{r}{d}]\\
            &\iff \frac{\Pxy} {\Qxy} \notin \boundrCD.\\
            &\iff \frac{\Pyx} {\Qyx} \notin \boundrhCD \subset (0,+\infty).
        \end{aligned}
    \end{equation}}
    The first equivalence holds because of the c-overlap assumption on the  propensity scores $\Ptx, \Qtx$, i.e. $c < \Ptx, \Qtx < 1-c$. And the second equivalence holds because of the assumption on the selection probabilities $d < \Psxyt, \Qsxyt$ in $\bbS(d)$. 

        So it would be enough to show that for any two  outcome distributions $P$ and $ Q$, \rev{and for some $x$,} the ratio of their \rev{conditional} densities $\frac{\Pyx} {\Qyx}$ is unbounded, thus falls out of the ratio bound $\boundrhCD$.
    
        Next we prove separately for different distribution families.

    \begin{enumerate}
        \item \textbf{Gaussian distribution.}  
        \rev{As $\circled{1}$ is assumed to hold, there must exist some $x$, for which} the conditional outcome distributions 
        is:
        \begin{align*}
            P(y|x) &= \frac{1}{\sqrt{2\pi\sigma^2}} \exp\left( -\frac{(y - \mu_P)^2}{2\sigma^2} \right), \\
            Q(y|x) &= \frac{1}{\sqrt{2\pi\sigma^2}} \exp\left( -\frac{(y - \mu_Q)^2}{2\sigma^2} \right),
        \end{align*}
        \rev{where} $\mu_P \neq \mu_Q$. \rev{Recall, this holds due to the assumption $\bbPxy$, and the fact that it has fixed marginal $P_X$.}
        
        Consider the ratio $R(y) = \frac{P(y|x)}{Q(y|x)}$:
        \begin{align*}
            R(y) &= \frac{\frac{1}{\sqrt{2\pi\sigma^2}} \exp\left( -\frac{(y - \mu_P)^2}{2\sigma^2} \right)}{\frac{1}{\sqrt{2\pi\sigma^2}} \exp\left( -\frac{(y - \mu_Q)^2}{2\sigma^2} \right)} \\
            &= \exp\left( \frac{(y - \mu_Q)^2 - (y - \mu_P)^2}{2\sigma^2} \right)
        \end{align*}

        Expand the squares in the numerator:
        \begin{align*}
            (y - \mu_Q)^2 - (y - \mu_P)^2 &= (y^2 - 2y\mu_Q + \mu_Q^2) - (y^2 - 2y\mu_P + \mu_P^2) \\
            &= 2y(\mu_P - \mu_Q) + (\mu_Q^2 - \mu_P^2)
        \end{align*}
        Substituting this back into the ratio:
        \begin{equation*}
            R(y) = \exp\left( \frac{2(\mu_P - \mu_Q)}{\sigma^2} y + \frac{\mu_Q^2 - \mu_P^2}{2\sigma^2} \right)
        \end{equation*}

        The exponent is a linear function of $y$ of the form $Ay + B$, where $A = \frac{2(\mu_P - \mu_Q)}{\sigma^2} \neq 0$.
        \begin{itemize}
            \item If $\mu_P > \mu_Q$ (slope $A > 0$): $\lim_{y \to \infty} R(y) = \infty$ and $\lim_{y \to -\infty} R(y) = 0$.
            \item If $\mu_P < \mu_Q$ (slope $A < 0$): $\lim_{y \to \infty} R(y) = 0$ and $\lim_{y \to -\infty} R(y) = \infty$.
        \end{itemize}

        Since the range of the ratio function $R(y)$ is $(0, \infty)$, for any  constant $c,d \in (0, 1/2)$, there exists a $y$ such that:
        \[
            \frac{P(y|x)}{Q(y|x)} \notin \boundrhCD
        \]
        Thus, the Gaussian family satisfies Condition~\ref{cond2}.

        \item \textbf{Laplace distribution.}
        Let the conditional outcome distributions be Laplace:
        \begin{align*}
            P(y|x) &= \frac{1}{2b_P} \exp\left( -\frac{|y - \mu_P|}{b_P} \right) \\
            Q(y|x) &= \frac{1}{2b_Q} \exp\left( -\frac{|y - \mu_Q|}{b_Q} \right)
        \end{align*}

        Consider the ratio $R(y) = \frac{\mathcal{P}(y|x)}{\mathcal{Q}(y|x)}$:
        \begin{align*}
            R(y) &= \frac{\frac{1}{2b_P} \exp\left( -\frac{|y - \mu_P|}{b_P} \right)}{\frac{1}{2b_Q} \exp\left( -\frac{|y - \mu_Q|}{b_Q} \right)} \\
            &= \frac{b_Q}{b_P} \exp\left( \frac{|y - \mu_Q|}{b_Q} - \frac{|y - \mu_P|}{b_P} \right)
        \end{align*}

        For sufficiently large $y$ (specifically $y > \max(\mu_P, \mu_Q)$), we have $|y - \mu| = y - \mu$. The exponent becomes:
        \begin{align*}
            E(y) &= \frac{y - \mu_Q}{b_Q} - \frac{y - \mu_P}{b_P} \\
            &= y \left( \frac{1}{b_Q} - \frac{1}{b_P} \right) - \left( \frac{\mu_Q}{b_Q} - \frac{\mu_P}{b_P} \right)
        \end{align*}
        This is a linear function of $y$ with slope $A = \frac{1}{b_Q} - \frac{1}{b_P}$.

        \rev{Due to $\circled{1}$ for some $x$ it must be} $b_P \neq b_Q$, the slope $A \neq 0$.
        \begin{itemize}
            \item If $b_P > b_Q$ (slope $A > 0$), then $\lim_{y \to \infty} R(y) = \infty$.
            \item If $b_P < b_Q$ (slope $A < 0$), then $\lim_{y \to \infty} R(y) = 0$.
        \end{itemize}
        In either case, the ratio falls out of the bounded interval $\boundrhCD$, satisfying Condition~\ref{cond2}.
    
    \item{\textbf{Pareto distribution.}} 
    Let the distributions be Pareto (Type I) with scale $y_m$ and shapes $\alpha_P, \alpha_Q$:
    \begin{align*}
        P(y|x) &= \frac{\alpha_P y_m^{\alpha_P}}{y^{\alpha_P + 1}} \quad \text{for } y \ge y_m \\
        Q(y|x) &= \frac{\alpha_Q y_m^{\alpha_Q}}{y^{\alpha_Q + 1}} \quad \text{for } y \ge y_m
    \end{align*}

    Consider the ratio $R(y)$ for $y \ge y_m$:
    \begin{align*}
        R(y) &= \frac{\alpha_P y_m^{\alpha_P} y^{-(\alpha_P + 1)}}{\alpha_Q y_m^{\alpha_Q} y^{-(\alpha_Q + 1)}} \\
        &= \left( \frac{\alpha_P}{\alpha_Q} y_m^{\alpha_P - \alpha_Q} \right) \frac{y^{\alpha_Q + 1}}{y^{\alpha_P + 1}} \\
        &= C \cdot y^{\alpha_Q - \alpha_P}
    \end{align*}
    where $C$ is a constant independent of $y$.

    The behavior depends on the difference in shape parameters:
    \begin{itemize}
        \item If $\alpha_Q > \alpha_P$: The exponent is positive, so $\lim_{y \to \infty} R(y) = \infty$.
        \item If $\alpha_Q < \alpha_P$: The exponent is negative, so $\lim_{y \to \infty} R(y) = 0$.
    \end{itemize}
    Since \rev{there exists $x$ such that} $\alpha_P \neq \alpha_Q$, \rev{due to assumed $\circled{1}$}, the density ratio takes values in $(0, \infty)$ (or effectively unbounded intervals), ensuring it falls outside any finite overlap bounds $\boundrhCD$. Thus, Condition~\ref{cond2} is satisfied.

    \item \textbf{Log-normal distribution.}
    Let $Y \sim \text{Lognormal}(\mu, \sigma^2)$. The density is:
    \[ f(y) = \frac{1}{y\sigma\sqrt{2\pi}} \exp\left( -\frac{(\ln y - \mu)^2}{2\sigma^2} \right) \]
    \rev{For a fixed $x$ we denote appropriately $\mu_P$ for $P$, and $\mu_Q$ for $Q$.}

    Consider the ratio $R(y)$ for $y > 0$:
    \begin{align*}
        R(y) &= \frac{\frac{1}{y\sigma\sqrt{2\pi}} \exp\left( -\frac{(\ln y - \mu_P)^2}{2\sigma^2} \right)}{\frac{1}{y\sigma\sqrt{2\pi}} \exp\left( -\frac{(\ln y - \mu_Q)^2}{2\sigma^2} \right)} \\
        &= \exp\left( \frac{(\ln y - \mu_Q)^2 - (\ln y - \mu_P)^2}{2\sigma^2} \right)
    \end{align*}

    Let $z = \ln y$. The numerator in the exponent is:
    \begin{align*}
        (z - \mu_Q)^2 - (z - \mu_P)^2 &= (z^2 - 2z\mu_Q + \mu_Q^2) - (z^2 - 2z\mu_P + \mu_P^2) \\
        &= 2z(\mu_P - \mu_Q) + (\mu_Q^2 - \mu_P^2)
    \end{align*}
    Substituting back $z = \ln y$:
    \[ R(y) = \exp\left( \frac{2(\mu_P - \mu_Q)\ln y + C}{2\sigma^2} \right) = \exp(k \ln y + C') = C'' y^k \]
    where $k = \frac{\mu_P - \mu_Q}{\sigma^2}$.

    \rev{Since, due to $\circled{1}$, there exists an $x$ for which} $\mu_P \neq \mu_Q$, we have $k \neq 0$.
    \begin{itemize}
        \item If $\mu_P > \mu_Q$ ($k > 0$), then $\lim_{y \to \infty} R(y) = \infty$.
        \item If $\mu_P < \mu_Q$ ($k < 0$), then $\lim_{y \to \infty} R(y) = 0$.
    \end{itemize}

    The ratio $R(y)$ approaches $0$ or $\infty$ as $y \to \infty$. Therefore, it must eventually fall outside the overlap interval $\boundrhCD$, satisfying Condition~\ref{cond2}.
    \end{enumerate}
\end{proof}

\subsection{Proof of Corollaries}\label{sec:proof_coro}
\CORBACKDOOR*
\begin{proof}[Proof of Corollary~\ref{coro:backdoor}]
    (a) Prove that $X$ satisfies the \emph{selection-backdoor} criterion.
     If $Y \indep S \mid T \text{ in } \cG_{\overline{T}}$, then the following conditions hold:
     ($1$) $X$ is not a descendant of $T$.  ($2$) All non-causal paths between $T$ and $Y$ in $\cG$ are blocked by $X$ and $S$. ($3$) $Y \indep S \mid T \text{ in } \cG_{\overline{T}}$ .  ($4$) $T\indep Y$ in $\cG_{\underline{T}}$ if $T\in Anc(S)$.
     (b) Prove that the distribution classes $(\bbPtx,\bbPxy)$ that are entailed by $\cG$ satisfy Condition~\ref{cond2}.

     Take the countrapositive of Condition~\ref{cond2}, i.e., assume that for two distributions $\cP$ and $\cQ$, if $\forall (x,y,t)\in \bbR^d \times \bbR \times \{0,1\}$, $P(s\mid x, y, t)\cdot P(t\mid x)\cdot P(x,y(t))= Q(s\mid x, y, t)\cdot Q(t\mid x)\cdot Q(x,y(t))$, then because $Y \indep S \mid T$, so $P(y | t, s) = P(y|t)$.  Hence, $P(y|t) = Q(y|t)$ and thus $\bbE_{(x,y) \sim P_{X,Y}}[y] = \bbE_{(x,y) \sim Q_{X,Y}}[y]$. Therefore, Condition~\ref{cond2} holds.
\end{proof}

\CORBACKDOOREXT*
\begin{proof}[Proof of Corollary~\ref{coro:backdoorExt}]
    (a) Prove that $X$ satisfies the \emph{selection-backdoor-ext} criterion.
     If $Y \indep S \mid T, X \text{ in } \cG_{\overline{T}}$, then the following conditions hold:
         ($1$) $X$ is not a descendant of $T$.  ($2$) All non-causal paths between $T$ and $Y$ in $\cG$ are blocked by $X$ and $S$. ($3$) $Y \indep S \mid \{T, X\} \text{ in } \cG,$

     (b) Prove that the distribution classes $(\bbPtx,\bbPxy,\bbS)$ that are entailed by $\cG$ satisfy Condition~\ref{cond2}.
    Take the countrapositive of Condition~\ref{cond2}, i.e., assume that for two distributions $\cP$ and $\cQ$, if $\forall (x,y,t)\in \bbR^d \times \bbR \times \{0,1\}$, $P(s\mid x, y, t)\cdot P(t\mid x)\cdot P(x,y(t))= Q(s\mid x, y, t)\cdot Q(t\mid x)\cdot Q(x,y(t))$, then because $Y \indep S \mid T, X$, so $P(y | t, x, s) = P(y|t,x)$.  Hence, $P(y|t,x) = Q(y|t,x).$ Thus, 
\begin{align}
\begin{aligned}
\bbE_P[y] & =\int y(t) P(y) d y \\
& =\int y \int P(y \mid x) P(x) dx d y \\
& =\int y \int P(y \mid x, s) P(x) d x d y \\
& =\iint y P(y \mid x, s) P(x) d x d y \\
& =\iint y Q(y \mid x, s) P(x) d x d y,
\end{aligned}\end{align}
and similarly  $\bbE_Q[y]=\iint y Q(y \mid x, s) Q(x) d x d y$. When $P_X(x) = Q_X(x)$ (i.e. we have extra access about unbiased covariate), we get $\bbE_P[y] = \bbE_Q[y]$.
Therefore, Condition~\ref{cond2} holds. 

\end{proof}

\COROUTCOMEDEP*
\begin{proof}[Proof of Corollary~\ref{coro:outcomedep}]
The proof directly follows from Proposition~\ref{prop:determin_select} and Proposition~\ref{prop:nondetermin_select}. 
\end{proof}

\CORGRAPH*
\begin{proof}[Proof of Corollary~\ref{coro:sidsub}]
    \begin{align}
\begin{aligned}
\bbE_P[y] & =\int y(t) P(y) d y \\
& =\int y \int P(y(t) \mid x) P(x(t)) dx d y \\
& =\int y \int P(y(t) \mid x, s) P(x) d x d y \\
& =\iint y P(y(t) \mid x, s) P(x) d x d y \\
& =\iint y Q(y(t) \mid x, s) P(x) d x d y,
\end{aligned}\end{align}
and similarly  $\bbE_Q[y]=\iint y Q(y \mid x, s) Q(x) d x d y$. When $T\notin Anc(S)$ and $P_X(x) = Q_X(x)$, we get the second equation $P(X(t))=P(X)$,  thus $\bbE_P[y] = \bbE_Q[y]$.
Therefore, Condition~\ref{cond1} holds. 
\end{proof}

\section{Comparison with existing graphical criteria}\label{sec:comp}
In the literature, several graphical criteria have been proposed to identify causal effects under selection bias, however they characterize the point-wise identifiability of $P(Y \mid do(T))$, where $do$ represent the do-operator~\cite{pearl2009causality}, rather than the distributional-level identifiability of ATE as defined in Def.~\ref{def:ATE_ids}.

\begin{define}[Causal effect identifiablity]\label{def:YX_id} $P(Y \mid do(T))$ is identifiable from $\cP$ if for any two distributions $\cP_1$ and $\cP_2$, $ \cP_1(Y \mid do(T)) = \cP_2(Y \mid do(T))$ whenever $\cP_1 = \cP_2$.
\end{define}

\begin{remark}[Relation between Def.~\ref{def:ATE_id} and Def.~\ref{def:YX_id}]Identifiability of causal functionals (e.g. ATE) can be weaker than full distributional identifiability; ATE could be identified even when the point-wise $\cP(Y \mid do(T))$ is not.~\citep{pearl2009causality}. One can for example use the Wald estimator (with instrumental variables)~\citep{moreira2003conditional} to identify ATE under certain conditions even when $\cP(Y \mid do(T))$ is not identifiable.
\end{remark} 

\cite{correaGeneralizedAdjustmentConfounding2018a} Extend Pearl's backdoor adjustment to handle both confounding and selection biases when some external (unbiased) data may be available. In particular, it introduces Generalized Adjustment Criterion (GACT) for causal effect identification ($P(Y\mid X)$) when all data is biased ($P(\bV \mid \bS=1)$) and when some covariates $\bZ$ are measured without selection bias $P(\bZ)$). The two criteria are summarized below:

\begin{define}[GACT1]\label{def:GACT1}
$\bZ$ satisfies the criterion w.r.t. $\bX, \bY$ in $\cG$ if:
    \begin{enumerate}
        \item[(a)] No element of $\mathbf{Z}$ is a descendant in $G_{\overline{\mathbf{X}}}$ of any $W \notin \mathbf{X}$ which lies on a proper causal path from $\mathbf{X}$ to $\mathbf{Y}$.
        \item[(b)] All non-causal paths between $\mathbf{X}$ and $\mathbf{Y}$ in $G$ are blocked by $\mathbf{Z}$ and $S$.
        \item[(c)] $\mathbf{Y}$ is d-separated from $S$ given $\mathbf{X}$ under the intervention, i.e., $(\mathbf{Y} \indep S \mid \mathbf{X})_{G_{\overline{\mathbf{X}}}}$.
        \item[(d)] Every $X \in \mathbf{X}$ is either a non-ancestor of $S$ or it is independent of $\mathbf{Y}$ in $G_{\underline{\mathbf{X}}}$, i.e., $\forall X \in \mathbf{X} \cap An(S) (X \indep \mathbf{Y})_{G_{\underline{\mathbf{X}}}}$.
    \end{enumerate}
\end{define}
\begin{define}[GACT2]\label{def:GACT2}
    $\mathbf{Z}$ satisfies the criterion w.r.t. $\mathbf{X}, \mathbf{Y}$ in $G$ if:
    \begin{enumerate}
        \item[(a)] No element of $\mathbf{Z}$ is a descendant in $G_{\overline{\mathbf{X}}}$ of any $W \notin \mathbf{X}$ which lies on a proper causal path from $\mathbf{X}$ to $\mathbf{Y}$.
        \item[(b)] All non-causal paths between $\mathbf{X}$ and $\mathbf{Y}$ in $G$ are blocked by $\mathbf{Z}$.
        \item[(c)] $\mathbf{Y}$ is d-separated from the selection mechanism $S$ given $\mathbf{Z}$ and $\mathbf{X}$, i.e., $(\mathbf{Y} \indep S \mid \mathbf{X}, \mathbf{Z})$.
    \end{enumerate}
\end{define}
Then, $\bZ$ satisfying GACT1 is neccessary and sufficient for identifying $P(Y \mid do(T))$ from biased data $P(\bV \mid \bS=1)$ and unbiased data $P(\bZ)$ as $P(\mathbf{Y} \mid d o(\mathbf{X}))=\sum_{\mathbf{Z}} P(\mathbf{Y} \mid \mathbf{X}, \mathbf{Z}, S=1) P(\mathbf{Z} \mid S=1)$. Or if external unbiased data on $\bZ$ is available ($P(\bZ)$), $\bZ$ satisfying GACT2 is neccessary and sufficient for identifying $P(Y \mid do(T))$ from biased data $P(\bV \mid \bS=1)$ as $P(\mathbf{Y} \mid d o(\mathbf{X}))=\sum_{\mathbf{Z}} P(\mathbf{Y} \mid \mathbf{X}, \mathbf{Z}, S=1) P(\mathbf{Z})$.

However, as presented in the main paper, we argue that without a distributional-level characterization, graphical criteria such as GACT1 and GACT2 may not be sufficient to fully capture the identifiability of causal effects under selection bias. We provide cases in Table.~\ref{tab:framework_comp} where GACT1 and GACT2 fail to identify ATE through backdoor adjustment, yet distributional conditions can still guarantee identifiability.



\begin{discussion}[Advantages and Disadvantages] Both frameworks have their own strengths and weaknesses. Graphical criteria (e.g., GACT1 and GACT2) are intuitive and can apply to complex causal relations. However, the disadvantages lie in that it may fail to identify causal effects in certain scenarios (e.g., output-dependent selection bias as shown in Corollary.~\ref{coro:outcomedep}). Further, they require the causal graph to be fully known, which may not be feasible in practice. In contrast, our framework relies on distributional assumptions that can be empirically tested.
\end{discussion}

\begin{table}[!t]
\centering
\begin{tabular}{C{2.5cm}C{2.5cm}C{2.5cm}}
\toprule
 & {\begin{tabular}[c]{@{}c@{}}DAG\\ Framework\\ S-id~\cite{aboueiSIDCausalEffect2024}\end{tabular}} & {Ours} \\ 
\midrule
\resizebox{2.5cm}{!}{\selY}& \myno & \myno\\ \addlinespace[0.8em]
\resizebox{2.5cm}{!}{\selT}  & \myno & \myno\\ \addlinespace[0.8em]
\resizebox{2.5cm}{!}{\selX} & \myyes& \myyes\\ \addlinespace[0.8em]
\resizebox{2.5cm}{!}{\selXY} & \myno& \myno\\ \addlinespace[0.8em]
\resizebox{2.5cm}{!}{\selTY} & \myno& \myno\\ 
\addlinespace[0.8em]
\resizebox{2.5cm}{!}{\selTX} & \myno& \myno\\ 
\bottomrule
\end{tabular}
\vspace{1em}
\caption{A comparison of S-id identifiablity~\cite{aboueiSIDCausalEffect2024} and identifiablity result under our framework. We are considering ATE identifiability in the sub-population here (s-ATE-sub, Definition~\ref{def:ATE_sids}), and the result in different frameworks matched each other in different selection bias cases.}
\label{tab:framework_comp}

\end{table}

\section{Experiment Details}
\label{sec:experiment_details}
In this section, we provide additional details about the experiments conducted in our study, including dataset descriptions, model architectures, training procedures, and evaluation metrics.
\subsection{Data Generation Process}\label{sec:datagen}
We simulate a scenario where there is no confounding in the population but the data is subject to both ($1$) deterministic (truncation) and ($2$) non-deterministic selection bias. The data generation process is defined as follows:

\paragraph{Population Data Generation:}
Covariates $X$ are drawn from a uniform distribution, and treatment assignment $T$ follows a linear propensity score model ensuring good population overlap:
\begin{align}
    X &\sim \text{Uniform}(-3, 3) \\
    e^*(X) &= 0.5 + 0.1 X \\
    T &\sim \text{Bernoulli}(e^*(X))
\end{align}

The potential outcomes $Y(0)$ and $Y(1)$ are generated via non-linear polynomial mean functions with additive noise:
\begin{align}
    Y(t) &= \mu_t(X) + \epsilon_t, \quad \text{for } t \in \{0, 1\} \\
    \mu_0(X) &= 1.0 + 0.5X - 0.2X^3 \\
    \mu_1(X) &= 3.0 - 0.5X + 0.3X^3
\end{align}
The noise terms $\epsilon_t$ are drawn from a specified distribution $\mathcal{D}_{\text{noise}}(0, \sigma)$, where we experiment with Normal, Laplace, Log-Normal, and Pareto distributions (centered to zero mean) with scale $\sigma=0.5$.

\paragraph{Selection Mechanism:}
We observe a sample $(X_i, Y_i, T_i)$ only if the selection indicator $S_i=1$. The selection mechanism consists of two stages:
1. \textbf{Deterministic Truncation:} Control units ($T=0$) are deterministically unobserved in the tails of the covariate distribution:
\begin{equation}
    S_{\text{det}} = \mathbb{I}\left( T=1 \lor |X| \leq x_{\text{thresh}} \right)
\end{equation}
where $x_{\text{thresh}}=2.0$.
2. \textbf{Non-Deterministic Selection:} A probabilistic selection mechanism based on the outcome $Y$ is applied to the remaining samples:
\begin{equation}
    P(S=1 | Y, S_{\text{det}}=1) = \sigma\left( \alpha (Y - \gamma) \right)
\end{equation}
where $\sigma(\cdot)$ is the sigmoid function, $\alpha=3.0$ controls the selection strength, and $\gamma=1.5$ is the centering parameter.

\subsection{Parameterization of Estimators}

We specify the estimators that are needed in the following process: (1) estimate propensity $\hat{e}(x)$, (2) identify overlap regions $\mathcal{S}_t$, (3) train models $\hat{\mu}_t$ only on $\mathcal{S}_t$, and (4) extrapolate to the full population $X$.

\subsubsection{Propensity Estimation and Truncation}
The propensity score $\hat{e}(x)$ is estimated using a Multi-Layer Perceptron (MLP) classifier with two hidden layers of 32 units and ReLU activation. The output is calibrated using isotonic regression via 5-fold cross-validation. The valid training sets are defined as:
\begin{equation}
    \mathcal{S}_1 = \{i : \hat{e}(X_i) \ge c\}, \quad \mathcal{S}_0 = \{i : \hat{e}(X_i) \le 1-c\}
\end{equation}
with truncation threshold $c=0.05$.

\subsubsection{Baseline Estimators}
\begin{itemize}
    \item \textbf{IPW:} Estimates the mean via inverse propensity weighting restricted to the overlap region:
    \begin{equation}
        \hat{\mu}_t^{\text{IPW}}(x) = \frac{\sum_{i \in \mathcal{S}_t} w_i Y_i}{\sum_{i \in \mathcal{S}_t} w_i}, \quad \text{where } w_i = \frac{1}{\hat{P}(T=t|X_i)}
    \end{equation}
    Note that this estimator computes a constant mean and does not extrapolate conditional expectations.
    
    \item \textbf{Polynomial Regression:} We model $\mu_t(x)$ using a linear regression on a polynomial basis expansion $\phi(x) = [1, x, x^2, x^3]^\top$:
    \begin{equation}
        \hat{\mu}_t(x) = \phi(x)^\top \hat{\beta}_t, \quad \hat{\beta}_t = \arg\min_\beta \sum_{i \in \mathcal{S}_t} (Y_i - \phi(X_i)^\top \beta)^2
    \end{equation}
\end{itemize}

\subsubsection{Gaussian Mixture Model (GMM) for MLE Estimators}
We model the joint density $P(X, Y)$ for each treatment group using a Gaussian Mixture Model with $K=5$ components:
\begin{equation}
    p_t(x, y) = \sum_{k=1}^K \pi_{k} \mathcal{N}\left( \begin{bmatrix} x \\ y \end{bmatrix} ; \boldsymbol{\mu}_k, \boldsymbol{\Sigma}_k \right)
\end{equation}
The conditional expectation $\mathbb{E}[Y|X=x]$ is derived analytically from the joint Gaussian parameters.
\begin{itemize}
    \item \textbf{Naive MLE:} Trained via Expectation-Maximization (EM) on the observed data in $\mathcal{S}_t$.
    \item \textbf{MLE + $\beta$} Trained via weighted EM, where sample weights are given by $w_i = 1/\hat{\beta}(X_i, Y_i, T_i)$ (see Selection Model below).
\end{itemize}

\subsubsection{Score Matching Estimators (Deep Energy-Based Models)}
We estimate the score function $s_\theta(x, y) \approx \nabla_y \log p(y|x)$ using a neural network $f_\theta(x, y)$.
\begin{itemize}
    \item \textbf{Score Network Architecture:} An MLP with input dimension 2 ($x, y$), two hidden layers of 64 units, and Softplus activations (to ensure differentiability).
    
    \item \textbf{Selection Network (for Correction):} To correct for selection bias, we simultaneously learn a selection weight function $\beta_\phi(x, y, t)$ modeled as:
    \begin{equation}
        \beta_\phi(x, y, t) = \exp\left( \text{MLP}_\phi([x, y, t]) \right)
    \end{equation}
    where $\text{MLP}_\phi$ has one hidden layer of 10 units and Tanh activation.
    
    \item \textbf{Corrected Objective:} We minimize the Hyvärinen score matching loss on the \textit{composite} score of the observed distribution, $s_{\text{obs}} = s_\theta(x,y) + \nabla_y \log \beta_\phi(x,y,t)$. The loss function is:
    \begin{equation}
        \mathcal{L}(\theta, \phi) = \mathbb{E}_{\text{obs}} \left[ \frac{1}{2} \| s_{\text{obs}} \|^2 + \text{tr}(\nabla_y s_{\text{obs}}) - \lambda_1 \log \beta_\phi + \lambda_2 \|\beta_\phi\|_2^2 \right]
    \end{equation}
    where the term $-\log \beta_\phi$ maximizes the likelihood of selection for observed samples, and $\lambda_1, \lambda_2$ are regularization hyperparameters.
    
    \item \textbf{Inference:} The conditional expectation is approximated via numerical integration over a grid of $Y$, using the unnormalized density $\hat{p}(y|x) \propto \exp(\int s_\theta(x, y) dy)$.
\end{itemize}

\subsection{More Details on Training and Evaluation}
Once the models are trained on the valid overlap regions, we estimate the Conditional Average Treatment Effect (CATE) for a target unit $x$ by evaluating the difference in conditional expectations: $\hat{\tau}(x) = \hat{\mathbb{E}}[Y(1)|x] - \hat{\mathbb{E}}[Y(0)|x]$. The specific method for computing $\hat{\mathbb{E}}[Y|x]$ depends on the estimator used. We set the learning rate to $0.01$ for all training process and the regularizer $\lambda = 0.05$ for selction corrections. We set the learning rate to $0.01$ for all training process and the regularizer $\lambda = 0.05$ for selction corrections.

\subsubsection{Score Matching Estimator (Numerical Integration)}
The Score Network $s_\theta(x, y)$ learns the gradient of the log-density, $s_\theta(x, y) \approx \nabla_y \log p(y|x)$. To recover the conditional expectation, we perform numerical integration over a discretized grid of outcome values $\mathcal{Y} = \{y_1, \dots, y_M\}$ with step size $\Delta y$.

First, we reconstruct the unnormalized log-density $\log \tilde{p}(y_j|x)$ via Riemann summation of the score:
\begin{equation}
    \log \tilde{p}(y_j|x) \approx \sum_{k=1}^j s_\theta(x, y_k) \Delta y
\end{equation}
We then recover the normalized probability mass function $\hat{p}(y_j|x)$ using the softmax operation over the grid to approximate the partition function $Z(x)$:
\begin{equation}
    \hat{p}(y_j|x) = \frac{\exp\left(\log \tilde{p}(y_j|x)\right)}{\sum_{k=1}^M \exp\left(\log \tilde{p}(y_k|x)\right)}
\end{equation}
Finally, the conditional mean is computed as the expected value under this discrete distribution:
\begin{equation}
    \hat{\mu}(x) = \sum_{j=1}^M y_j \cdot \hat{p}(y_j|x)
\end{equation}

\subsubsection{Gaussian Mixture Model (Analytical Solution)}
The GMM estimates the joint density $p(x, y)$ as a mixture of $K$ Gaussian components with parameters $\pi_k, \boldsymbol{\mu}_k, \boldsymbol{\Sigma}_k$. We compute the conditional expectation analytically.

Let the parameters for the $k$-th component be decomposed as $\boldsymbol{\mu}_k = (\mu_{x,k}, \mu_{y,k})$ and variance terms $\sigma^2_{xx,k}, \sigma_{xy,k}$. The conditional expectation of $Y$ given $X=x$ for component $k$ is linear:
\begin{equation}
    m_k(x) = \mu_{y,k} + \frac{\sigma_{xy,k}}{\sigma^2_{xx,k}}(x - \mu_{x,k})
\end{equation}
The marginal likelihood of $x$ under component $k$ is given by the univariate Gaussian:
\begin{equation}
    p(x|k) = \mathcal{N}(x; \mu_{x,k}, \sigma^2_{xx,k})
\end{equation}
The final conditional expectation is the average of the component-wise conditional means, weighted by the posterior probability (responsibility) of each component given $x$:
\begin{equation}
    \hat{\mu}(x) = \frac{\sum_{k=1}^K \pi_k p(x|k) m_k(x)}{\sum_{k=1}^K \pi_k p(x|k)}
\end{equation}

\subsection{Supplementary results}
In this section, we provide additional experimental results to complement the main findings presented in the paper. We include a visualization of the estimated data distributions, as well as extended tables of results for various noise distributions  scenarios.
\begin{figure}
    \centering
    \includegraphics[width=0.8\textwidth]{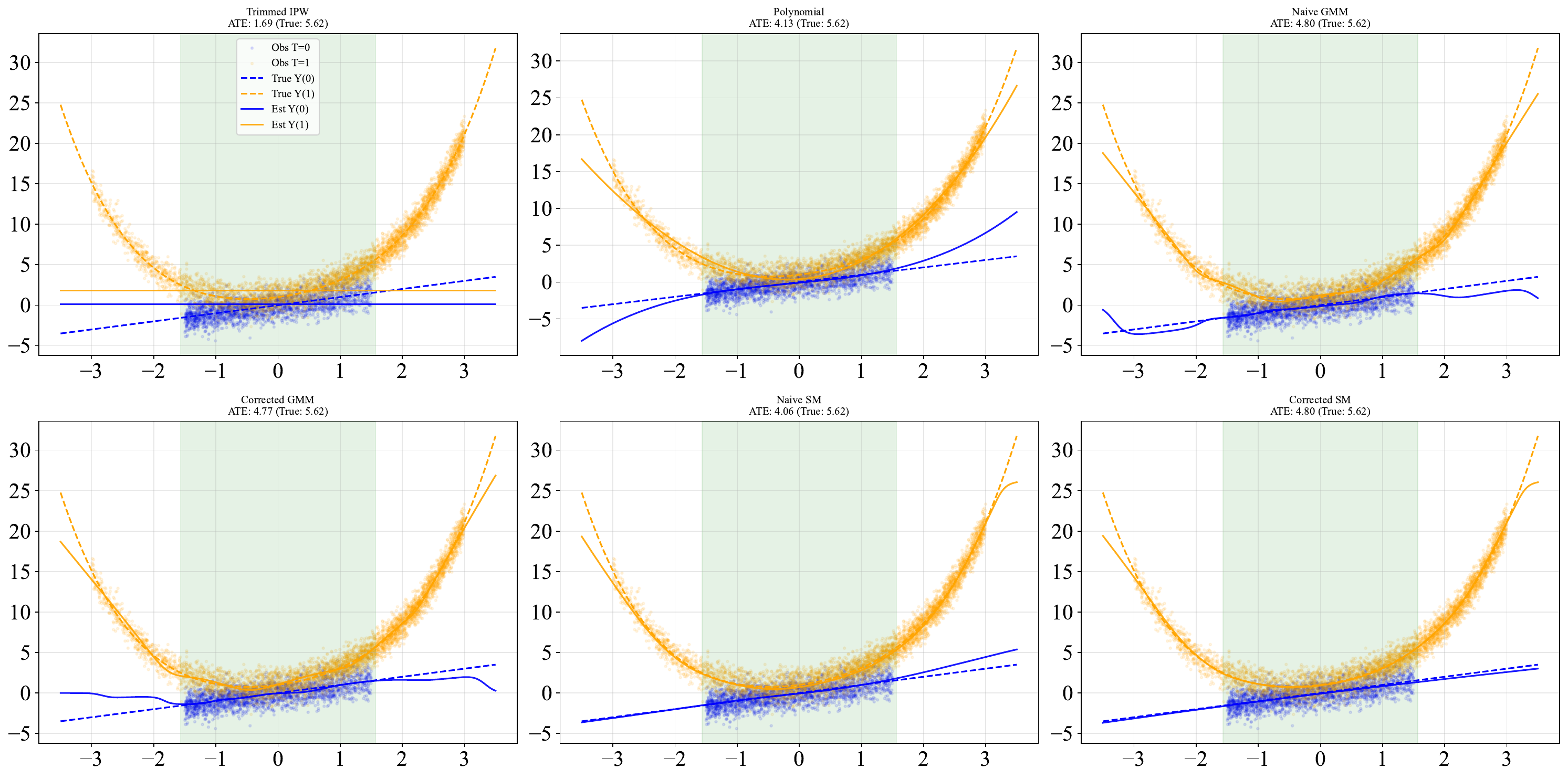}
    \caption{Visualization of estimated data distributions.}
    \label{fig:vis}
\end{figure}

As we can see in Figure~\ref{fig:vis}, the corrected estimators (both MLE and Score Matching) are able to recover the true underlying data distribution more accurately (the estimated curves are closer to the true mean) compared to their naive counterparts, which suffer from bias due to selection effects.

\begin{figure}[htbp]
    \centering
    \begin{subfigure}{0.45\textwidth}
        \centering
        \includegraphics[width=\textwidth]{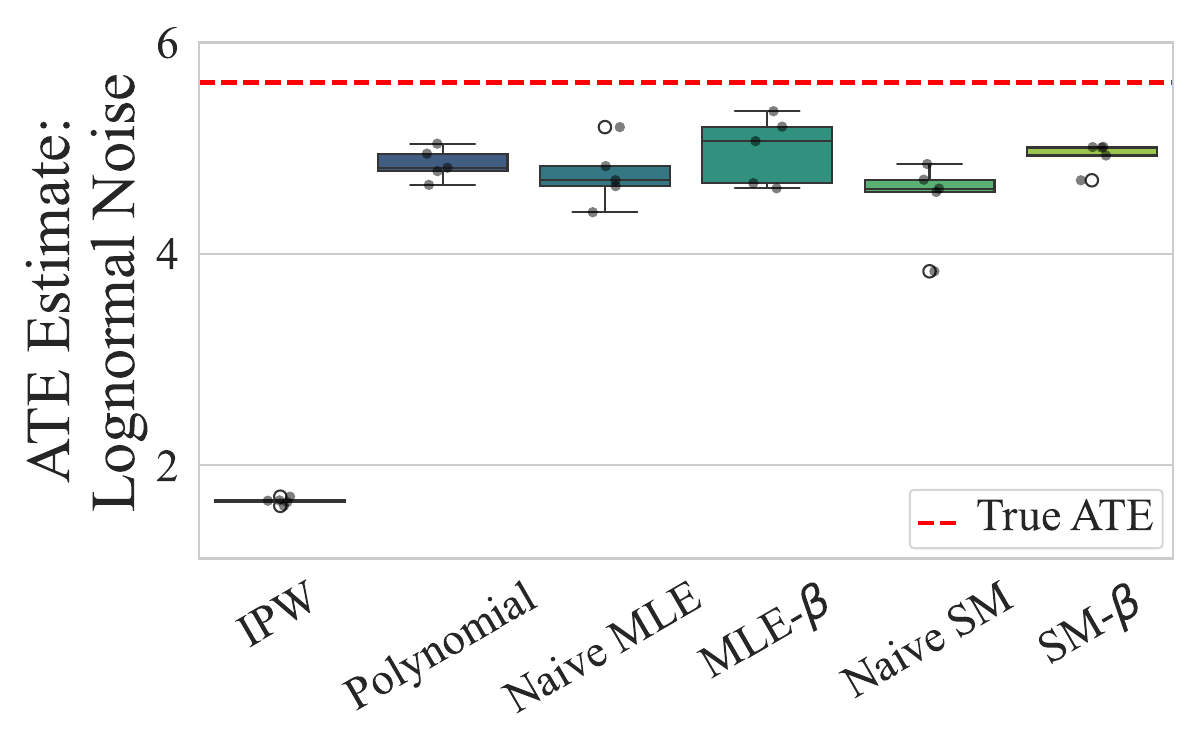}
        \caption{ATE Estimates with Log-Normal Noise (Additive)}
        \label{fig:cate-estimates}
    \end{subfigure}
    \hfill
    \begin{subfigure}{0.45\textwidth}
        \centering
        \includegraphics[width=\textwidth]{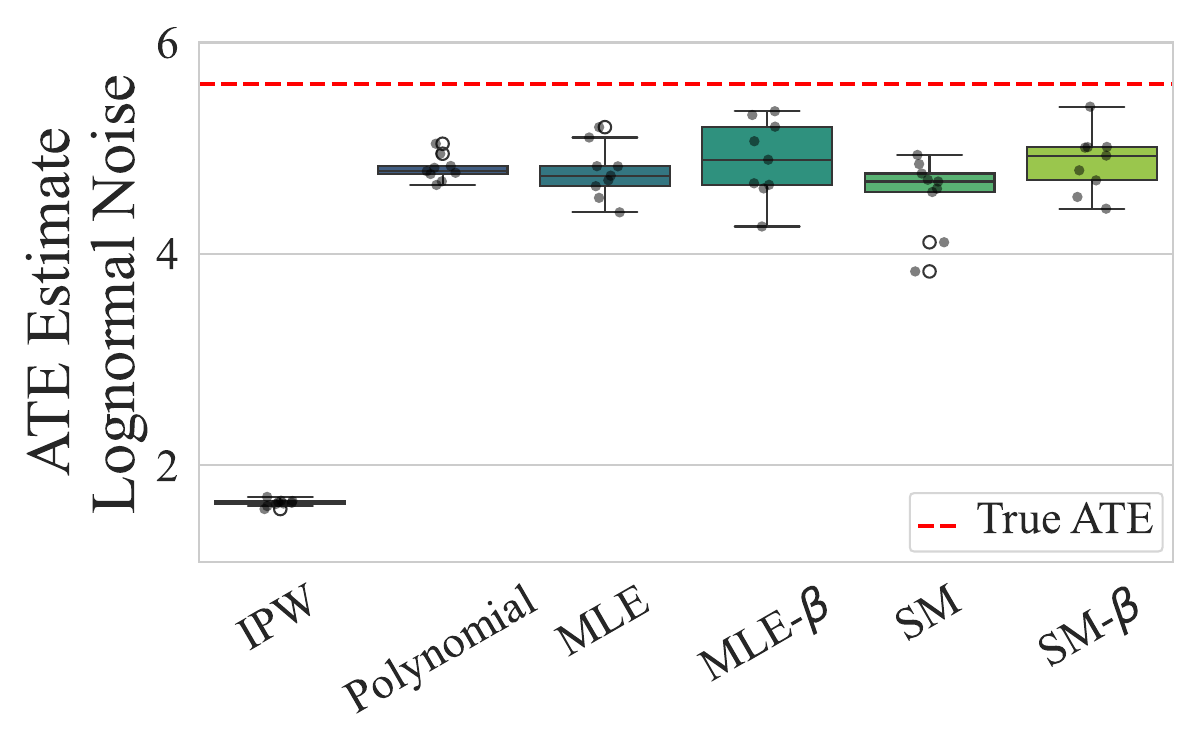}
        \caption{ATE Estimates with Log-Normal Noise (Multiplicative)}
        \label{fig:error-boxplots}
    \end{subfigure}
    \caption{Additional results for Log-Log-Normal noise distribution.}
        \label{fig:enter-label}
\end{figure}

\subsection{More Details on the AoS Dataset}
We generate $Y$ according to the following process:
$Y = 0.1X + T\cdot X + \epsilon$. 
The noise term $\epsilon$ is drawn from a Normal distribution $\mathcal{N}(0, 1)$. The selection mechanism is applied in the same manner as described in Section~\ref{sec:datagen}.
\camerarev{\section{More Experimental Results}\label{app:ablation}}
We include more results regarding more baselines, functional and parameter scope.

\paragraph{Added baselines.} AIPW~\citep{bang2005doubly} and TMLE~\citep{van2011targeted} (naive sub-population ATE, no selection correction), classical Heckman~\cite{heckman1977sample} correction (jointly Gaussian errors, probit selection model), and AIPW(oracle) on the full unselected dataset as an upper bound.

\paragraph{Varying selection strength.} We sweep the selection parameters (the sigmoid centering $\beta_C\in \{1.0, 3.0, 5.0\}$ and scaling $\beta_S\in \{0.1, 0.5, 1.0\}$, where $P(S | X, Y, T) = 1 / (1 + e^{-logits})$, $logits = (Y + 0.1X - \beta_C) * \beta_S)$ to evaluate the performance of the proposed method. By default, we conduct the experiment with additive Gaussian noise.

\paragraph{Varying functional form.} We include two additional experiments with non-polynomial, non-monotone functional form \emph{Sin}: $Y = 2X\sin(2X)+ T(X^2+0.1X^4)+\epsilon$ and \emph{Log}: $Y=X\log (X+4)+ T(X^2 \log (2X)+0.1X^4)+\epsilon$. We apply the same deterministic and nondeterministic selection mechanisms as in the main experiments. This tests whether our estimators MLE+$\beta$ and SM+$\beta$ can handle more challenging functional forms that cannot be modeled well with polynomial estimators.

\paragraph{Selection depending on X, Y, or both.} Note that we already consider the most challenging case where selection depends on all of T, X, Y simultaneously.

\subsection{Varying Functional Forms}
With the Log and Sin functions, all the other methods suffer from selection bias. In particular, while in the main experiments~(Figure~\ref{fig:ate_addi}) where we use polynomial functions, the Polynomial estimator performs relatively well, when we use the Log or Sin function, the Polynomial estimator behaves poorly, while our methods remain robust.
\begin{table}[ht]
\centering
\caption{Results with different functional forms  (Mean error, std)}
\begin{tabular}{lcc}
\toprule
Method & Sin & Log \\
\midrule
AIPW Oracle & 0.10 (0.06) & 0.04 (0.05) \\
AIPW & 3.61 (0.08) & 3.85 (0.11) \\
TMLE & 3.58 (0.07) & 3.70 (0.12) \\
Heckman & 4.25 (0.09) & 4.26 (0.08) \\
IPW & 3.62 (0.08) & 4.34 (0.07) \\
Polynomial & 3.66 (0.15) & -2.58 (0.39) \\
MLE & 2.04 (0.11) & -3.71 (0.32) \\
MLE+$\beta$ & 3.63 (1.76) & 1.66 (3.24) \\
SM & 2.87 (1.11) & -1.65 (0.84) \\
SM+$\beta$ & \textbf{1.28} (1.49) & \textbf{0.31} (3.47) \\
\bottomrule
\end{tabular}
\end{table}

\subsection{Varying Selection Strength}
Under varying selection strength, our methods which take selection bias into account remain robust.
\begin{table}[ht]
\centering
\caption{Results with varying selection strength (Mean Error, std)}
\begin{tabular}{lcccccc}
\toprule
Method 
& C1.0/S0.1 
& C1.0/S0.5 
& C1.0/S1.0 
& C3.0/S0.1 
& C3.0/S0.5 
& C3.0/S1.0 \\
\midrule

AIPW Oracle 
& $0.05$ $(0.02)$ 
& $0.05$ $(0.02)$ 
& $0.05$ $(0.02)$ 
& $0.05$ $(0.02)$ 
& $0.05$ $(0.02)$ 
& $0.05$ $(0.02)$ \\

AIPW 
& $3.76$ $(0.03)$ 
& $3.84$ $(0.03)$ 
& $4.03$ $(0.03)$ 
& $3.72$ $(0.03)$ 
& $3.81$ $(0.05)$ 
& $3.99$ $(0.07)$ \\

TMLE 
& $3.76$ $(0.03)$ 
& $3.81$ $(0.03)$ 
& $3.99$ $(0.03)$ 
& $3.72$ $(0.03)$ 
& $3.75$ $(0.05)$ 
& $3.94$ $(0.08)$ \\

Heckman 
& $3.79$ $(0.03)$ 
& $4.02$ $(0.04)$ 
& $4.34$ $(0.04)$ 
& $3.75$ $(0.04)$ 
& $3.90$ $(0.04)$ 
& $4.26$ $(0.07)$ \\

IPW 
& $4.10$ $(0.04)$ 
& $4.16$ $(0.03)$ 
& $4.17$ $(0.03)$ 
& $4.08$ $(0.03)$ 
& $4.13$ $(0.03)$ 
& $4.21$ $(0.07)$ \\

Polynomial 
& $0.45$ $(0.14)$ 
& ${0.13}$ $(0.13)$ 
& $\mathbf{0.17}$ $(0.16)$ 
& $0.34$ $(0.03)$ 
& $-0.86$ $(0.08)$ 
& $-1.35$ $(0.18)$ \\

MLE 
& $\mathbf{0.08}$ $(0.14)$ 
& $-0.18$ $(0.14)$ 
& $0.30$ $(0.21)$ 
& $\mathbf{0.15}$ $(0.14)$ 
& $-0.81$ $(0.20)$ 
& $-1.98$ $(0.42)$ \\

SM 
& $0.60$ $(0.18)$ 
& $0.29$ $(0.19)$ 
& $0.20$ $(0.21)$ 
& $0.33$ $(0.30)$ 
& $-0.76$ $(0.25)$ 
& $-2.90$ $(3.45)$ \\

MLE+ 
& $-0.10$ $(0.28)$ 
& $\mathbf{-0.03}$ $(0.55)$ 
& $0.83$ $(0.83)$ 
& $0.14$ $(0.13)$ 
& $-0.60$ $(0.82)$ 
& $-1.76$ $(0.76)$ \\

SM+ 
& $0.40$ $(0.21)$ 
& $0.20$ $(0.13)$ 
& $0.20$ $(0.17)$ 
& $0.24$ $(0.69)$ 
& $\mathbf{-0.23}$ $(0.65)$ 
& $\mathbf{-1.10}$ $(0.33)$ \\
\bottomrule
\end{tabular}
\end{table}

\end{document}